\documentclass[11pt]{article}

\usepackage[margin=1in]{geometry}
\usepackage{amsmath,amsthm,amssymb}
\usepackage{tikz}
\usepackage{microtype}
\usepackage[hidelinks]{hyperref}
\providecommand{\Description}[1]{}

\title{Auction Design with ROI-Constrained Bidders:\\
Truthfulness and Revenue Maximization}
\author{%
  \normalsize Zhiqiang Zhuang\textsuperscript{1}, Quan Yu\textsuperscript{1},
  Yisong Wang\textsuperscript{2}, Kewen Wang\textsuperscript{3}, Zhe Wang\textsuperscript{3}\\[0.6em]
  \small \textsuperscript{1}Qiannan Normal University for Nationalities\\
  \small \textsuperscript{2}Guizhou University\\
  \small \textsuperscript{3}Griffith University\\[0.4em]
  \small Zhiqiang Zhuang: \href{mailto:zoldlady@gmail.com}{\texttt{zoldlady@gmail.com}}%
}
\date{}

\DeclareMathOperator*{\argmax}{argmax}
\numberwithin{equation}{section}
\newtheorem{definition}{Definition}
\newtheorem{theorem}{Theorem}
\newtheorem{lemma}{Lemma}

\begin{document}
\raggedbottom
\maketitle

\begin{abstract}
The return-on-investment (ROI) constraint is central to many auctions,
particularly in online advertising, where a bidder is unwilling to pay
more than a fixed fraction of the value obtained. We study truthful and
revenue-maximizing auctions for ROI-constrained bidders. We first
characterize truthful auctions when both valuations and ROI constraints
are private, showing that the allocation rule uniquely determines the
payment rule. Building on this characterization, for multiple bidders we
introduce $\sigma$-increment mechanisms that resemble Myerson's optimal
mechanism~\cite{journals/mor/Myerson81}; as $\sigma$ vanishes, these mechanisms become asymptotically
optimal among deterministic truthful mechanisms, and their revenue
approaches at least a $1/\bar r$ fraction of the optimal expected revenue
over all truthful mechanisms, where $\bar r$ is the largest possible ROI
constraint. In the single-bidder setting, we prove that every truthful
auction can be replaced by a convex pricing function with weakly higher
payments for every type, and we derive the optimal pricing functions
when either the valuation or the ROI constraint is public.
\end{abstract}

\clearpage
\tableofcontents
\clearpage

\section{Introduction}

Online advertising auctions are a major revenue source for platforms such as Google, where advertisers compete repeatedly for limited display opportunities. Unlike bidders in classical auction environments, bidders in these markets are typically subject to \emph{return-on-investment} (ROI) constraints: their payments cannot exceed a fixed fraction of the value they obtain from the auction \cite{conf/Auerbach2008,conf/www/WilkensCN17,conf/www/GolrezaeiLL21}. Such constraints change bidding behavior and substantially complicate auction design.

Efficiency, revenue maximization, and incentive compatibility are well understood in classical quasilinear settings, but ROI constraints make each of these objectives considerably harder to achieve. Since a bidder is now characterized by both a valuation and an ROI constraint, the resulting type space is multidimensional, and this multidimensionality undermines many of the standard tools of mechanism design: monotonicity properties are no longer immediate, incentive constraints become more intricate, and the familiar correspondence between allocation and payment rules no longer extends directly.

We begin by characterizing truthful mechanisms when both the valuation and the ROI constraint are private. The analysis rests on two concepts: the \emph{unit payment}, i.e., the payment per unit of allocation, and the \emph{unit-payment cap} \(c=v/r\), the highest unit payment that a bidder with valuation \(v\) and ROI constraint \(r\) can afford. Working in the transformed type space \((v,c)\) --- replacing the ROI constraint with the unit-payment cap --- lets us separate affordability from utility comparisons, since a report is affordable exactly when its unit payment does not exceed the bidder's cap. This separation is what lets us derive the allocation properties that truthfulness requires. Examining one-dimensional slices of the transformed type space then yields a closed-form payment identity, showing that, despite the two-dimensional private type, the allocation rule alone determines the payment rule.

Building on this characterization, we turn to deterministic truthful mechanisms, whose allocations and payments turn out to depend essentially only on the unit-payment cap. Under the standard regularity assumption on the cap distributions, Myerson's optimal single-item auction --- applied to unit-payment caps rather than valuations --- gives a tight upper bound on their expected revenue. We introduce $\sigma$-increment mechanisms, which perturb the Myerson winning thresholds and payments by a small increment. These mechanisms are truthful and approach the revenue bound as \(\sigma\downarrow0\); moreover, their revenue approaches at least a \(1/\bar r\) fraction of the optimal expected revenue among \emph{all} truthful mechanisms, including randomized ones, where \(\bar r\) denotes the largest possible ROI constraint.

Finally, we consider the single-bidder setting and show that every truthful mechanism can be replaced by a continuous, convex, and nondecreasing pricing function over allocation probabilities without lowering any type's payment. Using this, we derive the optimal pricing function when either the valuation or the ROI constraint is public. When the valuation is public, allocation probabilities up to a cutoff come for free, beyond which the price rises linearly with slope equal to that valuation. When the ROI constraint is public, the optimal pricing function follows a power law under the standard decreasing-marginal-revenue (DMR) condition on the valuation distribution; for ROI constraints greater than one, lower-valuation types are allocated the item with smaller probability, while types above a cutoff receive it with certainty.


\section{Auction Model}
\label{sec:auction-model}

A seller auctions a single item to $n$ \emph{risk-neutral},
\emph{utility-maximizing} bidders. Bidder $i$ has a valuation
$v_i \in [0,\bar v]$ and an ROI constraint $r_i \in [1,\bar r]$, where
$\bar r>1$. The pair $(v_i,r_i)$ defines bidder $i$'s \emph{type} $t_i$; we
write $T$ for the set of all types.

An \emph{auction mechanism} $(x,p)$ consists of an \emph{allocation
function} $x:T^n \to [0,1]^n$ and a \emph{payment function}
$p:T^n \to \mathbb{R}^n$. Given reported types
$\mathbf{t}=(t_1,\ldots,t_n)$, $x_i(\mathbf{t})$ is the probability that
bidder $i$ receives the item, with $\sum_{i=1}^n x_i(\mathbf{t}) \leq 1$,
and $p_i(\mathbf{t})$ is bidder $i$'s payment.

For a bidder $i$ with type $t_i=(v_i,r_i)$, her ROI under mechanism
$(x,p)$ is $v_i x_i(\mathbf t)/p_i(\mathbf t)$ when $p_i(\mathbf t)>0$, and
$\infty$ otherwise; the ROI constraint is violated whenever this falls
below $r_i$. Bidder $i$'s utility is accordingly defined as
\[
u(\mathbf{t};t_i) =
\begin{cases}
v_i x_i(\mathbf{t}) - p_i(\mathbf{t}) & \text{if } p_i(\mathbf{t}) \leq (v_i / r_i)\, x_i(\mathbf{t}), \\
-\infty & \text{otherwise}.
\end{cases}
\]
That is, $u(\mathbf{t};t_i)$ takes the usual quasilinear form whenever the
ROI constraint is satisfied, and drops to negative infinity whenever it
is violated.

For $x_i(\mathbf{t})>0$, the ROI constraint can equivalently be written as
$p_i(\mathbf{t})/x_i(\mathbf{t})\leq v_i/r_i$, so $v_i/r_i$ is the largest
payment per unit of allocation the bidder can afford. This suggests an
equivalent representation of her type: define bidder $i$'s
\emph{unit-payment cap} as
\[
c_i = v_i / r_i,
\]
so that the ROI constraint can be rewritten as the
\emph{unit-payment-cap constraint}
\[
p_i(\mathbf{t}) \leq c_i x_i(\mathbf{t}).
\]
Since $r_i \in [1,\bar r]$, the unit-payment cap satisfies
$v_i/\bar r\leq c_i\leq v_i$.

The valuation--ROI representation $(v,r)$ and the valuation--cap representation
$(v,c)$ thus encode the same economic restriction, related by $c=v/r$.
Henceforth \(t=(v,c)\) is the default representation of a bidder's type,
with type space
\[
T=\left\{(v,c):0\leq v\leq\bar v,\ \frac{v}{\bar r}\leq c\leq v\right\};
\]
we use \((v,r)\) only when reasoning directly about the ROI constraint.
Figure~\ref{fig:type_space_transformation} illustrates the two
representations. Across the full type space, the unit-payment cap ranges
over $[0,\bar v]$; conditional on a fixed valuation $v$, its feasible
range is $[v/\bar r,v]$. Utility is unchanged under the valuation--cap
representation, where affordability is simply
$p_i(\mathbf{t})\leq c_i x_i(\mathbf{t})$; accordingly, we use the terms
``ROI constraint'' and ``unit-payment-cap constraint'' interchangeably.

\begin{figure}[t]
\centering
\begin{tikzpicture}[scale=1.3, font=\small]
    \begin{scope}[shift={(0,0)}]
        \fill[gray!30] (0, 0) rectangle (3, 3);

        \draw (0,0) rectangle (3,3);

        \draw[->] (0,0) -- (3.4,0) node[right] {value $v$};
        \draw[->] (0,0) -- (0,3.5) node[above] {ROI $r$};

        \node[below=2pt] at (0,0) {$0$};
        \node[left=2pt] at (0,0) {$1$};

        \draw[dashed] (1.5, 0) -- (1.5, 3);

        \fill (3, 0) circle (1.5pt);   
        \fill (1.5, 3) circle (1.5pt); 
        \fill (1.5, 0) circle (1.5pt); 
        \fill (3, 3) circle (1.5pt); 

        \draw (1.5, 0.1) -- (1.5, 0) node[below=3.5pt] {$v$};
        \draw (3, 0.1) -- (3, 0) node[below=3pt] {$\bar{v}$};

        \draw (0, 3) -- (0, 3) node[left=2pt] {$\bar r$};
    \end{scope}

    \begin{scope}[shift={(5.7,0)}]
        \fill[gray!30] (0,0) -- (3,3) -- (3,1) -- cycle;

        \draw (0,0) -- (3,3);
        \draw (0,0) -- (3,1);

        \draw (3,1) -- (3,3);

        \draw[->] (0,0) -- (3.4,0) node[right] {value $v$};
        \draw[->] (0,0) -- (0,3.5) node[above] {cap $c$};


        \node[below=2pt] at (0,0) {$0$};
        \node[left=2pt] at (0,0) {$0$};

        \draw[dashed] (1.5, 0) -- (1.5, 1.5);
        \draw[dashed] (3, 0) -- (3, 1);

        \draw[dashed] (1.5, 0.5) -- (0, 0.5);
        \draw[dashed] (3, 1) -- (0, 1);
        \draw[dashed] (1.5, 1.5) -- (0, 1.5);
        \draw[dashed] (3, 3) -- (0, 3);

        \fill (3, 1) circle (1.5pt);     
        \fill (1.5, 1.5) circle (1.5pt); 
        \fill (1.5, 0.5) circle (1.5pt); 
        \fill (3, 3) circle (1.5pt); 

        \draw (1.5, 0.1) -- (1.5, 0) node[below=3.5pt] {$v$};
        \draw (3, 0.1) -- (3, 0) node[below=3pt] {$\bar{v}$};

        \draw (0, 0.5) -- (0, 0.5) node[left=2pt] {$v/\bar{r}$};
        \draw (0, 1) -- (0, 1) node[left=2pt] {$\bar{v}/\bar{r}$};
        \draw (0, 1.5) -- (0, 1.5) node[left=2pt] {$v$};
        \draw (0, 3) -- (0, 3) node[left=2pt] {$\bar{v}$};
    \end{scope}
\end{tikzpicture}
\caption{The grey area on the left illustrates the original type space, while that on the right illustrates the equivalent valuation--cap representation of type space. The points $(\bar{v}, \bar{r})$, $(v, 1)$, $(v, \bar{r})$, and $(\bar v, 1)$ in the left correspond to the points $(\bar{v}, \bar{v}/\bar{r})$, $(v, v)$,
$(v, v/\bar{r})$ and $(\bar v, \bar v)$ in the right, respectively.}
\Description{Two diagrams compare the valuation--ROI representation of the type space with the equivalent triangular valuation--cap representation.}
\label{fig:type_space_transformation}
\end{figure}

We study mechanisms that are \emph{truthful}, meaning both
dominant-strategy incentive compatible (DSIC) and individually rational
(IR). A mechanism is DSIC if truthful reporting is a dominant strategy
for every bidder:
\[
u(t_i,\mathbf{t}_{-i};t_i) \geq u(t',\mathbf{t}_{-i};t_i)
\quad \text{for all } t_i, t',\mathbf{t}_{-i},
\]
where $\mathbf{t}_{-i}$ denotes the reports of all bidders except $i$, and
$(t, \mathbf{t}_{-i})$ denotes the profile in which bidder $i$ reports type $t$.
A mechanism is IR if every bidder obtains non-negative utility under truthful reporting:
\[
u(t_i,\mathbf{t}_{-i};t_i) \geq 0 \quad \text{for all } t_i,\mathbf{t}_{-i}.
\]

For every bidder $i$ and report profile $\mathbf{t}$, we impose no positive
transfers, $p_i(\mathbf{t})\geq0$, and adopt the conventions
$x_i((0,0),\mathbf{t}_{-i})=\allowbreak p_i((0,0),\mathbf{t}_{-i})=0$
and $x_i(\mathbf{t})=0\Rightarrow p_i(\mathbf{t})=0$.

Truthfulness characterizations require no prior distribution.
For the revenue analysis, however, we make the following distributional
assumptions, with the public-information special cases given in
Section~\ref{sec:pricing-functions}. Bidder \(i\)'s valuation is drawn
from a publicly known distribution \(F_i\) with support \([0,\bar v]\)
and a density \(f_i\) that is positive on \((0,\bar v)\); her ROI
constraint is drawn independently from a publicly known distribution
\(G_i\) with support \([1,\bar r]\) and a density \(g_i\) that is positive
on \((1,\bar r)\). The pairs \((v_i,r_i)\) are independent across bidders.
The unit-payment cap \(c_i=v_i/r_i\) is not a separate primitive: its
distribution \(H_i\) is induced by \(F_i\) and \(G_i\), and the
transformed type distribution is the corresponding distribution of
\((v_i,c_i)\). Explicitly,
\[
H_i(c)
=
\Pr\!\left[\frac{v_i}{r_i}\le c\right]
=
\int_{1}^{\bar r}F_i(cr)g_i(r)\,dr,
\qquad c\in[0,\bar v].
\]
The density \(h_i\) of \(H_i\) is positive on \((0,\bar v)\), and the
induced caps are independent across bidders.
In single-bidder settings, we suppress the bidder index and write
\(F,G,H\) and \(f,g,h\).

For a mechanism $M$, we write $\operatorname{Rev}(M)$ for its expected
revenue when bidders' valuations and ROI constraints are drawn according
to the distributions specified above. We write $\mathrm{OPT}$ for the
supremum of expected revenue over all truthful mechanisms, and
$\mathrm{OPT}_{\mathrm{det}}$ for the supremum over deterministic truthful
mechanisms.

\section{Truthful Mechanisms}\label{sec:truthful-mechanisms}

We characterize truthfulness from the perspective of a single bidder,
studying the restrictions it imposes on her allocation and payment rules
as functions of her own report.
Unless otherwise stated, we hold the other bidders' reports fixed and
suppress the bidder index $i$ and $\mathbf{t}_{-i}$, writing allocations
and payments as $x(t)$ and $p(t)$, or equivalently $x(v,c)$ and $p(v,c)$.

We derive an allocation monotonicity condition \textbf{(A1)} and a
payment identity \textbf{(P)}, and then show that these, together with
the individual rationality condition \textbf{(IR)}, characterize
truthfulness. Along the way we derive further allocation properties that
give additional insight into the structure of truthful mechanisms.

\subsection{Allocation}
\label{sec:allocation}

Reporting truthfully to an IR mechanism is always safe, in that it never
violates the ROI constraint. Misreporting a lower cap is also always
safe: IR gives $p(v,c)\leq cx(v,c)$, so a bidder with cap $c'\geq c$
cannot violate her ROI constraint by reporting $(v,c)$. Misreporting a
higher cap may also be safe, but whether it is depends on the payment
assigned to that report. To make this dependence explicit, we introduce
the \emph{unit payment}: the payment per unit of allocation.

\begin{definition}[Unit payment]
Given a mechanism $M=(x,p)$, the \emph{unit payment} for a type
$t=(v,c)$ is
\[
q_M(t)=q_M(v,c):=
\begin{cases}
p(t)/x(t), & \text{if }x(t)>0,\\
0, & \text{if }x(t)=0.
\end{cases}
\]
\end{definition}

\noindent
For simplicity, we drop the subscript $M$ and write $q(t)$ or $q(v,c)$.
Under our standing zero-allocation convention, $x(t)=0$ implies $p(t)=0$,
so $p(t)=q(t)x(t)$ for all $t\in T$; a mechanism can therefore be
represented equivalently by its allocation rule $x$ and unit-payment rule
$q$.

The distinction between a bidder's cap and the mechanism's unit payment
is central: a bidder with cap $c$ can safely report $t$ exactly when
$q(t)\le c$, and the report violates her ROI constraint otherwise. In
particular, IR implies $q(v,c)\le c$ for every type $(v,c)$. Conditional
on affordability, the utility of a bidder with valuation $v$ from
reporting $t$ is
\[
vx(t)-p(t)=\bigl(v-q(t)\bigr)x(t).
\]
Thus the unit payment is simply another representation of payment,
while comparison with the bidder's cap determines whether a report is
affordable.

The next lemma establishes a monotonicity relation between unit payments
and allocations in truthful mechanisms: a strict ordering of unit
payments forces the same strict ordering of allocations, and a weak
ordering of allocations forces the same weak ordering of unit payments.

\begin{lemma}\label{lem:unit-payment-monotonicity}
If a mechanism $(x,p)$ is truthful, then for any types $t$ and $t'$:
\begin{enumerate}
    \item if $q(t)<q(t')$, then $x(t)<x(t')$; and
    \item if $x(t)\leq x(t')$, then $q(t)\leq q(t')$.
\end{enumerate}
In particular, $x(t)=x(t')$ implies $q(t)=q(t')$.
\end{lemma}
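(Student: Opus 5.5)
The plan is to prove part~1 directly from the DSIC inequality applied to the bidder whose type is the one with the \emph{larger} unit payment, and then to get part~2 and the ``in particular'' clause as purely logical consequences of part~1.

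For part~1, fix types $t=(v,c)$ and $t'=(v',c')$ with $q(t)<q(t')$, and suppose for contradiction that $x(t)\ge x(t')$.

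First, I rule out $x(t')=0$. If $x(t')=0$, then by the definition of unit payment $q(t')=0$. On the other hand, no positive transfers gives $p\ge 0$, so $q\ge 0$ everywhere. Then $q(t)<q(t')=0$ is impossible. Hence $x(t')>0$, and so $x(t)\ge x(t')>0$.

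Next, I consider the bidder whose true type is $t'$. By IR, $q(t')\le c'$, and since caps satisfy $c'\le v'$, we get $q(t')\le v'$. Because $q(t)<q(t')\le c'$, the report $t$ is affordable for her. Her utility from reporting $t$ is therefore the quasilinear quantity $(v'-q(t))x(t)$, while truthful reporting gives $(v'-q(t'))x(t')$. Now $v'-q(t)>v'-q(t')\ge 0$, and $x(t)\ge x(t')>0$. Comparing the two products,
\[
(v'-q(t))\,x(t)\;\ge\;(v'-q(t))\,x(t')\;>\;(v'-q(t'))\,x(t'),
\]
where the strict step uses $x(t')>0$. So misreporting $t$ is strictly profitable, contradicting DSIC. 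Hence $x(t)<x(t')$.

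For part~2, I take the contrapositive route. Suppose $x(t)\le x(t')$ but $q(t)>q(t')$. Applying part~1 with the roles of $t$ and $t'$ exchanged gives $x(t')<x(t)$, which contradicts $x(t)\le x(t')$. So $q(t)\le q(t')$. The ``in particular'' clause follows by applying part~2 in both directions when $x(t)=x(t')$.

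The only delicate point is the strict inequality in part~1. It needs $x(t')>0$ and the nonnegativity of $v'-q(t')$, and both come from the conventions (zero allocation implies zero payment, and no positive transfers) together with IR and $c'\le v'$. I do not expect any further obstacle.
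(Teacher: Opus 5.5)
Your proof is correct, and part 1 is the paper's argument (type $t'$ safely mimics $t$ and strictly gains); you also make explicit two points the paper leaves implicit: ruling out $x(t')=0$ via no positive transfers, and $v'-q(t')\ge 0$. For part 2 you observe that it is exactly the contrapositive of part 1 with $t$ and $t'$ exchanged, which is tidier than the paper's route: the paper gives a separate DSIC argument for the strict case $x(t)<x(t')$ and invokes part 1 only for the equality case.
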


Although unit-payment monotonicity is a useful structural property, it
does not by itself pin down allocation ordering in terms of the bidder's
valuation and unit-payment cap. We therefore combine comparisons of unit
payments with direct comparisons of valuations and caps, an approach that
reveals substantially more structure than either comparison alone.

\begin{theorem}\label{thm:Allocation}
If a mechanism $(x,p)$ is truthful,
then for any types $t=(v,c)$ and $t'=(v',c')$
\begin{description}
    \item[(A1)] if $v<v'$ and $q(t)\leq c'$, then $x(t)\leq x(t')$.
    \item[(A2)] if $c'<q(t)$, then $x(t')<x(t)$.
    \item[(A3)] if $v < v'$ and $c \leq c'$, then $x(v,c) \leq x(v',c')$.
\end{description}
\end{theorem}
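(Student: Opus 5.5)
The plan is to prove the three parts in the order (A2), (A1), (A3), since (A2) and (A3) are short consequences of individual rationality combined with Lemma~\ref{lem:unit-payment-monotonicity} and with (A1), respectively. The main tool throughout is the observation from the preceding discussion: IR gives $q(s)\le c_s$ for every type $s$, and a bidder with cap $c$ can safely report any $s$ with $q(s)\le c$.

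For \textbf{(A2)}, I would use IR for type $t'$ to get $q(t')\le c'$. The hypothesis then gives $q(t')\le c'<q(t)$. Part~1 of Lemma~\ref{lem:unit-payment-monotonicity} turns this strict ordering of unit payments into $x(t')<x(t)$.

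For \textbf{(A1)}, I would argue by contradiction, and I expect this to be the only step with real content. Suppose $v<v'$, $q(t)\le c'$, but $x(t)>x(t')$.
\begin{itemize}
\item Since $q(t)\le c'$, type $t'$ can afford report $t$, so DSIC for $t'$ gives $(v'-q(t'))x(t')\ge (v'-q(t))x(t)$.
\item The difficulty is that $t$ need not a priori be able to afford $t'$. Here Lemma~\ref{lem:unit-payment-monotonicity} does the work. From $x(t')\le x(t)$, part~2 yields $q(t')\le q(t)$, and IR gives $q(t)\le c$. Hence $q(t')\le c$, so $t$ can afford $t'$, and DSIC for $t$ gives $(v-q(t))x(t)\ge (v-q(t'))x(t')$.
\item Adding the two inequalities, the $q$-terms cancel, since each side contains $-q(t)x(t)-q(t')x(t')$. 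This leaves $(v'-v)\bigl(x(t')-x(t)\bigr)\ge 0$.
\end{itemize}
Since $v'>v$, this forces $x(t')\ge x(t)$, contradicting the assumption.

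For \textbf{(A3)}, IR gives $q(t)\le c\le c'$, so the hypothesis of (A1) holds and (A1) yields $x(v,c)\le x(v',c')$.
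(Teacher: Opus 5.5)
Your proposal is correct and follows the paper's proof. For \textbf{(A1)} you add the two DSIC inequalities after using Lemma~\ref{lem:unit-payment-monotonicity} to show the reverse deviation is affordable; you get this from part~2 under the contradiction hypothesis $x(t)>x(t')$, while the paper uses a case split with part~1, which is the same fact. \textbf{(A2)} and \textbf{(A3)} then follow from IR together with Lemma~\ref{lem:unit-payment-monotonicity} and \textbf{(A1)}, exactly as in the paper.
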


\textbf{(A1)} and \textbf{(A2)} describe how affordability and valuation
comparisons constrain allocation ordering. The decisive comparison is
between one type's cap and the unit payment assigned to the other type.
If \(c' < q(t)\), type $t'$ cannot afford the outcome assigned to $t$,
so \(x(t') < x(t)\) regardless of valuation, as \textbf{(A2)} states.
When \(q(t) \leq c'\), that outcome \emph{is} affordable, and the
ordering follows standard valuation-based monotonicity: \(v < v'\)
implies \(x(t) \leq x(t')\), as captured by \textbf{(A1)}. \textbf{(A3)}
follows directly from \textbf{(A1)} and IR, since $q(t)\leq c$ and hence
$c\leq c'$ already guarantees affordability. It expresses allocation
monotonicity purely in terms of the types, without reference to unit
payments: a higher valuation together with a weakly higher cap implies a
weakly higher allocation.

It follows from \textbf{(A3)} that, holding the cap fixed ($c=c'$), a
bidder receives a weakly higher allocation when reporting a higher
valuation. This raises the natural question of how the allocation
behaves when the cap varies while the valuation stays fixed. Although
continuity of the allocation rule is not assumed, \textbf{(A3)} ensures
that $x(v,c)$ is nondecreasing in $v$ for every fixed $c$, and is
therefore continuous for almost every $v$. This almost-everywhere
continuity lets us show that, for any fixed pair of caps and almost
every valuation, the higher cap induces a weakly higher allocation, with
equality whenever the unit payment at the higher cap does not exceed the
lower cap. These are \textbf{(A4)} and \textbf{(A5)} below.

\begin{lemma}\label{lem:allocation_A_E}
If an IR mechanism $(x,p)$ satisfies \emph{\textbf{(A1)}}, then, for any
fixed caps $c$ and $c'$ and almost every $v$ such that $(v,c),(v,c')\in T$:
\begin{description}
    \item[(A4)] If $c<c'$, then $x(v,c)\leq x(v,c')$.
    \item[(A5)] If $q(v,c')\leq c<c'$, then $x(v,c)=x(v,c')$.
\end{description}
\end{lemma}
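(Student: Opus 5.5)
Both parts will come from squeezing $x(v,c)$ and $x(v,c')$ between allocations at nearby valuations, using \textbf{(A1)} and IR, and then letting the nearby valuations converge to $v$. For fixed $c<c'$, set $\phi(w)=x(w,c)$ and $\psi(w)=x(w,c')$ on the valuations $w$ with $(w,c),(w,c')\in T$. This admissible set is the interval $w\in[c',\bar r c]$, which is nonempty only when $c'\le \bar r c$; otherwise there is nothing to prove.

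The first step is to show that $\phi$ and $\psi$ are nondecreasing. For $w<w'$ we have $q(w,c)\le c$ by IR, so the report $(w,c)$ is affordable for type $(w',c)$. \textbf{(A1)} then gives $\phi(w)\le\phi(w')$, and the same argument applies to $\psi$. A monotone function on an interval has at most countably many discontinuities. Hence the set $D$ of valuations at which $\phi$ or $\psi$ is discontinuous is countable, and therefore null. I will prove \textbf{(A4)} and \textbf{(A5)} for every interior $v\notin D$.

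\textbf{(A4).} Take $w<v$ in the admissible interval. Since $q(w,c)\le c<c'$ by IR, the report $(w,c)$ is affordable for type $(v,c')$. Applying \textbf{(A1)} with $t=(w,c)$ and $t'=(v,c')$ gives $\phi(w)\le\psi(v)$. Letting $w\uparrow v$ and using continuity of $\phi$ at $v$ yields $\phi(v)\le\psi(v)$.

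\textbf{(A5).} Suppose in addition that $q(v,c')\le c$. For $w$ slightly below $v$, I need $q(w,c')\le c$ in order to apply \textbf{(A1)} with $t=(w,c')$ and $t'=(v,c)$ and obtain $\psi(w)\le\phi(v)$. I will obtain this from Lemma~\ref{lem:unit-payment-monotonicity}(2). Since $\psi(w)\le\psi(v)$, that lemma gives $q(w,c')\le q(v,c')\le c$. Then \textbf{(A1)} yields $\psi(w)\le\phi(v)$, and letting $w\uparrow v$ with $\psi$ continuous at $v$ gives $\psi(v)\le\phi(v)$. Combined with \textbf{(A4)}, this gives equality.

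One caveat is that Lemma~\ref{lem:unit-payment-monotonicity} was stated for truthful mechanisms, whereas here only IR and \textbf{(A1)} are assumed. I would therefore replace that step with a direct argument. Choose $w<v<w''$ near $v$. Applying \textbf{(A1)} with $t=(v,c')$ and $t'=(w'',c)$ is legitimate because $q(v,c')\le c$, and gives $\psi(v)\le\phi(w'')$. Letting $w''\downarrow v$ and using continuity of $\phi$ at $v$ gives $\psi(v)\le\phi(v)$. This route avoids unit payments at other points entirely and needs only the right-continuity of $\phi$ at $v$. I expect this to be the main subtlety: choosing the perturbation direction so that the hypothesis $q(v,c')\le c$ is applied exactly at $v$.

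The remaining care concerns boundary valuations, where $w<v$ or $w''>v$ might leave the admissible interval. These are the endpoints $c'$ and $\bar r c$, a null set that can be added to $D$. The "almost every $v$" qualifier thus absorbs the countably many discontinuities together with the two endpoints.
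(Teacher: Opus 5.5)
Your proof is correct and is essentially the paper's. In both, \textbf{(A1)} with IR makes $x(\cdot,c)$ and $x(\cdot,c')$ nondecreasing, hence continuous almost everywhere. Your final (A5) argument, applying \textbf{(A1)} to $(v,c')$ and $(w'',c)$ and letting $w''\downarrow v$, is exactly the paper's, and you were right to discard the detour through Lemma~\ref{lem:unit-payment-monotonicity}, which assumes truthfulness. The only difference is in (A4): you approach $v$ from below along cap $c$, whereas the paper approaches from above along cap $c'$. This is immaterial, and it means you only need continuity of $x(\cdot,c)$ at $v$ for both parts.
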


While \textbf{(A1)}--\textbf{(A3)} hold pointwise, \textbf{(A4)} and
\textbf{(A5)} hold only almost everywhere. Their measure-zero exceptions
are immaterial to our characterization of truthful mechanisms and to the
expected-revenue analysis, since they do not affect the relevant
integrals.

We close this section by emphasizing that the allocation properties
\textbf{(A1)}--\textbf{(A5)} are stated in terms of valuations and
unit-payment caps. Expressing types this way reveals monotonicity
properties of the allocation rule that are not apparent in the original
valuation--ROI representation, and this perspective is also essential for
the payment characterization that follows.

\subsection{Payment}\label{sec:payment}

We now show that the allocation rule uniquely determines the payment
rule in any truthful mechanism. Because allocation and payment depend on
both the valuation and the unit-payment cap, the standard one-dimensional
envelope argument does not apply directly, so we instead restrict the
mechanism to carefully chosen one-dimensional subsets of the type space.

For any $T'\subset T$, the \emph{restriction} of $(x,p)$ to $T'$ is the
mechanism $(x',p')$ with domain $T'$ such that $x'(t)=x(t)$ and
$p'(t)=p(t)$ for every $t\in T'$. Removing types and deviations preserves
truthfulness, since all incentive and participation inequalities on $T'$
are already required on $T$; hence every condition on allocation or
payment implied by truthfulness on $T'$ is necessary for a truthful
mechanism on the full type space.

Two subsets of the type space are of particular interest:
\[
T^= \;=\; \{(v,v)\in T : 0 \leq v \leq \bar v\},
\]
the set of \emph{diagonal types}, for which the valuation and unit-payment cap coincide
(equivalently, $r=1$). Types in $T\setminus T^=$ are \emph{off-diagonal}.
For each fixed cap $c$, we also consider the subset
\[
T^c \;=\; \left\{(v,c)\in T : c \leq v
\leq \min\{\bar r c,\bar v\}\right\},
\]
the set of types sharing that cap. Both subsets are one-dimensional: $T^=$
is identified with $[0,\bar v]$, and $T^c$ with
$[c,\min\{\bar r c,\bar v\}]$.

On both subsets, the restricted mechanism admits the usual quasilinear
incentive analysis, but for different reasons. On $T^c$, all types share
the same cap $c$, and IR gives $p(v',c)\leq c x(v',c)$, so every deviation within this subset is
affordable. On $T^=$, the cap equals the valuation, so the ROI
constraint coincides with the standard IR requirement that payment not
exceed the value obtained. Standard quasilinear DSIC therefore applies:
affordable deviations are ruled out by DSIC, and unaffordable deviations
yield negative quasilinear utility and so cannot improve on truthful IR
utility. Applying the usual single-dimensional envelope theorem, we
first deduce $p(c,c)$ along $T^=$ and then deduce $p(v,c)$ relative to
$p(c,c)$ along $T^c$, which yields the following payment identity.

\begin{theorem}\label{thm:Payment}
If a mechanism $(x,p)$ is truthful, then
\begin{equation*}
    \emph{\textbf{(P)}} \;\; p(v, c) = v x(v,c) - \int_{0}^{c} x(z,z)\, dz - \int_{c}^{v} x(z,c)\, dz.
\end{equation*}
\end{theorem}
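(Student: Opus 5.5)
We follow the two-stage route described just before the statement: first pin down payments on the diagonal $T^=$, then propagate along each horizontal slice $T^c$. Throughout we hold $\mathbf t_{-i}$ fixed and use that restrictions of truthful mechanisms remain truthful.

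\emph{Step 1 (diagonal).} Restrict $(x,p)$ to $T^=$, identified with $[0,\bar v]$ via $z\mapsto(z,z)$. For two diagonal types $(z,z)$ and $(z',z')$, a deviation from $z$ to $z'$ is either affordable, in which case DSIC gives $zx(z,z)-p(z,z)\ge zx(z',z')-p(z',z')$ directly, or unaffordable. In the unaffordable case $p(z',z')>z\,x(z',z')$, so the quasilinear payoff $zx(z',z')-p(z',z')$ is negative and hence at most the truthful payoff, which is $\ge 0$ by IR. Either way the standard quasilinear IC inequalities hold on $[0,\bar v]$. Myerson's lemma then gives that $z\mapsto x(z,z)$ is nondecreasing and
\[
p(c,c)=c\,x(c,c)-\int_0^c x(z,z)\,dz+\bigl(p(0,0)-0\cdot x(0,0)\bigr).
\]
The bracketed term vanishes by the convention $x(0,0)=p(0,0)=0$. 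For completeness we would rederive the envelope formula: the IC inequalities sandwich $p(z',z')-p(z,z)$ between $z\,(x(z',z')-x(z,z))$ and $z'\,(x(z',z')-x(z,z))$. Summing these over a partition and passing to the limit (Riemann–Stieltjes, valid because $x$ is monotone) yields the formula.

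\emph{Step 2 (slice $T^c$).} Fix $c$ and restrict to $v\in[c,\min\{\bar r c,\bar v\}]$. Every misreport $(v',c)$ within the slice is affordable, since IR gives $q(v',c)\le c$. Hence the ordinary quasilinear DSIC inequalities hold in $v$, with cap $c$ common to all types. The same envelope argument gives monotonicity of $x(\cdot,c)$ (consistent with (A3)) and
\[
p(v,c)=v\,x(v,c)-\int_c^v x(z,c)\,dz-\bigl(c\,x(c,c)-p(c,c)\bigr).
\]
Substituting Step 1, the bracket equals $\int_0^c x(z,z)\,dz$, and this is exactly \textbf{(P)}. The slice's left endpoint $(c,c)$ lies in $T^=$, so the two slices connect. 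The case $c=0$ is trivial because $T^0=\{(0,0)\}$.

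\emph{Main obstacle.} The delicate point is Step 1's handling of unaffordable deviations together with boundary conventions. Note that $u=-\infty$ for unaffordable reports, so DSIC is vacuous there, and the quasilinear inequality for them must instead come from IR of the true type plus the unaffordability inequality. This argument works only because on the diagonal the cap equals the valuation. A secondary point is that the envelope derivation must not assume continuity of $x$. I would do it via the sandwich inequality and monotone (hence Riemann-integrable) allocation functions, noting that $x(\cdot,c)$ and $x(z,z)$ are bounded monotone functions, so the integrals exist.
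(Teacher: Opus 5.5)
Your proposal is correct and follows the paper's proof essentially step for step. You restrict to $T^=$, using IR together with cap $=$ valuation to recover the quasilinear inequality for unaffordable diagonal deviations, apply the Myerson envelope argument with $p(0,0)=0$ to get $p(c,c)$, then restrict to $T^c$ (where IR makes every deviation affordable) and chain the two identities; your explicit sandwich/Riemann--Stieltjes derivation and the remark on the trivial slice $T^0=\{(0,0)\}$ simply fill in details that the paper leaves to ``the usual proof steps.''
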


Condition \textbf{(P)} is a closed-form generalization of Myerson's
payment identity to the two-dimensional type space: it integrates the
allocation first along the diagonal from $(0,0)$ to $(c,c)$ and then
along the fixed-cap slice from $(c,c)$ to $(v,c)$, so fixing the
allocation uniquely fixes every payment. When $v=c$, the second integral
vanishes and \textbf{(P)} reduces to the classical Myerson identity.

\subsection{Characterization}\label{sec:characterisation}

The preceding arguments establish necessary conditions for truthfulness.
For a sufficient test, we must also require the payments induced by
\textbf{(P)} to respect every type's ROI constraint. In the unit-payment
representation, this requirement takes the simple form
\[
    \textbf{(IR)} \quad q(v,c) \leq c \quad \text{for all } (v,c) \in T.
\]
Since $p(t)=q(t)x(t)$, the ROI constraint $p(t)\leq cx(t)$ holds if
and only if $q(t)\leq c$, so \textbf{(IR)} is both necessary and
sufficient for ROI compliance. It follows that \textbf{(A1)},
\textbf{(P)}, and \textbf{(IR)} are together necessary and sufficient
for a mechanism to be truthful.

\begin{theorem}\label{thm:characterisation}
A mechanism $(x,p)$ is truthful if and only if it satisfies
\emph{\textbf{(A1)}}, \emph{\textbf{(P)}} and \emph{\textbf{(IR)}}.
\end{theorem}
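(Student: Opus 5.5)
Necessity is already in hand: Theorem~\ref{thm:Allocation} gives \textbf{(A1)}, Theorem~\ref{thm:Payment} gives \textbf{(P)}, and IR gives $p(v,c)\le cx(v,c)$, hence $q(v,c)\le c$, which is \textbf{(IR)}. The work is sufficiency. Assume \textbf{(A1)}, \textbf{(P)}, \textbf{(IR)}. IR of the mechanism follows from \textbf{(IR)}: truthful reporting is affordable, and the resulting utility $(v-q(v,c))x(v,c)\ge (v-c)x(v,c)\ge0$ because $c\le v$. For DSIC, fix a true type $t=(v,c)$ and a report $t'=(v',c')$. If $q(t')>c$, the deviation yields $-\infty$ and there is nothing to show. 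So the task reduces to the affordable case $q(t')\le c$, where we must prove $vx(t)-p(t)\ge vx(t')-p(t')$.

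The plan is to write the utility difference using \textbf{(P)}. Truthful utility is $U(v,c)=\int_0^c x(z,z)\,dz+\int_c^v x(z,c)\,dz$. The deviation utility is $U(v',c')+(v-v')x(t')$. We therefore need
\[
\int_0^c x(z,z)\,dz+\int_c^v x(z,c)\,dz \;\ge\; \int_0^{c'} x(z,z)\,dz+\int_{c'}^{v'} x(z,c')\,dz+(v-v')x(v',c').
\]
I would first derive from \textbf{(A1)} and \textbf{(IR)} the consequences (A3)--(A5) (Lemma~\ref{lem:allocation_A_E} only needs IR and \textbf{(A1)}, and (A3) follows from \textbf{(A1)} plus \textbf{(IR)}). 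I would also show that \textbf{(P)}+\textbf{(IR)} make each slice $T^c$ and the diagonal $T^=$ a standard Myerson mechanism, so deviations along a slice are fine by the usual monotone-allocation argument.

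Next I would split into cases on $c'$ versus $c$. If $c'\le c$, I would route the comparison through the intermediate report $(v',c)$ or through the path diagonal-then-slice. The key inequality $U(v,c)\ge U(v',c)+(v-v')x(v',c)$ is the one-dimensional Myerson inequality on $T^c$, valid since $x(\cdot,c)$ is nondecreasing by (A3). It remains to compare $U(v',c)$ with $U(v',c')$ and $x(v',c)$ with $x(v',c')$ using (A4). This is easiest if $v'$ lies in both slices; otherwise I would move along the diagonal. If $c'>c$ but $q(t')\le c$, then (A5) gives $x(v',c'')=x(v',c')$ for $c''\in[c,c']$ (a.e.\ $v'$). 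Combined with \textbf{(P)} this should show $p(v',c')=p(v',c)$-type equalities along the way, so the deviation to $t'$ is payoff-equivalent to a deviation with cap $\le c$, reducing to the previous case.

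The main obstacle is this second case together with the almost-everywhere nature of (A4)/(A5) and the domain constraints $v/\bar r\le c\le v$: the reports $(v',c)$ may lie outside $T$. I would first try to handle this by rewriting $U(v',c')-U(v',c'')$ as an integral over the region between two diagonal-then-slice paths and using (A4) pointwise a.e.\ to sign it. For boundary/measure-zero $v'$, I would argue via limits using monotonicity of $x$ in $v$ and the fact that the payment formula \textbf{(P)} is an integral, so it is insensitive to null sets. Alternatively, I would derive the needed inequality directly from \textbf{(A1)} applied to pairs such as $(v',c')$ versus $(z,c)$ with $q\le c$, avoiding (A5).
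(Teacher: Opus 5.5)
Your reduction to affordable deviations ($q(t')\le c$) and your plan to expand $\Delta=u(t;t)-u(t';t)$ through \textbf{(P)} agree with the paper. The gap is that the two reductions you actually propose do not work as stated.

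\textbf{The route through $(v',c)$ for $c'\le c$.} This splits $\Delta$ into the Myerson inequality on $T^c$ plus
\[
\bigl[U(v',c)-U(v',c')\bigr]+(v-v')\bigl[x(v',c)-x(v',c')\bigr],
\]
and you intend to sign each part separately. That fails for two reasons.
\begin{itemize}
\item The second term is a \emph{pointwise} comparison at the report $v'$, but \textbf{(A4)} holds only almost everywhere. Pointwise it genuinely fails in truthful mechanisms. Take a menu offering half a unit free and a full unit at price $B$. Types $(2B,c')$ and $(2B,c)$ with $B\le c'<c$ are indifferent between the two items and may choose differently, giving $x(v',c')>x(v',c)$.
\item When $c'\le c\le v<v'$, the factor $v-v'$ is negative. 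So even where \textbf{(A4)} holds, this term is $\le0$, and it can outweigh $U(v',c)-U(v',c')$.
\end{itemize}
Only the sum is nonnegative. The slack in the slice step absorbs the pointwise term, leaving (besides integrals signed a.e.) $\int_{v'}^{v}\bigl(x(z,c)-x(t')\bigr)\,dz$ when $v'<v$, and $\int_{v}^{v'}\bigl(x(t')-x(z,c')\bigr)\,dz$ when $v<v'$. These are signed by \textbf{(A1)} and \textbf{(A3)} respectively, precisely because $z\neq v'$.

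\textbf{The payoff-equivalence step for $c'>c$.} This has the same flaw. Once the integrals cancel, the deviations to $(v',c')$ and to $(v',c)$ differ by $(v-v')\bigl(x(v',c)-x(v',c')\bigr)$, which need not vanish at the specific $v'$. In addition, $(v',c)$ need not lie in $T$ at all: for $c\le v\le c'\le v'$ one can have $v'>\bar r c$, so there may be nothing to reduce to.

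\textbf{Two missing ingredients.}
\begin{itemize}
\item To apply \textbf{(A5)} across the whole integration range, you need $q(z,c')\le c$ for all $z\in[c',v']$. To compare against the diagonal, you need $q(z,z)\le c$ for $z\in[c,c']$. Lemma~\ref{lem:unit-payment-monotonicity} cannot supply these, since it presupposes truthfulness. The paper proves from \textbf{(A1)}, \textbf{(P)}, \textbf{(IR)} alone that $q$ is nondecreasing along each fixed-cap slice and along the diagonal (Lemma~\ref{lem:unit-payment-monotone}). This gives $q(z,c')\le q(t')\le c$ and $q(z,z)\le q(c',c')\le q(t')\le c$.
\item You also need diagonal versions of \textbf{(A4)}/\textbf{(A5)}: $x(z,c)\le x(z,z)$ a.e., with equality when $q(z,z)\le c$ (Lemma~\ref{lem:diagonal-allocation}). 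This is because \textbf{(A4)}/\textbf{(A5)} compare only two fixed caps, whereas the diagonal has varying cap.
\end{itemize}

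\textbf{How the paper avoids the problem.} With these ingredients, the paper never passes through intermediate reports. For each of the six orderings of $c,v,c',v'$ it writes $\Delta$ once as integrals over consecutive intervals. The only pointwise quantity is then $x(t')$, always compared with $x(z,\cdot)$ at $z\neq v'$ via \textbf{(A1)}/\textbf{(A3)}. Every slice-to-slice or slice-to-diagonal difference sits under an integral, where the a.e.\ statements suffice. Your closing fallback points in this direction, but as written none of the cases is actually carried out.
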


\noindent
While Theorem~\ref{thm:Allocation} and Lemma~\ref{lem:allocation_A_E}
record \textbf{(A1)}--\textbf{(A5)} as useful necessary properties,
Theorem~\ref{thm:characterisation} gives the exact characterization of
truthful mechanisms. In particular, \textbf{(A1)} and \textbf{(IR)}
directly imply \textbf{(A3)}, and, as Lemma~\ref{lem:allocation_A_E}
shows, imply \textbf{(A4)} and \textbf{(A5)} almost everywhere.
Condition \textbf{(A2)}, by contrast, does not follow from \textbf{(A1)}
and \textbf{(IR)} alone. It follows from the full characterization,
including the payment identity \textbf{(P)}, and therefore need not be
imposed separately.

Theorem~\ref{thm:characterisation} gives an exact test for truthfulness
but does not by itself provide a way to construct a truthful mechanism.
Even when an allocation has the monotonicity structure described by
\textbf{(A1)}--\textbf{(A5)}, the payment rule determined by \textbf{(P)}
may induce a unit payment exceeding the bidder's cap and hence violate
\textbf{(IR)}. Unlike in the standard quasilinear setting, the payment
identity does not make individual rationality automatic here; the
allocation and its induced payments must instead be designed jointly.
This difficulty motivates our study of two restricted but practically
important settings: deterministic mechanisms and single-bidder pricing
functions.

\subsection{Diagonal Determination of Interior Allocations}
\label{sec:diagonal-determination}

Under continuity of the allocation on the diagonal type space $T^{=}$,
the allocation rule admits a sharper structural description. The next
lemma complements the truthful-mechanism characterization in Theorem~\ref{thm:characterisation} with an operational characterization: the one-dimensional diagonal
allocation determines every
interior off-diagonal allocation by selecting the largest diagonal
allocation whose unit payment is affordable. As throughout this section,
we fix a bidder and the reports of all other bidders and suppress them from
the notation.

\begin{lemma}\label{lem:diagonal-determination}
Let $(x,p)$ be a DSIC and IR mechanism, and suppose that the restriction of
the allocation rule to $T^{=}$ is continuous; that is, $x(z,z)$ is
continuous in $z$ on $[0,\bar v]$.
Then, for every type $t=(v,c)\in T\setminus T^{=}$ with $v<\bar v$,
\begin{equation}\label{eq:diagonal-determination}
x(v,c)
=
\sup\bigl\{x(z,z):z\in[c,v]\text{ and }q(z,z)\le c\bigr\}.
\end{equation}
In particular, every such type receives the same allocation as some
diagonal type.
\end{lemma}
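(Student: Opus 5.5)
The plan is to prove the two inequalities between $x(v,c)$ and the supremum $\sigma:=\sup S$ separately, where $S=\{x(z,z): z\in[c,v],\ q(z,z)\le c\}$. Continuity of the diagonal allocation enters twice: once through an intermediate value argument and once through a limiting argument at the endpoint $z=v$. First note that $S\neq\emptyset$. Indeed, \textbf{(IR)} gives $q(c,c)\le c$, so $x(c,c)\in S$.

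\emph{Step 1: $\sigma\le x(v,c)$.} Take $z\in[c,v)$ with $q(z,z)\le c$. We apply \textbf{(A1)} of Theorem~\ref{thm:Allocation} with $t=(z,z)$ and $t'=(v,c)$. Since $z<v$ and $q(t)\le c$, this gives $x(z,z)\le x(v,c)$.

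The endpoint $z=v$ needs a separate argument, namely when $q(v,v)\le c$. In that case the type $(v,c)$ can afford the report $(v,v)$, so DSIC applies to this deviation. We write both utilities using \textbf{(P)} of Theorem~\ref{thm:Payment}. Truthful reporting gives $\int_0^c x(z,z)\,dz+\int_c^v x(z,c)\,dz$, and the deviation gives $(v-q(v,v))x(v,v)=\int_0^v x(z,z)\,dz$. DSIC therefore yields
\[
\int_c^v x(z,z)\,dz\le\int_c^v x(z,c)\,dz .
\]
On the other hand, \textbf{(A4)} of Lemma~\ref{lem:allocation_A_E} gives $x(z,c)\le x(z,z)$ for almost every $z\in(c,v)$. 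Hence $x(z,c)=x(z,z)$ for almost every such $z$.

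Next, \textbf{(A3)} gives $x(z,c)\le x(v,c)$ for all $z<v$. Choosing $z\uparrow v$ along the full-measure set where the two allocations agree, we get $x(z,z)\le x(v,c)$ along that sequence. Continuity of the diagonal allocation then gives $x(v,v)\le x(v,c)$.

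\emph{Step 2: $x(v,c)\le\sigma$.} Set $a:=x(v,c)$. The lower bound $x(c,c)\le a$ follows from \textbf{(A1)} with $t=(c,c)$ and $t'=(v,c)$, using $c<v$ and $q(c,c)\le c$.

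For the upper bound, take any $z\in(v,\bar v]$, which is possible because $v<\bar v$. We apply \textbf{(A1)} with $t=(v,c)$ and $t'=(z,z)$. Since $q(v,c)\le c\le z$ by \textbf{(IR)}, this gives $a\le x(z,z)$. Letting $z\downarrow v$ and using continuity, we obtain $a\le x(v,v)$.

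So $a\in[x(c,c),x(v,v)]$. By the intermediate value theorem applied to the continuous map $z\mapsto x(z,z)$ on $[c,v]$, there is $z^\ast\in[c,v]$ with $x(z^\ast,z^\ast)=a=x(v,c)$. Lemma~\ref{lem:unit-payment-monotonicity} then forces $q(z^\ast,z^\ast)=q(v,c)\le c$. Thus $z^\ast$ is admissible, and $a\in S$, which gives $a\le\sigma$.

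Combining the two steps gives \eqref{eq:diagonal-determination}. Step 2 also shows that the supremum is attained, at the diagonal type $(z^\ast,z^\ast)$, which proves the ``in particular'' clause.

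I expect the main obstacle to be the endpoint case $z=v$ in Step 1. Plain monotonicity does not compare $(v,v)$ with $(v,c)$ when the valuations coincide. That is why the argument goes through the payment identity \textbf{(P)} together with the almost-everywhere property \textbf{(A4)}, rather than through \textbf{(A1)} alone. The hypothesis $v<\bar v$ is used only in Step 2, to approach $v$ from above.
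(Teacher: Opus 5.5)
Your proof is correct. Outside the endpoint case $z=v$ of Step 1 it matches the paper's proof step for step. The paper also sandwiches $x(v,c)$ between $x(c,c)$ and $x(v,v)$ using \textbf{(A3)}/\textbf{(A1)} and continuity. It then applies the intermediate value theorem and Lemma~\ref{lem:unit-payment-monotonicity} to produce an admissible $z^\ast$, and uses \textbf{(A1)} for admissible $z<v$.

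For the endpoint with $q(v,v)\le c$, the paper does not use the payment identity. It takes $z_k\uparrow v$. Diagonal monotonicity gives $x(z_k,z_k)\le x(v,v)$, so part~2 of Lemma~\ref{lem:unit-payment-monotonicity} gives $q(z_k,z_k)\le q(v,v)\le c$. Each $z_k$ is therefore admissible, \textbf{(A1)} yields $x(z_k,z_k)\le x(v,c)$, and continuity finishes. Your remark that monotonicity alone cannot handle the endpoint is therefore not quite right: unit-payment monotonicity carries admissibility down the diagonal, and that is all that is needed.

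Your route via DSIC of $(v,c)$ against $(v,v)$, the identity \textbf{(P)}, and an almost-everywhere argument is valid. It is heavier, since it relies on Theorem~\ref{thm:Payment}. As a by-product it shows $x(z,c)=x(z,z)$ for almost every $z\in(c,v)$.

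One citation needs fixing. \textbf{(A4)} of Lemma~\ref{lem:allocation_A_E} is stated for a fixed pair of caps, with an exceptional null set of valuations that depends on that pair. It therefore does not give $x(z,c)\le x(z,z)$, where the second cap moves with $z$. The inequality you need is \textbf{(A6)} of Lemma~\ref{lem:diagonal-allocation}. More directly, it holds for every $z\in(c,v)$ by the same argument as your Step~2: \textbf{(A3)} gives $x(z,c)\le x(z',z')$ for $z'>z$, and diagonal continuity lets $z'\downarrow z$.
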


\begin{figure}[t]
\centering
\begin{tikzpicture}[scale=1.3,>=stealth,font=\small]
    \begin{scope}
        \fill[gray!30] (0,0) -- (3,3) -- (3,1) -- cycle;
        \draw (0,0) -- (3,3);
        \draw (0,0) -- (3,1);
        \draw (3,1) -- (3,3);
        \draw[->] (0,0) -- (3.8,0) node[right] {value $v$};
        \draw[->] (0,0) -- (0,3.5) node[above] {cap $c$};

        \node[below=2pt] at (0,0) {$0$};
        \node[left=2pt] at (0,0) {$0$};
        \coordinate (D) at (1.4,1.4);
        \coordinate (Q) at (1.4,0.7);
        \coordinate (R) at (2.1,0.7);
        \draw[dashed] (1.4,0) -- (D);
        \draw[dashed] (0,0.7) -- (Q);
        \draw[dashed] (R) -- (2.1,0);
        \draw[dashed] (3,0) -- (3,1);
        \draw[dashed] (0,1) -- (3,1);
        \draw[dashed] (0,3) -- (3,3);

        \draw[line width=1.5pt] (D) -- (Q) -- (R);
        \fill (D) circle (1.5pt);
        \fill (Q) circle (1.5pt);
        \fill (3,1) circle (1.5pt);
        \fill (3,3) circle (1.5pt);

        \draw (1.4,0.1) -- (1.4,0) node[below=3.5pt] {$v$};
        \draw (2.1,0.1) -- (2.1,0)
            node[below=3.5pt] {$\bar r q(v,v)$};
        \draw (3,0.1) -- (3,0) node[below=3pt] {$\bar v$};
        \node[left=2pt] at (0,0.7) {$q(v,v)$};
        \node[left=2pt] at (0,1) {$\bar v/\bar r$};
        \node[left=2pt] at (0,3) {$\bar v$};

        \node[above left] at (D) {$(v,v)$};
        \node[left] at (1.34,1.05) {\textbf{IC}};
        \node[below] at (1.75,0.67) {\textbf{IC}};
        \node[align=center] at (2.25,2.05) {allocation\\$x(v,v)$};
        \draw[->] (2.1,1.82) -- (1.56,1.32);
        \node at (1.5,-0.72) {\textbf{(a) Continuous case}};
    \end{scope}

    \begin{scope}[shift={(5.7,0)}]
        \fill[gray!30] (0,0) -- (3,3) -- (3,1) -- cycle;
        \draw (0,0) -- (3,3);
        \draw (0,0) -- (3,1);
        \draw (3,1) -- (3,3);
        \draw[->] (0,0) -- (3.8,0) node[right] {value $v$};
        \draw[->] (0,0) -- (0,3.5) node[above] {cap $c$};

        \node[below=2pt] at (0,0) {$0$};
        \node[left=2pt] at (0,0) {$0$};
        \coordinate (J) at (1.4,1.4);
        \fill[gray!65]
            (1.4,0.55) -- (1.65,0.55) -- (2.46,0.82)
            -- (1.4,0.82) -- cycle;
        \draw
            (1.4,0.55) -- (1.65,0.55) -- (2.46,0.82)
            -- (1.4,0.82) -- cycle;
        \draw[dashed] (1.4,0) -- (J);
        \draw[dashed] (0,0.55) -- (1.65,0.55);
        \draw[dashed] (0,0.82) -- (2.46,0.82);
        \draw[dashed] (3,0) -- (3,1);
        \draw[dashed] (0,1) -- (3,1);
        \draw[dashed] (0,3) -- (3,3);

        \draw[line width=2pt,white] (1.27,1.27) -- (1.53,1.53);
        \draw[very thick] (1.28,1.46) -- (1.46,1.28);
        \draw[very thick] (1.39,1.57) -- (1.57,1.39);
        \node[above left,align=right] at (1.42,1.58)
            {jump at $(v,v)$};

        \fill (3,1) circle (1.5pt);
        \fill (3,3) circle (1.5pt);
        \draw (1.4,0.1) -- (1.4,0) node[below=3.5pt] {$v$};
        \draw (3,0.1) -- (3,0) node[below=3pt] {$\bar v$};
        \node[left=2pt] at (0,0.55) {$q_v^-$};
        \node[left=2pt] at (0,0.82) {$q_v^+$};
        \node[left=2pt] at (0,1) {$\bar v/\bar r$};
        \node[left=2pt] at (0,3) {$\bar v$};

        \node[align=center] at (2.35,1.75)
            {not determined\\by the diagonal};
        \draw[->] (2.2,1.5) -- (1.98,0.72);
        \node at (1.5,-0.72) {\textbf{(b) Discontinuous case}};
    \end{scope}
\end{tikzpicture}
\caption{How the diagonal allocation propagates into the interior of the
type space. \textbf{(a)} Under diagonal continuity, at an interior valuation $v$, all types on the highlighted $L$-shape receive allocation
$x(v,v)$: incentive compatibility gives equality down to cap $q(v,v)$ and
extends it horizontally along that cap, away from the boundary $v=\bar v$. The illustration
assumes $q(v,v)\ge v/\bar r$ so that the
bottom of the $L$ lies in the type space. \textbf{(b)} At a jump
at $v$, let $q_v^-<q_v^+$ denote the limiting unit payments associated with
the lower and upper diagonal allocation levels. The diagonal allocation
need not determine allocations in the shaded region
$\{(v',c):q_v^-<c<q_v^+,\ v\le v'\le\min\{\bar r c,\bar v\}\}$.}
\Description{Two diagrams of the valuation--unit-payment-cap type space.
The first highlights an L-shaped set extending downward and rightward from
the diagonal type $(v,v)$ at a continuity point. The second shades a
region whose allocation is not determined when the diagonal allocation
has a jump at $v$.}
\label{fig:diagonal-determination}
\end{figure}

Figure~\ref{fig:diagonal-determination} illustrates the consequence of
Lemma~\ref{lem:diagonal-determination}: under diagonal continuity,
specifying the monotone allocation rule $z\mapsto x(z,z)$ over diagonal
types determines all off-diagonal allocations except those with
$v=\bar v$, for which there is no higher diagonal type available for
comparison. Thus, as in the single-dimensional setting, the diagonal allocation of a truthful mechanism determines its
interior allocation rule, since
\textbf{(P)} then uniquely determines all payments. The potentially
undetermined types lie on the boundary
$\{(\bar v,c):\bar v/\bar r\le c<\bar v\}$. This boundary has Lebesgue
measure zero and therefore does not affect expected-revenue calculations
that integrate over types with a density.

\section{Deterministic Mechanisms}\label{sec:deterministic-mechanisms}

We now examine how the general characterization specializes to the
deterministic setting, where the allocation rule is restricted to binary
outcomes. This restriction simplifies the structure of truthful
mechanisms and leads to a characterization that mirrors the classical
single-dimensional case. We then derive a tight upper bound on the
expected revenue achievable by deterministic truthful mechanisms, and a
family of $\sigma$-increment mechanisms whose expected revenue approaches
this bound, giving a $\bar r$-approximation, in the limit, to the optimal
expected revenue among all truthful mechanisms.

\subsection{Characterization of Truthfulness}
\label{sec:deterministic-thresholds}

The general characterization specializes to the following conditions
under determinism.

\begin{theorem}\label{thm:characterisation-deterministic}
A deterministic mechanism $(x,p)$ is truthful if and only if it
satisfies
\begin{description}
    \item[(DA1)] if $c<c'$, then $x(v,c)\leq x(v',c')$;
    \item[(DA2)] $x(v,c)=x(v',c)$ for all $v,v'>c$;
    \item[(DA3)] if
    $\kappa:=\inf\{c:x(v,c)=1\text{ for some }v>c\}$ exists in
    $[0,\bar v]$, then $x(v,\kappa)=1$ for every $v>\kappa$;
    \item[(DP)]
    $p(v,c)=c x(v,c)-\int_{0}^{c}x(z,z)\,dz$.
\end{description}
\end{theorem}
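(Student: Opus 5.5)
The plan is to reduce everything to Theorem~\ref{thm:characterisation}, using that a deterministic allocation takes values in $\{0,1\}$. When $x(t)=1$, the unit payment is simply $q(t)=p(t)$. The central objects are the fixed-cap slices $T^c$. By \textbf{(A3)}, $v\mapsto x(v,c)$ is nondecreasing on $T^c$, so on each slice it is a threshold rule: there is a $w(c)$ with $x(v,c)=0$ below $w(c)$ and $x(v,c)=1$ above it, the value at $w(c)$ being unspecified.

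For \emph{necessity}, I would first prove \textbf{(DA2)} by showing that no slice can have an interior threshold $w=w(c)>c$.
\begin{itemize}
\item Suppose such a $w$ exists, and take $v>w$. Then \textbf{(P)} gives
\[
p(v,c)=v-\int_0^c x(z,z)\,dz-(v-w)=w-\int_0^c x(z,z)\,dz .
\]
\item \textbf{(IR)} forces this to be at most $c$.
\item Now take a type $(v',c)$ with $c<v'<w$, which receives nothing. The report $(v,c)$ is affordable for it, since $q(v,c)\le c$, and it yields utility $v'-q(v,c)\ge v'-c>0$. This violates DSIC.
\end{itemize}
Hence $x(\cdot,c)$ is constant on $v>c$. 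Substituting $\int_c^v x(z,c)\,dz=(v-c)x(v,c)$ into \textbf{(P)} then gives \textbf{(DP)}. The same substitution also works at $v=c$.

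Next I would derive \textbf{(DA1)}. Let $c<c'$ and $x(v,c)=1$. By \textbf{(IR)}, $q(v,c)\le c<c'$. If $v<v'$, \textbf{(A1)} gives $x(v',c')=1$. Otherwise I would first move $v$ down (or $v'$ up) within the slices using \textbf{(DA2)} and \textbf{(A3)}, so that \textbf{(A1)} applies. The diagonal endpoints $v=c$ or $v'=c'$ need separate care via \textbf{(A3)}.

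For \textbf{(DA3)}, I would argue by contradiction: suppose $x(v,\kappa)=0$ for some $v>\kappa$.
\begin{itemize}
\item By \textbf{(DA1)}, $x(z,z)=1$ for every $z>\kappa$, since $(z,z)$ dominates some off-diagonal type with cap in $(\kappa,z)$ that is allocated.
\item Then, for a cap $c\in(\kappa,v)$ with $x(\cdot,c)=1$, \textbf{(DP)} gives $q(v,c)\le c-(c-\kappa)=\kappa$.
\item So type $(v,\kappa)$ can afford the report $(v,c)$, and gains $v-\kappa>0$ from it. This contradicts DSIC.
\end{itemize}

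For \emph{sufficiency}, I would verify \textbf{(A1)}, \textbf{(P)} and \textbf{(IR)}.
\begin{itemize}
\item \textbf{(IR)} is immediate, since $q(v,c)=c-\int_0^c x(z,z)\,dz\le c$.
\item \textbf{(P)} follows from \textbf{(DP)} together with \textbf{(DA2)}, because the slice integral collapses to $(c-v)x(v,c)$, which exactly offsets the $vx$ versus $cx$ discrepancy.
\item For \textbf{(A1)}, the only nontrivial case is $x(t)=1$ with $q(t)\le c'$ and $v<v'$. I would use \textbf{(DA1)} and \textbf{(DA3)} to show that every $z>\kappa$ has $x(z,z)=1$, and hence that $q(t)=c-\int_0^c x(z,z)\,dz\ge\kappa$, up to the behaviour of the diagonal near $\kappa$. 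Then $c'\ge\kappa$. Since $v'>v\ge\kappa$, \textbf{(DA3)} or \textbf{(DA1)} yields $x(t')=1$.
\end{itemize}

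I expect the main obstacle to be this last step, together with the boundary bookkeeping at $\kappa$ and on the diagonal $v=c$. The issue is that $q(t)\ge\kappa$ requires the diagonal allocation to vanish below $\kappa$. I would obtain that from \textbf{(DA1)} applied to off-diagonal types with caps just below $\kappa$, using the definition of $\kappa$ as an infimum. I would also treat the case $c'=\kappa$ exactly through \textbf{(DA3)}.
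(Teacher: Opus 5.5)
Your proposal is correct and follows essentially the paper's route: necessity by DSIC/IR deviations (the paper obtains (DA1) in one step by letting $(v',c')$ deviate to a winning $(v,c)$, whose payment is at most $c<c'\le v'$, rather than going through (A1) and slice-shifting), (DP) by substituting (DA2) into (P), the same contradiction for (DA3), and sufficiency by checking (A1), (P), (IR) against Theorem~\ref{thm:characterisation} via the threshold at $\kappa$. In the (A1) step, the diagonal being one above $\kappa$ only gives $q(t)\le\kappa$; the bound you need, $q(t)\ge\kappa$ (so $q(t)=\kappa$ for every winner), comes from the diagonal vanishing below $\kappa$, as you correctly identify in your last paragraph.
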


\textbf{(DA1)} imposes monotonicity in the unit-payment cap, while
\textbf{(DA2)} implies that the allocation of an off-diagonal type
depends only on its cap. Since the allocation is deterministic, these
conditions together imply a threshold $\kappa$: every off-diagonal type
with $c<\kappa$ loses, and every off-diagonal type with $c>\kappa$ wins.
\textbf{(DA3)} additionally requires every off-diagonal type $(v,\kappa)$
with $v>\kappa$ to win, though the diagonal type $(\kappa,\kappa)$ may
either win or lose. Finally, \textbf{(DP)} uniquely determines payments
from the allocation.

\subsection{Revenue Maximization}
\label{sec:deterministic-revenue}

We now analyze the revenue of deterministic truthful mechanisms, writing
the bidder index and the other bidders' reports explicitly. The
deterministic characterization makes the unit-payment cap the effective
one-dimensional parameter: by \textbf{(DA2)}, the allocation of an
off-diagonal type depends only on its cap, and \textbf{(DP)} then implies
the same for its payment. We therefore identify an off-diagonal type
with its cap and write bidder $i$'s allocation and payment as
$x_i(c;\mathbf t_{-i})$ and $p_i(c;\mathbf t_{-i})$. Under this
convention, \textbf{(DP)} gives
\[
p_i(c;\mathbf t_{-i})
=c x_i(c;\mathbf t_{-i})
-\int_0^c x_i(z;\mathbf t_{-i})\,dz.
\]
Define bidder $i$'s
\emph{virtual unit-payment cap}, or simply \emph{virtual cap}, by
\[
\varphi_i(c)=c-\frac{1-H_i(c)}{h_i(c)},
\qquad c\in(0,\bar v).
\]
We call $H_i$ \emph{regular} if $\varphi_i$ is nondecreasing on
$(0,\bar v)$. Since $h_i$ may vanish at the endpoints, the formula above
need not be defined there; we therefore set $\varphi_i(0)=0$ and
$\varphi_i(\bar v)=\bar v$ to define the mechanism at every reported type profile.
These endpoint conventions do not affect expected revenue because the
cap distributions are atomless.
Conditional on $\mathbf t_{-i}$, bidder $i$'s expected payment is
\[
\begin{aligned}
\mathbb E_{c_i}[p_i(c_i;\mathbf t_{-i})]
&=\int_0^{\bar v}
\left(c x_i(c;\mathbf t_{-i})
-\int_0^c x_i(z;\mathbf t_{-i})\,dz\right)h_i(c)\,dc\\
&=\int_0^{\bar v}
\left[c h_i(c)-(1-H_i(c))\right]
x_i(c;\mathbf t_{-i})\,dc\\
&=\mathbb E_{c_i}\!\left[
\varphi_i(c_i)x_i(c_i;\mathbf t_{-i})\right].
\end{aligned}
\]
The second equality exchanges the order of integration and uses
$\int_z^{\bar v}h_i(c)\,dc=1-H_i(z)$; the last uses the definition of
$\varphi_i$.
Averaging over the other bidders' reports and summing over bidders therefore
yields
\[
\mathbb E\!\left[\sum_{i=1}^n p_i(\mathbf t)\right]
=\mathbb E\!\left[\sum_{i=1}^n
\varphi_i(c_i)x_i(\mathbf t)\right].
\]
This identity mirrors Myerson's virtual-value derivation, with
unit-payment caps taking the role of values, and it therefore suggests
applying Myerson's optimal single-item mechanism directly to the
reported caps. Assume now that $H_i$ is regular for every $i$, and call
the resulting mechanism the \emph{Myerson-cap mechanism}, denoted
$M^0=(x^0,p^0)$: it allocates the item to a bidder with the highest
positive virtual cap, breaking ties by a fixed bidder-priority order
independent of the reports. This allocation is nondecreasing in the cap
of each bidder. For every bidder $i$ and profile $\mathbf t_{-i}$, define
the winning threshold
\[
\kappa_i(\mathbf t_{-i})
:=
\inf\{c_i\in[0,\bar v]:x_i^0(t_i,\mathbf t_{-i})=1
\text{ for some }t_i=(v_i,c_i)\in T\},
\]
where the infimum is $\infty$ if bidder $i$ can never win. A winner pays
$\kappa_i(\mathbf t_{-i})$, and a loser pays zero.
Since $M^0$ maximizes $\sum_i\varphi_i(c_i)x_i(\mathbf t)$ pointwise,
the revenue identity gives
$\operatorname{Rev}(M^0)\geq\mathrm{OPT}_{\mathrm{det}}$.

The Myerson-cap mechanism, however, need not be DSIC. Fix $i$ and
$\mathbf t_{-i}$, and write $\kappa_i=\kappa_i(\mathbf t_{-i})$. Suppose
bidder $i$ with $c_i=\kappa_i$ ties with another bidder in virtual cap
and loses under the fixed tie-breaking rule. The off-diagonal type
$(v_i,\kappa_i)$ then loses under truthful reporting, but by instead
reporting any cap $c_i'\in(\kappa_i,v_i)$, she wins, pays $\kappa_i$, and
obtains positive utility. Thus $M^0$ fails DSIC at this type profile.
Such a failure, however, requires $c_i=\kappa_i(\mathbf t_{-i})$, a
measure-zero event under the atomless cap distribution. So although
$M^0$ need not be DSIC pointwise, it is DSIC almost surely. This
distinction matters: a zero-probability threshold event leaves expected
revenue unchanged, but truthfulness requires the incentive inequalities
to hold at every type profile.

To restore DSIC without materially changing the Myerson-cap mechanism,
we can introduce a small gap above the Myerson threshold: a bidder wins only
once her cap reaches $\kappa_i(\mathbf t_{-i})+\sigma$, and then pays
this shifted threshold. In the failure case above, the bidder's true cap
is $c_i=\kappa_i(\mathbf t_{-i})$; under the shifted mechanism, any
report that makes her win requires payment
$\kappa_i(\mathbf t_{-i})+\sigma>c_i$, which is unaffordable, so the
profitable deviation is eliminated. The allocation changes only for
bidders with caps in
$[\kappa_i(\mathbf t_{-i}),\kappa_i(\mathbf t_{-i})+\sigma)$: they may
win under the Myerson-cap mechanism but lose under the shifted one, and
every bidder who still wins pays an additional $\sigma$. This motivates
the following mechanism.

\begin{definition}[$\sigma$-increment mechanism]\label{def:sigma-increment}
For $\sigma>0$, the \emph{$\sigma$-increment mechanism}
$M^\sigma=(x^\sigma,p^\sigma)$ is defined by
\[
x_i^\sigma(\mathbf t)=1
\quad\Longleftrightarrow\quad
c_i\geq \kappa_i(\mathbf t_{-i})+\sigma,
\]
and
\[
p_i^\sigma(\mathbf t)
=
\begin{cases}
\kappa_i(\mathbf t_{-i})+\sigma,
&\text{if }x_i^\sigma(\mathbf t)=1,\\
0,&\text{otherwise}.
\end{cases}
\]
If no bidder satisfies the allocation inequality, the item is not
allocated.
\end{definition}

Under regularity, the resulting mechanism is truthful for every
$\sigma>0$, and its revenue converges to that of the Myerson-cap
mechanism as $\sigma$ vanishes.

\begin{theorem}\label{thm:optimal-deterministic}
Let $H_i$ be regular for every $i$. For any $\sigma>0$,
$M^\sigma$ is a deterministic truthful mechanism. Moreover,
\[
\lim_{\sigma\downarrow0}\operatorname{Rev}(M^\sigma)
=\operatorname{Rev}(M^0).
\]
\end{theorem}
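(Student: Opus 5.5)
Fix a bidder $i$ and other reports $\mathbf t_{-i}$, and write $\kappa=\kappa_i(\mathbf t_{-i})$. The plan has two parts: truthfulness of $M^\sigma$, then convergence of revenue.

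\emph{Truthfulness.} I would verify DSIC and IR directly rather than through Theorem~\ref{thm:characterisation-deterministic}, since $M^\sigma$ is simply a posted unit price. By construction, $\kappa$ depends only on $\mathbf t_{-i}$. Bidder $i$ wins if and only if her reported cap is at least $\kappa+\sigma$, and then pays $\kappa+\sigma$.

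\begin{itemize}
\item \textbf{IR.} A truthful winner has $c_i\ge\kappa+\sigma$, so $p_i=\kappa+\sigma\le c_i=c_i x_i$. Hence the ROI constraint holds. Her utility is $v_i-\kappa-\sigma\ge c_i-\kappa-\sigma\ge0$, using $v_i\ge c_i$. A truthful loser pays $0$ and gets utility $0$.
\item \textbf{DSIC.} Any report either loses, giving utility $0$, or wins at the fixed price $\kappa+\sigma$. If $c_i<\kappa+\sigma$, a winning misreport violates the ROI constraint and yields $-\infty$, so truth (utility $0$) is optimal. If $c_i\ge\kappa+\sigma$, truth wins with utility $v_i-\kappa-\sigma\ge0$, which is optimal.
\item \textbf{Feasibility.} I must check that at most one bidder wins. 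If both $i$ and $j$ had $c\ge\kappa+\sigma$, each would exceed its Myerson threshold strictly. Under regularity and fixed tie-breaking, this means each strictly wins in $M^0$, which is impossible. The step to justify is that $c_i>\kappa_i$ implies $x_i^0=1$. This follows from monotonicity of $x_i^0$ in $c_i$: $\varphi_i$ is nondecreasing, and the allocation depends only on caps. Off-diagonal and diagonal types with the same cap behave identically in $M^0$, so the infimum over $t_i$ reduces to an infimum over $c_i$.
\item \textbf{Conventions and endpoints.} Zero payment on zero allocation holds. The case $\kappa=\infty$ means $i$ never wins. The endpoint conventions $\varphi_i(0)=0$ and $\varphi_i(\bar v)=\bar v$ only make $\kappa$ well defined.
\end{itemize}

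\emph{Revenue convergence.} Write revenue bidder by bidder as
\[
\operatorname{Rev}(M^\sigma)=\sum_i\mathbb E\bigl[(\kappa_i+\sigma)\mathbf 1\{c_i\ge\kappa_i+\sigma\}\bigr],
\qquad
\operatorname{Rev}(M^0)=\sum_i\mathbb E\bigl[\kappa_i\mathbf 1\{c_i>\kappa_i\}\bigr],
\]
where the second identity uses that $c_i=\kappa_i$ has probability zero.

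\begin{itemize}
\item \textbf{Pointwise limit.} For each realization, $\mathbf 1\{c_i\ge\kappa_i+\sigma\}\to\mathbf 1\{c_i>\kappa_i\}$ as $\sigma\downarrow0$. So the integrand converges pointwise.
\item \textbf{Domination.} The integrand is bounded by $\bar v+1$ for $\sigma\le1$, because winning forces $\kappa_i+\sigma\le c_i\le\bar v$.
\item \textbf{Conclusion.} Dominated convergence gives the limit. Alternatively there is a direct bound: the difference per bidder is at most $\sigma+\bar v\Pr[\kappa_i<c_i<\kappa_i+\sigma]$. Conditioning on $\mathbf t_{-i}$ and using independence and atomlessness of $H_i$, the probability term tends to $0$.
\end{itemize}

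\emph{Main obstacle.} I expect the feasibility claim to be the hardest step, namely that $c_i>\kappa_i(\mathbf t_{-i})$ forces $i$ to win outright in $M^0$. This needs care when virtual caps are flat, i.e.\ when $\varphi_i$ has constant stretches, combined with ties among bidders. The first approach I would try is to show that the winning set $\{c_i: x_i^0=1\}$ is an up-set in $c_i$. For this, raising $c_i$ weakly raises $\varphi_i(c_i)$ and leaves the comparison with others and the priority order unchanged. So the set is an interval $[\kappa_i,\bar v]$ or $(\kappa_i,\bar v]$, and anything strictly above $\kappa_i$ wins. Two bidders both strictly above their thresholds would then both be $M^0$-winners, a contradiction.
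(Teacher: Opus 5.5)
Your proof is correct. Two of its three parts coincide with the paper's:

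\begin{itemize}
\item \textbf{Feasibility.} The winning set of $M^0$ is an up-set in $c_i$, so $c_i\ge\kappa_i+\sigma>\kappa_i$ forces $x_i^0=1$, which gives $x_i^\sigma\le x_i^0$.
\item \textbf{Revenue convergence.} Realized revenue converges pointwise off the null event $\{c_i=\kappa_i(\mathbf t_{-i})\}$ and is bounded by $\bar v$, so dominated convergence applies.
\end{itemize}

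Your quantitative bound $\sigma+\bar v\Pr[\kappa_i<c_i<\kappa_i+\sigma]$ is a small extra. Your remark about flat stretches of $\varphi_i$ is also handled correctly by the up-set argument. The only point where $\varphi_i$ can fail to be monotone is the convention $\varphi_i(0)=0$, and a zero cap never wins, so this is harmless.

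The genuine difference is in how you establish truthfulness. The paper stays inside its own framework. It notes that $M^\sigma$'s allocation is a threshold at $\tau_i=\kappa_i+\sigma$ in the cap. It computes $\int_0^{c_i}x_i^\sigma((z,z),\mathbf t_{-i})\,dz=\max\{c_i-\tau_i,0\}$, so that \textbf{(DP)} reproduces the payments $\tau_i$. It checks \textbf{(DA1)}--\textbf{(DA3)} and then invokes Theorem~\ref{thm:characterisation-deterministic}.

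You instead observe that, conditional on $\mathbf t_{-i}$, $M^\sigma$ is a take-it-or-leave-it posted price $\kappa_i+\sigma$ on the reported cap. You then verify IR and DSIC by hand: a bidder either loses at zero, or wins at a fixed price that is affordable exactly when her true cap is at least $\kappa_i+\sigma$.

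What each route buys:
\begin{itemize}
\item \textbf{Your route} is more elementary and self-contained. It does not depend on the chain of results behind Theorem~\ref{thm:characterisation}. It also makes transparent why the $\sigma$-gap repairs $M^0$: a bidder whose cap sits at the threshold can no longer afford any winning report.
\item \textbf{The paper's route} serves a structural purpose. It shows that $M^\sigma$ is exactly the kind of threshold mechanism the deterministic characterization describes.
\end{itemize}
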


Because every $M^\sigma$ is deterministic and truthful,
$\operatorname{Rev}(M^\sigma)\leq\mathrm{OPT}_{\mathrm{det}}$. Taking
$\sigma\downarrow0$ in Theorem~\ref{thm:optimal-deterministic} gives
$\operatorname{Rev}(M^0)\leq\mathrm{OPT}_{\mathrm{det}}$, and combined
with the earlier upper bound this gives the reverse inequality, so
$\mathrm{OPT}_{\mathrm{det}}=\operatorname{Rev}(M^0)$. In particular, for
every $\varepsilon>0$, some $\sigma>0$ satisfies
$\operatorname{Rev}(M^\sigma)\geq\operatorname{Rev}(M^0)-\varepsilon$.

\subsection{Revenue Gap between Deterministic and Randomized Mechanisms}
\label{sec:deterministic-approximation}

We next compare $\mathrm{OPT}_{\mathrm{det}}$ with $\mathrm{OPT}$ through
three lemmas. The first gives a single-dimensional upper bound on
payments along any fixed-ROI slice~\cite{journal/AI/Lv2026}.

\begin{lemma}
\label{lem:fixed-roi-payment-bound}
Let $(x,p)$ be a truthful mechanism.
For every $r\in[1,\bar r]$, let $(x_r,p_r)$ be its restriction to
$\{(v,v/r):v\in[0,\bar v]\}$, parameterized by $v$.
Then, for every $v\in[0,\bar v]$,
\[
p_r(v)
\le
v x_r(v)-\int_0^v x_r(z)\,dz.
\]
\end{lemma}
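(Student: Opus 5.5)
The plan is to run the standard one-dimensional envelope argument along the fixed-ROI slice, using only the deviations that are guaranteed to be affordable there, namely downward ones. This is exactly why we obtain an inequality rather than an identity.

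Fix $r\in[1,\bar r]$ and write $x_r(v)=x(v,v/r)$, $p_r(v)=p(v,v/r)$ and $U(v)=v x_r(v)-p_r(v)$ for the truthful utility of type $(v,v/r)$. By \textbf{(IR)}, $U(v)\ge 0$, and by the convention at $(0,0)$ we have $U(0)=0$.

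\emph{Step 1 (monotonicity).} For $v'<v$ on the slice we have $v'<v$ and $v'/r<v/r$. By \textbf{(A3)} of Theorem~\ref{thm:Allocation}, $x_r(v')\le x_r(v)$. Hence $x_r$ is nondecreasing on $[0,\bar v]$, so it is Riemann integrable and $\int_0^v x_r(z)\,dz$ is well defined.

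\emph{Step 2 (downward deviations are affordable).} Let $v'<v$. By \textbf{(IR)} applied at the type $(v',v'/r)$, $q(v',v'/r)\le v'/r<v/r$. So the report $(v',v'/r)$ is affordable for the true type $(v,v/r)$, and its utility from that report is the quasilinear $v x_r(v')-p_r(v')$. DSIC then gives
\[
U(v)\ge v x_r(v')-p_r(v') = U(v')+(v-v')\,x_r(v').
\]

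\emph{Step 3 (telescoping).} Take any partition $0=v_0<v_1<\dots<v_m=v$ and chain the inequality of Step 2 along it:
\[
U(v)\ge U(0)+\sum_{k=0}^{m-1}(v_{k+1}-v_k)\,x_r(v_k).
\]
The right-hand side is a lower Riemann sum of the monotone function $x_r$. As the mesh tends to zero it converges to $\int_0^v x_r(z)\,dz$, and $U(0)=0$, so $U(v)\ge\int_0^v x_r(z)\,dz$. Rearranging, $p_r(v)=v x_r(v)-U(v)\le v x_r(v)-\int_0^v x_r(z)\,dz$, which is the claim.

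The main obstacle is that upward deviations along the slice may be unaffordable. A type $(v,v/r)$ need not be able to afford $q(v'',v''/r)$ for $v''>v$, so the matching upper bound on $U$ that would yield an equality is unavailable. The argument must therefore rely solely on downward deviations, with \textbf{(IR)} certifying their affordability and \textbf{(A3)} providing the monotonicity needed to pass from Riemann sums to the integral.
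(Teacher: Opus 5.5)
Your proof is correct and follows essentially the same route as the paper: you get monotonicity of $x_r$ from \textbf{(A3)}, use \textbf{(IR)} to show that downward deviations along the fixed-ROI slice are affordable, and then telescope the DSIC inequalities over a partition and take lower Riemann sums to reach $U_r(v)\ge\int_0^v x_r(z)\,dz$.
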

By \textbf{(A3)}, $x_r$ is nondecreasing in the valuation along each
fixed-ROI slice, so the right-hand side of
the above inequality is exactly the classical Myerson
payment for this allocation rule. Keeping $x_r$ and replacing $p_r$ by
that right-hand side therefore gives a truthful mechanism in the
ordinary quasilinear setting with weakly higher payments for every
valuation.

To compare revenues, consider the setting obtained by fixing $r_i=1$ for
every bidder while keeping the valuation distributions
$F_1,\ldots,F_n$ unchanged. Since $c_i=v_i$, this is exactly the
ordinary quasilinear single-item auction setting; let $\mathrm{OPT}_{r=1}$
denote its optimal expected revenue.

The next lemma uses the fixed-ROI payment bound to show that
$\mathrm{OPT}$ cannot exceed $\mathrm{OPT}_{r=1}$.

\begin{lemma}
\label{lem:opt-upper-value}
$\mathrm{OPT}\le \mathrm{OPT}_{r=1}$.
\end{lemma}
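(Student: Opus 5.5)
Given a truthful mechanism $M=(x,p)$ for the original setting, I will build a mechanism for the $r=1$ setting with at least the same expected revenue. The idea is to simulate the ROI constraints inside the mechanism: the new mechanism draws the ROI profile itself.

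Concretely, for each fixed ROI profile $\mathbf r=(r_1,\ldots,r_n)$, define a mechanism $M_{\mathbf r}$ on valuation profiles $\mathbf v$ as follows. Its allocation is
\[
x^{\mathbf r}_i(\mathbf v)=x_i\bigl((v_1,v_1/r_1),\ldots,(v_n,v_n/r_n)\bigr).
\]
Bidder $i$ pays the Myerson payment
\[
v_i x^{\mathbf r}_i(\mathbf v)-\int_0^{v_i} x^{\mathbf r}_i(z,\mathbf v_{-i})\,dz.
\]

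\textbf{Step 1: monotonicity.} I must check that $x^{\mathbf r}_i$ is nondecreasing in $v_i$ for fixed $\mathbf v_{-i}$. Fix the others' reports $\mathbf t_{-i}=((v_j,v_j/r_j))_{j\ne i}$. Then bidder $i$'s allocation along her slice is exactly $x_{r_i}$ in the notation of Lemma~\ref{lem:fixed-roi-payment-bound}. This is nondecreasing by \textbf{(A3)}, since $v<v'$ implies $v/r_i\le v'/r_i$. So $M_{\mathbf r}$ is a truthful quasilinear mechanism.

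\textbf{Step 2: pointwise payment dominance.} By Lemma~\ref{lem:fixed-roi-payment-bound}, applied to each bidder $i$ with $\mathbf t_{-i}$ held fixed, its payment is pointwise at least $p_i$ at the corresponding type profile.

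\textbf{Step 3: randomization.} Form the randomized mechanism $\tilde M$: draw $\mathbf r\sim G_1\times\cdots\times G_n$ independently of the reports, then run $M_{\mathbf r}$. A mixture over $\mathbf r$ of DSIC and IR quasilinear mechanisms is DSIC and IR, by linearity of expected utility, and feasibility is preserved. If one insists on ex post IR and deterministic payments, I can instead use the expected allocation and expected payment over $\mathbf r$; these still satisfy monotonicity and the Myerson identity by linearity. Its expected revenue is
\[
\mathbb E_{\mathbf r}\mathbb E_{\mathbf v}\Bigl[\sum_i p^{\mathbf r}_i(\mathbf v)\Bigr]\ \ge\ \mathbb E_{\mathbf r,\mathbf v}\Bigl[\sum_i p_i(\mathbf t)\Bigr]=\operatorname{Rev}(M).
\]
The equality holds because $(v_i,r_i)$ has the same joint law (independent $F_i\times G_i$) in both settings.

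\textbf{Step 4: conclusion.} Hence $\operatorname{Rev}(M)\le\operatorname{Rev}(\tilde M)\le\mathrm{OPT}_{r=1}$, and taking the supremum over truthful $M$ gives the lemma.

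The main obstacle is the verification in Steps 1–2 that the lemma's slice restriction, taken in the multi-bidder setting with others fixed at $(v_j,v_j/r_j)$, is precisely bidder $i$'s interim quasilinear allocation. A related point is that truthfulness survives averaging over the simulated ROIs. Both seem routine once the others' reports are held fixed, since the paper's single-bidder analysis is stated conditional on $\mathbf t_{-i}$.
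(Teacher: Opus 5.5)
Your proposal is correct and follows the paper's proof: for each fixed ROI profile you build the same slice allocation $x^{\mathbf r}$ with Myerson payments, get monotonicity in $v_i$ from \textbf{(A3)}, and use Lemma~\ref{lem:fixed-roi-payment-bound} (with the other bidders' reports held fixed) for pointwise payment dominance. The only difference is that you mix over $\mathbf r$ into a single mechanism $\tilde M$. The paper instead notes that each $M_{\mathbf r}$ already faces $F_1\times\cdots\times F_n$ by independence, so its revenue is at most $\mathrm{OPT}_{r=1}$, and then averages over $\mathbf r$; this makes your Step~3 unnecessary but harmless.
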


The complementary comparison goes the other way. Since every cap
satisfies $c_i=v_i/r_i\ge v_i/\bar r$, the cap distributions are at
least as large as the valuation distributions scaled by $1/\bar r$, and
the next lemma shows that $\mathrm{OPT}_{\mathrm{det}}$ is accordingly at
least a $1/\bar r$ fraction of $\mathrm{OPT}_{r=1}$.

\begin{lemma}
\label{lem:det-lower-value}
Suppose $H_i$ is regular for every $i$. Then
\[
\frac{1}{\bar r}\,\mathrm{OPT}_{r=1}
\le \mathrm{OPT}_{\mathrm{det}}.
\]
\end{lemma}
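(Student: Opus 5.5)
Since $\mathrm{OPT}_{\mathrm{det}}=\operatorname{Rev}(M^0)$ under regularity (the remark following Theorem~\ref{thm:optimal-deterministic}), we must show $\operatorname{Rev}(M^0)\ge \mathrm{OPT}_{r=1}/\bar r$. The plan is to exhibit a deterministic, cap-based allocation rule whose revenue is at least $\mathrm{OPT}_{r=1}/\bar r$, and then use that $M^0$ maximizes expected virtual cap surplus over all such rules.

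\textbf{Step 1: the comparison mechanism.} Take the classical Myerson mechanism $M^*$ for valuations $F_1,\ldots,F_n$. Under standard regularity of the $F_i$ it is a deterministic threshold mechanism; otherwise use ironing, still with deterministic tie-breaking. It achieves $\mathrm{OPT}_{r=1}$ and has bidder-$i$ threshold $\theta_i(v_{-i})$. Define a cap-based mechanism $\tilde M$ that runs $M^*$ on the scaled caps $\bar r c_i$ and charges the winner $\theta_i(\bar r c_{-i})/\bar r$. This is the threshold payment in cap units. The allocation depends only on caps and is monotone in the own cap, so by the revenue identity its expected revenue equals $\mathbb E[\sum_i\varphi_i(c_i)\tilde x_i]$. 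Since $M^0$ maximizes $\sum_i\varphi_i(c_i)x_i$ pointwise over all allocations, this expectation is at most $\operatorname{Rev}(M^0)$. We do not need $\tilde M$ to be truthful; only the pointwise virtual-surplus comparison and the identity $\mathbb E[p]=\mathbb E[\varphi x]$ for cap-only monotone rules with threshold payments are used.

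\textbf{Step 2: the revenue of $\tilde M$.} Conditional on the ROIs, $\bar r c_i=(\bar r/r_i)v_i\ge v_i$. The revenue of $\tilde M$ is $\frac1{\bar r}$ times the revenue of $M^*$ run on the inflated values $w_i=\bar r c_i$. The key step, and the main obstacle, is to show $\mathbb E[\operatorname{Rev}_{M^*}(w)]\ge \mathbb E[\operatorname{Rev}_{M^*}(v)]$ when every $w_i$ pointwise dominates $v_i$. Revenue of a fixed mechanism is not in general monotone in the values, so this cannot be claimed pointwise.

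\textbf{Handling the obstacle.} I would avoid the monotonicity issue by changing which mechanism is evaluated on the $w$'s. Let $K_i$ be the distribution of $w_i$, which first-order stochastically dominates $F_i$. Then the optimal expected revenue for values drawn from $K=(K_1,\ldots,K_n)$ is at least $\mathrm{OPT}_{r=1}$. To see this, couple $w_i=\psi_i(v_i)$ through the monotone quantile map with $\psi_i(v)\ge v$. Run $M^*$ on $\psi_i^{-1}(w_i)=v_i$, with thresholds $\psi_i(\theta_i)\ge\theta_i$. This mechanism is truthful for $K$ and earns at least $\mathrm{OPT}_{r=1}$ pointwise. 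Scaling by $1/\bar r$ gives a deterministic monotone cap-based rule with threshold payments whose expected revenue is at least $\mathrm{OPT}_{r=1}/\bar r$. Since $w_i=\bar r c_i$, the caps have distribution $H_i$ scaled accordingly, and the quantile map is well defined because $H_i$ is atomless. Step 1 then bounds this revenue by $\operatorname{Rev}(M^0)=\mathrm{OPT}_{\mathrm{det}}$, which proves the lemma.
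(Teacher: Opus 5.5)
Your proof is correct and follows the paper's route: identify $\mathrm{OPT}_{\mathrm{det}}$ with the Myerson-cap revenue $\operatorname{Rev}(M^0)$, use the pointwise bound $\bar r c_i\ge v_i$ (equivalently $c_i\ge v_i/\bar r$) to get first-order stochastic dominance, and conclude with revenue monotonicity under dominance plus scaling by $1/\bar r$. The only difference is that you prove the dominance-monotonicity step, which the paper cites as folklore, through the quantile coupling $\psi_i=K_i^{-1}\circ F_i$ and a direct virtual-cap comparison against $M^0$; you also correctly observe that this comparison rule need not be truthful in the ROI environment.
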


Lemma~\ref{lem:opt-upper-value} upper-bounds $\mathrm{OPT}$ by
$\mathrm{OPT}_{r=1}$, while Lemma~\ref{lem:det-lower-value} guarantees
that $\mathrm{OPT}_{\mathrm{det}}$ captures a $1/\bar r$ fraction of
$\mathrm{OPT}_{r=1}$; combining them gives
\[
\mathrm{OPT}_{\mathrm{det}}
\ge
\frac{1}{\bar r}\,\mathrm{OPT}.
\]
By Theorem~\ref{thm:optimal-deterministic}, the revenues of the
deterministic truthful mechanisms $M^\sigma$ approach
$\mathrm{OPT}_{\mathrm{det}}$ as $\sigma\downarrow0$, which yields the
following revenue approximation result.

\begin{theorem}
\label{thm:deterministic-approximation}
Suppose $H_i$ is regular for every $i$.
For every $\varepsilon>0$, there exists a deterministic truthful mechanism
$M$ satisfying
\[
\operatorname{Rev}(M)\geq\frac{1}{\bar r}\,\mathrm{OPT}-\varepsilon.
\]
\end{theorem}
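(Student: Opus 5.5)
The proof is a chaining of results already stated, so the plan is mostly bookkeeping. We need to make sure every quantity is finite and the limits are used in the right direction.

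First, I will record $\mathrm{OPT}\le\mathrm{OPT}_{r=1}$ from Lemma~\ref{lem:opt-upper-value}. Next, I will record $\frac{1}{\bar r}\mathrm{OPT}_{r=1}\le\mathrm{OPT}_{\mathrm{det}}$ from Lemma~\ref{lem:det-lower-value}, which uses the regularity hypothesis on every $H_i$. Together these give $\mathrm{OPT}_{\mathrm{det}}\ge\frac1{\bar r}\mathrm{OPT}$. All these quantities are finite, since payments are bounded by $\bar v$ (IR gives $p_i\le c_i x_i\le\bar v$). Hence subtracting $\varepsilon$ is meaningful.

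Second, I will identify $\mathrm{OPT}_{\mathrm{det}}$ with $\operatorname{Rev}(M^0)$. The revenue identity $\mathbb E[\sum_i p_i]=\mathbb E[\sum_i\varphi_i(c_i)x_i]$ holds for every deterministic truthful mechanism, by \textbf{(DA2)} and \textbf{(DP)} of Theorem~\ref{thm:characterisation-deterministic}. The allocation of $M^0$ maximizes the integrand pointwise, so $\mathrm{OPT}_{\mathrm{det}}\le\operatorname{Rev}(M^0)$. Theorem~\ref{thm:optimal-deterministic} says each $M^\sigma$ is deterministic and truthful, so $\operatorname{Rev}(M^\sigma)\le\mathrm{OPT}_{\mathrm{det}}$. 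It also says $\operatorname{Rev}(M^\sigma)\to\operatorname{Rev}(M^0)$, and together these force equality, exactly as argued after that theorem.

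Third, given $\varepsilon>0$, I will use the convergence in Theorem~\ref{thm:optimal-deterministic} to pick $\sigma>0$ with $\operatorname{Rev}(M^\sigma)\ge\operatorname{Rev}(M^0)-\varepsilon$. Setting $M=M^\sigma$ then yields
\[
\operatorname{Rev}(M)\ge\mathrm{OPT}_{\mathrm{det}}-\varepsilon\ge\frac1{\bar r}\mathrm{OPT}-\varepsilon .
\]

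The only delicate point is the second step, which must be stated cleanly. The mechanism $M^0$ itself is not truthful, so the argument cannot use $M^0$ directly. Instead, the upper bound must come from the pointwise virtual-cap maximization applied to arbitrary deterministic truthful mechanisms. Those mechanisms depend only on caps off the diagonal, and the diagonal has measure zero, so the identity applies to them. With that identification, the remainder is immediate.
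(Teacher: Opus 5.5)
Your proposal is correct and follows the paper's argument: Lemma~\ref{lem:opt-upper-value} and Lemma~\ref{lem:det-lower-value} give $\mathrm{OPT}_{\mathrm{det}}\ge\mathrm{OPT}/\bar r$, the equality $\mathrm{OPT}_{\mathrm{det}}=\operatorname{Rev}(M^0)$ comes from the virtual-cap upper bound combined with Theorem~\ref{thm:optimal-deterministic}, and choosing $\sigma$ small enough finishes the proof. One step you leave implicit is that the revenue identity also applies to $M^0$ itself, which holds because its threshold payments have the \textbf{(DP)} form.
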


For each $\varepsilon>0$, the $\sigma$-increment mechanism $M^\sigma$
achieves this bound for all sufficiently small $\sigma>0$.

Although we have stated the revenue results under regularity for
simplicity, all results in this section extend to non-regular cap
distributions. Apply Myerson's \emph{ironing technique} to the virtual
caps and break ties using a fixed ordering of bidders. The resulting
ironed Myerson-cap mechanism is deterministic and has a nondecreasing
allocation rule in each cap. Using its winning thresholds in the
$\sigma$-increment construction preserves feasibility and truthfulness,
and atomlessness ensures that expected revenue converges to the optimal
quasilinear cap revenue as $\sigma\downarrow0$. Thus both asymptotic
optimality among deterministic truthful mechanisms and the
$1/\bar r$ revenue approximation continue to hold without regularity.

\section{Pricing Functions}
\label{sec:pricing-functions}

In the single-bidder setting, we study revenue maximization through
pricing functions, taking a different approach from the previous
sections: rather than first characterizing truthful mechanisms and then
optimizing over them, we optimize directly over pricing functions. Every
pricing function induces a truthful mechanism, since the bidder simply
chooses her preferred affordable quantity at the specified prices. A
\emph{pricing function} is a continuous function
\[
\pi:[0,1]\to\mathbb{R}_{+},
\]
with \(\pi(0)=0\), where \(\pi(x)\) is the payment for choosing quantity
\(x\). A type \((v,c)\)
chooses an affordable quantity to maximize her utility \(vx-\pi(x)\).
Throughout this section, the bidder breaks ties by choosing the largest
affordable quantity among those that maximize utility, except that type \((0,0)\) receives
zero allocation and pays zero. Thus the chosen quantity lies in
\[
\argmax_{x\in[0,1]}\{vx-\pi(x):\pi(x)\le cx\}.
\]
Continuity and \(\pi(0)=0\) ensure that the affordable set is nonempty
and compact, so a largest utility maximizer exists. We denote the
resulting allocation by \(x_{\pi}(v,c)\) and the payment by
\(p_{\pi}(v,c)=\pi(x_{\pi}(v,c))\).

For revenue maximization, it suffices to consider pricing functions.
The next result shows that every truthful mechanism can be replaced
by a pricing function under which each type pays weakly more.

\begin{theorem}
\label{thm:pricing-reduction}
Let $S\subseteq T$, and let $(x,p)$ be a truthful
mechanism on $S$. Then there exists a continuous, convex, and
nondecreasing pricing function $\pi$, with $\pi(0)=0$ and
$\pi(1)\le\bar v$, such that
\[
p_{\pi}(t)\ge p(t)
\]
for every $t\in S$.
\end{theorem}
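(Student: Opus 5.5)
The plan is to build $\pi$ as an upper envelope of affine ``supporting lines'' attached to the menu items $(x(t),p(t))$, $t\in S$. Each line is chosen so that it lies below the indifference line of the type that generated it. Concretely, set
\[
\pi(y)=\max\Bigl\{0,\ \sup_{s=(v_s,c_s)\in S}\bigl[p(s)+v_s\,(y-x(s))\bigr]\Bigr\},\qquad y\in[0,1].
\]
The routine properties follow directly. Each line has slope $v_s\in[0,\bar v]$, so $\pi$ is a supremum of nondecreasing $\bar v$-Lipschitz affine functions, and hence is convex, nondecreasing and continuous. At $y=0$ each term equals $-(v_sx(s)-p(s))\le 0$ by \textbf{(IR)}, so $\pi(0)=0$. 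At $y=1$ each term is $p(s)+v_s(1-x(s))\le v_sx(s)+v_s(1-x(s))=v_s\le\bar v$, using $p(s)\le c_sx(s)\le v_sx(s)$, so $\pi(1)\le\bar v$. Finally, $\pi(x(t))\ge p(t)$ for every $t\in S$, by taking $s=t$.

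The substance is the payment comparison $p_\pi(t)\ge p(t)$. Fix $t=(v,c)\in S$ and let $y^*=x_\pi(t)$ be the largest affordable utility maximizer. Because the line indexed by $s=t$ lies below $\pi$, every quantity $y$ gives type $t$ utility
\[
vy-\pi(y)\le vx(t)-p(t).
\]
So under $\pi$ the type can never do better than in the original mechanism. I would then split into two cases. If $y^*\ge x(t)$, monotonicity gives $\pi(y^*)\ge\pi(x(t))\ge p(t)$ and we are done. If $y^*<x(t)$, I want to show this case cannot occur. The idea is that $x(t)$ itself is affordable for $t$ and attains utility $vx(t)-p(t)$; since it is larger than $y^*$, the largest-maximizer tie-breaking rule would then select it, a contradiction. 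Both claims follow if $\pi(x(t))=p(t)$, because then $p(t)\le cx(t)$ gives affordability.

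The identity $\pi(x(t))=p(t)$ amounts to $p(s)+v_s(x(t)-x(s))\le p(t)$ for all $s\in S$. This is exactly the DSIC inequality for $s$ deviating to $t$, and it holds whenever the deviation is affordable, i.e.\ $q(t)\le c_s$. The main obstacle is the pairs with $q(t)>c_s$, where DSIC says nothing. There \textbf{(A2)} gives $x(s)<x(t)$, so the line of $s$ may overshoot $p(t)$ at $x(t)$. My first attempt is to weaken the slopes: replace $v_s$ by a smaller slope $\lambda_s\in[c_s,v_s]$, or simply by the unit payment needed to reach the next affordable menu item. The aim is to keep $\pi(x(s))\ge p(s)$ while preventing lines from unaffordable types from rising above $p(t)$ at $x(t)$.

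Lemma~\ref{lem:unit-payment-monotonicity} controls the geometry here. Since $q(t)>c_s\ge q(s)$, the point $(x(t),p(t))$ lies above the ray of slope $c_s$ through the origin. Hence the line through $(x(s),p(s))$ with slope $\lambda_s$ at most the chord slope
\[
\frac{p(t)-p(s)}{x(t)-x(s)}
\]
stays below $p(t)$. If a uniform choice of $\lambda_s$ fails, I would instead define $\pi$ directly as the greatest convex minorant of the point set $\{(x(s),p(s))\}\cup\{(0,0)\}$, extended by slope $\bar v$ to the right. I would then show, via \textbf{(A1)}/\textbf{(A2)} and Lemma~\ref{lem:unit-payment-monotonicity}, that each type's chosen point on this minorant has payment at least $p(t)$. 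That version of the case $y^*<x(t)$ is the step where I expect the real work to lie.
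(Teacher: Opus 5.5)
\textbf{Gap 1: your primary construction fails for every choice of slopes.} A truthful ROI-constrained menu need not be in convex position. So no convex $\pi$ can satisfy $\pi(x(u))=p(u)$ for all $u\in S$, and every upper envelope of lines through the menu points lies weakly above all of them.

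Here is a concrete instance. Take $\bar v=\bar r=10$ and $S=\{s,m,t\}$ with $s=(10,1)$, $m=(3,1.47)$, $t=(2,2)$, receiving $(x,p)=(\tfrac12,\tfrac12)$, $(\tfrac34,1.1)$, $(1,\tfrac32)$ respectively.
\begin{itemize}
\item The mechanism is truthful on $S$. Unit payments are $1$, $1.4\overline{6}$, $1.5$, so $s$ can afford neither other bundle. Type $m$ cannot afford $t$'s bundle and prefers its own to $s$'s ($1.15$ versus $1$). Type $t$ gets $\tfrac12$ truthfully, versus $\tfrac12$ and $0.4$ from deviating.
\item Yet $(\tfrac34,1.1)$ lies strictly above the chord joining $(\tfrac12,\tfrac12)$ and $(1,\tfrac32)$, whose value at $\tfrac34$ is $1$.
\end{itemize}
Now let $\pi$ be any convex pricing function with $\pi\ge p$ at these three quantities.
\begin{itemize}
\item If $\pi(\tfrac12)>\tfrac12$, average-price monotonicity makes every $y\ge\tfrac12$ unaffordable for $s$ (cap $1$). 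So $s$ buys some $y<\tfrac12$ and pays less than $\tfrac12$.
\item If $\pi(\tfrac12)=\tfrac12$, convexity forces $\pi(1)\ge 2\pi(\tfrac34)-\pi(\tfrac12)\ge1.7$. Every quantity priced at least $\tfrac32$ then gives $t$ utility strictly below $\tfrac12$, which $t$ secures by buying $\tfrac12$. So $t$ pays less than $\tfrac32$.
\end{itemize}
Already the pair $\{s,t\}$ breaks your slope-$v_s$ version: there $\pi=\max\{0,10y-\tfrac92,2y-\tfrac12\}$, and $t$ selects $y=\tfrac12$ and pays $\tfrac12$. So the identity $\pi(x(t))=p(t)$ that your case analysis rests on is unattainable in general. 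The dominating $\pi$ must lie strictly below some menu points, and slope weakening is a dead end.

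\textbf{Gap 2: the fallback is the right object, but its key step is left unproved.} Your fallback, the greatest convex minorant of the menu extended with slope $\bar v$, is exactly what the paper uses. However, your case split does not transfer to it. Now $\pi\le p$ on the menu, so $\pi(x(t))<p(t)$ can happen (it does for $m$ above), and ``$y^*\ge x(t)$ implies $\pi(y^*)\ge p(t)$'' is no longer available. Writing $\pi_0$ for the menu price at each offered quantity, three steps are missing.
\begin{enumerate}
\item \textbf{Monotonicity and slope bound.} Lemma~\ref{lem:unit-payment-monotonicity} makes $\pi_0$ nondecreasing with nondecreasing unit price. Hence every offered bundle with quantity below $x(t)$ is affordable to $t$, and DSIC applies to it. DSIC of the larger-quantity type bounds every secant slope of $\pi_0$ by $\bar v$. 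This gives a continuous extension to the closure of the quantity set and envelope slopes at most $\bar v$, so the slope-$\bar v$ extension is convex and $\pi(1)\le\bar v$.
\item \textbf{Contact points.} If $x(t)$ is a contact point, apply DSIC at each contact point $a<x(t)$ (through a limit when $a$ lies only in the closure). Combined with linear interpolation along the envelope, this shows no smaller quantity is strictly better, so the selection is at least $x(t)$.
\item \textbf{Non-contact points (the idea your proposal lacks).} If $x(t)$ is not a contact point, take contact points $a<x(t)<b$ between which the envelope is affine with slope $\lambda$. DSIC against $a$ gives $v(x(t)-a)\ge p(t)-\pi_0(a)>\lambda(x(t)-a)$, so $v>\lambda$ and utility strictly increases up to $b$. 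Then one of two things holds:
\begin{itemize}
\item $b$ is affordable, and $t$ pays at least $\pi_0(b)\ge\pi_0(x(t))=p(t)$; or
\item $t$ stops at the largest affordable quantity $y^c\in(x(t),b)$, where the ROI constraint binds, and pays $cy^c\ge cx(t)\ge p(t)$.
\end{itemize}
\end{enumerate}
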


The proof starts from the partial menu of allocation--payment pairs
offered by the original mechanism and takes its lower convex envelope.
Although this operation can lower the price at a fixed quantity, bidders
respond by choosing weakly larger quantities. If an original allocation--payment pair
remains on the envelope, the bidder can obtain at least the same
quantity at a weakly higher payment, and if the envelope gives a strictly
lower price at the original quantity, truthfulness makes the bidder prefer moving
along the new linear segment until she reaches a quantity with weakly
higher payment or her affordability boundary. Either way, the
payment does not decrease.

For such a pricing function \(\pi\), convexity and \(\pi(0)=0\) imply
that the average price \(\pi(x)/x\) is nondecreasing. At every
\(x\in(0,1)\) where \(\pi\) is differentiable, a type \((v,c)\) with
\(v>0\) purchases at least \(x\) if and only if
\[
v\ge \pi'(x)
\ \text{ and }\
c\ge \frac{\pi(x)}{x}.
\]
The first inequality says the bidder's valuation is at least the
marginal price at \(x\), so her marginal utility \(v-\pi'(x)\) is
nonnegative --- otherwise reducing the quantity slightly below \(x\)
would increase her utility. The second inequality ensures that quantity
\(x\) is affordable under her unit-payment-cap constraint.

The optimal pricing function for the fully private setting remains
unresolved; the next two subsections instead characterize optimal
pricing when either the valuation or the ROI constraint is public.

\subsection{Optimal Pricing under Public Valuation}
\label{sec:public-valuation}

We now study the setting in which the valuation \(v\) is publicly known
and the ROI constraint \(r\) remains private. A bidder's type \((v,c)\)
can therefore be identified by the unit-payment cap \(c\) alone, and we
write the type simply as \(c\), suppressing the fixed valuation in
\(x_\pi(c)\) and \(p_\pi(c)\). Since the ROI constraint lies in
\([1,\bar r]\), we have \(c\in[\underline c,v]\), where
\(\underline c:=v/\bar r\). Here \(H\) and \(h\) denote the cap
distribution and density induced by \(G\) at this fixed valuation, with
support \([\underline c,v]\) and \(h>0\) on \((\underline c,v)\).

All types rank quantities by the same utility expression
\(vx-\pi(x)\), while the cap \(c\) determines affordability. For an
additional quantity \(\Delta x\), increasing the payment by
\(v\Delta x\) extracts the additional value without changing utility,
provided the larger quantity remains affordable. This motivates
considering a linear pricing segment with slope \(v\).

To make this intuition precise, we first normalize an arbitrary pricing
function so that full quantity maximizes utility when the ROI constraint
is ignored.
The following lemma shows that this normalization does not reduce any
type's payment.

\begin{lemma}\label{lem:public-valuation-normalization}
Fix \(v\in(0,\bar v)\). For every continuous, convex, and nondecreasing
pricing function \(\pi\), with \(\pi(0)=0\) and \(\pi(1)\le\bar v\),
there exists a pricing function \(\widetilde{\pi}\) with the same
properties such that
\[
1\in\argmax_{x\in[0,1]}\{vx-\widetilde{\pi}(x)\}
\]
and
\[
p_{\widetilde{\pi}}(v,c)\ge p_\pi(v,c)
\qquad
\text{for every }c\in[\underline c,v].
\]
\end{lemma}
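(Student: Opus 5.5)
Let \(x^*\) be the largest maximizer of \(vx-\pi(x)\) over \([0,1]\); it exists because \(\pi\) is continuous. If \(x^*=1\) we take \(\widetilde\pi=\pi\). Otherwise, the plan is to keep \(\pi\) up to \(x^*\) and replace it beyond \(x^*\) by a line of slope \(v\):
\[
\widetilde\pi(x)=
\begin{cases}
\pi(x), & 0\le x\le x^*,\\
\pi(x^*)+v(x-x^*), & x^*<x\le 1.
\end{cases}
\]

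We first check the required properties of \(\widetilde\pi\). Continuity is immediate, and \(\widetilde\pi(0)=0\). Convexity follows because \(\pi\) is convex and \(x^*\) maximizes \(vx-\pi(x)\), so the left derivative satisfies \(\pi'_-(x^*)\le v\). Monotonicity holds since \(v>0\). For the bound at 1, note that \(x^*\) is the largest maximizer, so for \(x>x^*\) we have \(vx-\pi(x)<vx^*-\pi(x^*)\), i.e.\ \(\pi(x)>\pi(x^*)+v(x-x^*)=\widetilde\pi(x)\). Hence \(\widetilde\pi\le\pi\) pointwise, and in particular \(\widetilde\pi(1)\le\pi(1)\le\bar v\). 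Finally, \(v x-\widetilde\pi(x)\) is constant on \([x^*,1]\) and equals the unconstrained maximum value, so \(1\) is a maximizer.

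The main step is the payment comparison for each cap \(c\in[\underline c,v]\). The bidder's choice under \(\pi\) is \(x_\pi(c)=y\) with payment \(\pi(y)\), where \(y\) is a constrained utility maximizer. We compare with the choice under \(\widetilde\pi\).

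\emph{Case \(y\le x^*\).} Here \(\widetilde\pi=\pi\) on \([0,x^*]\). On the new segment, utility is constant at the global unconstrained maximum \(U^*=vx^*-\pi(x^*)\), and this is at least the utility of anything available under \(\pi\).
\begin{itemize}
\item If \(x^*\) is affordable, i.e.\ \(\pi(x^*)\le cx^*\), then every point \(z\in[x^*,1]\) attains \(U^*\). Such a \(z\) is affordable iff \(\pi(x^*)+v(z-x^*)\le cz\). Because \(c\le v\), the gap \(cz-\widetilde\pi(z)\) is nonincreasing in \(z\), so the affordable part of the segment is an interval \([x^*,\hat z]\). The bidder picks the largest maximizer, namely \(\hat z\ge x^*\ge y\), and pays \(\widetilde\pi(\hat z)\ge\pi(x^*)\ge\pi(y)\) since \(\pi\) is nondecreasing.
\item If \(x^*\) is unaffordable, then no point of the segment is affordable: the average price \(\widetilde\pi(z)/z\) is nondecreasing by convexity, so it exceeds \(c\) throughout. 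The affordable sets under \(\pi\) and \(\widetilde\pi\) then coincide, and so do the choices.
\end{itemize}

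\emph{Case \(y>x^*\).} Then \(y\) is affordable under \(\pi\), and \(\widetilde\pi(y)\le\pi(y)\) makes \(y\) affordable under \(\widetilde\pi\) too. So the affordable part of the segment reaches at least \(y\), and therefore \(x^*\) is affordable. By the same reasoning as the first subcase above, the bidder chooses the largest affordable \(\hat z\). The claim to prove is \(\widetilde\pi(\hat z)\ge\pi(y)\), and this is the main obstacle: the new price at \(y\) is lower, so the bidder must be shown to move far enough out along the segment to pay at least as much.

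The plan for this step is as follows. If \(\hat z<1\), the ROI constraint binds, \(\widetilde\pi(\hat z)=c\hat z\). Set \(z_0\) as the quantity with \(\widetilde\pi(z_0)=\pi(y)\); this is \(\ge y\) because \(\widetilde\pi(y)\le\pi(y)\). Then
\[
\widetilde\pi(z_0)=\pi(y)\le cy\le cz_0,
\]
so \(z_0\) is affordable. Hence \(\hat z\ge z_0\), and \(\widetilde\pi(\hat z)\ge\pi(y)\).

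If instead \(z_0>1\), then \(\widetilde\pi(1)<\pi(y)\le cy\le c\), so \(1\) is affordable and \(\hat z=1\). We still need \(\widetilde\pi(1)\ge\pi(y)\), which contradicts \(z_0>1\), so this subcase must be ruled out. Utility comparison does not rule it out: \(vy-\pi(y)<U^*\) is consistent with it. The fix is to modify the construction, capping the segment so that \(\widetilde\pi(x)=\max\{\pi(x^*)+v(x-x^*),\dots\}\) is adjusted. More simply, one can choose \(\widetilde\pi=\max(\text{line},\, cx)\)-type pieces. Before finalizing, I would first check whether \(z_0>1\) can actually occur. Since \(\widetilde\pi(1)=\pi(x^*)+v(1-x^*)\) and \(\pi(y)<\pi(x^*)+v(y-x^*)\le\widetilde\pi(1)\) by the strict inequality above, we get \(\pi(y)<\widetilde\pi(1)\). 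So \(z_0\le1\) always, the problematic subcase never arises, and the argument closes.
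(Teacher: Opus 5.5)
Your construction (keep $\pi$ on $[0,x^*]$, then continue with slope $v$) is a valid alternative to the paper's. Your checks of continuity, convexity, monotonicity, $\widetilde\pi\le\pi$ and the maximizer property are correct, as is all of Case $y\le x^*$. The gap is the last step of Case $y>x^*$. There you assert $\pi(y)<\pi(x^*)+v(y-x^*)$ ``by the strict inequality above,'' but that inequality says the opposite: for $x>x^*$, $\pi(x)>\pi(x^*)+v(x-x^*)=\widetilde\pi(x)$. So you have not shown $\pi(y)\le\widetilde\pi(1)$, and $z_0\le 1$ need not exist. The subcase $\hat z=1$ with $\widetilde\pi(1)<\pi(y)$, where the payment would fall, is therefore left open, and the argument as written does not close.

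The missing idea is that Case $y>x^*$ is empty. Suppose a type with cap $c$ selected $y>x^*$ under $\pi$. Then $x^*$ would be affordable to her: trivially if $x^*=0$, and otherwise $\pi(x^*)/x^*\le\pi(y)/y\le c$ by average-price monotonicity (you note this affordability yourself). Also $vx^*-\pi(x^*)>vy-\pi(y)$, because $x^*$ is the \emph{largest} unconstrained maximizer. So $y$ would not be a constrained utility maximizer. Contrary to your remark, the utility comparison does rule the case out, and your Case $y\le x^*$ alone then completes the proof.

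This observation, that no type buys more than the largest unconstrained maximizer, is exactly the first step of the paper's proof. The paper then rescales instead of truncating: $\widetilde\pi(x)=\pi(x^*x)/x^*$ in your notation. The change of variables $x\mapsto x^*x$ identifies the two choice problems, so every payment is divided by $x^*\le 1$ at once, with no case split on affordability. Your route instead leaves types who cannot afford $x^*$ untouched. It pushes those who can along a slope-$v$ segment to their ROI boundary, which anticipates the shape of $\pi_z$ used in Theorem~\ref{thm:public-valuation-optimal-pricing}. The cost is the case analysis. In general neither modification dominates the other type by type, but both prove the lemma.
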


For a normalized pricing function \(\pi\), write \(z=\pi(1)\). The normalization gives \(v-z\ge vx-\pi(x)\), which, together with
\(\pi(x)\ge0\), implies the price lower bound
\[
\pi(x)\ge \bigl[vx-(v-z)\bigr]_+,
\]
where \([a]_+:=\max\{a,0\}\). Comparison with zero quantity gives
\(z\le v\). If \(z<\underline c\), every type buys full quantity at
price \(z\), so raising \(z\) to \(\underline c\) improves revenue; it
therefore suffices to consider \(z\in[\underline c,v]\).

For a type \(c<z\), combining the price lower bound with affordability
\(p_\pi(c)\le cx_\pi(c)\) yields
\[
p_\pi(c)\le\frac{c(v-z)}{v-c}.
\]
For \(c\ge z\), the bidder chooses full quantity and pays \(z\).

This suggests setting the price equal to the payment lower bound,
giving the pricing function
\[
\pi_z(x)=[vx-(v-z)]_+.
\]
The price is zero up to quantity \((v-z)/v\) and then rises with slope
\(v\), so utility increases up to this quantity and remains constant
thereafter. The bidder
chooses the largest affordable quantity, which is full quantity at payment
\(z\) if \(c\ge z\). If \(c<z\), the chosen quantity \(x\) satisfies
\(vx-(v-z)=cx\), giving
\[
x=\frac{v-z}{v-c},
\qquad
\pi_z(x)=\frac{c(v-z)}{v-c}.
\]
Thus \(\pi_z\)
attains the payment upper bound \(c(v-z)/(v-c)\) for every \(c<z\)
and preserves the payment \(z\) for every \(c\ge z\). It therefore
maximizes revenue among normalized pricing functions with the same
price \(z\) for full quantity.

It remains to optimize over \(z\). The corresponding expected revenue is
\[
R(z)
=
\int_{\underline c}^{z}
\frac{c(v-z)}{v-c}h(c)\,dc
+z\bigl[1-H(z)\bigr].
\]
The integral is the expected payment from types with \(c<z\), while the
second term is the payment \(z\) from types with \(c\ge z\). Increasing
\(z\) raises the payment from types that can afford full quantity, but
reduces the quantity and hence the payment of types with lower
caps. For the given
\(H\) and \(h\), \(R\) is strictly concave, and its first-order
condition determines the unique revenue-optimal choice \(c^*\).
Substituting \(c^*\) into \(\pi_z\) gives the optimal pricing
function stated below.

\begin{theorem}\label{thm:public-valuation-optimal-pricing}
Consider a single bidder with publicly known valuation
\(v\in(0,\bar v)\) and private ROI constraint \(r\).
An optimal pricing function is
\[
\pi^*(x)=\bigl[vx-(v-c^*)\bigr]_+,
\]
where $c^*\in(\underline c,v)$ is the unique solution of
\[
\int_{\underline c}^{c^*}\frac{c}{v-c}h(c)\,dc
=
1-H(c^*).
\]
\end{theorem}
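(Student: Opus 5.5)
The proof reduces the problem to a one-parameter family of pricing functions and then optimizes a scalar revenue function $R(z)$.

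\textbf{Reduction to a normalized pricing function.} Fix $v$. By Theorem~\ref{thm:pricing-reduction}, applied to $S=\{(v,c):c\in[\underline c,v]\}$, every truthful mechanism is weakly revenue-dominated by a continuous, convex, nondecreasing pricing function $\pi$ with $\pi(0)=0$ and $\pi(1)\le\bar v$. By Lemma~\ref{lem:public-valuation-normalization}, we may further assume
\[
1\in\argmax_x\{vx-\pi(x)\}.
\]
Write $z=\pi(1)$. Comparing full quantity with quantity zero gives $z\le v$. If $z<\underline c$, every type buys full quantity at price $z$, so raising $z$ to $\underline c$ weakly helps. Hence we may restrict to $z\in[\underline c,v]$.

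\textbf{Payment bound and attainment by $\pi_z$.} Fix $z$. Normalization together with $\pi\ge0$ gives
\[
\pi(x)\ge[vx-(v-z)]_+ .
\]
\begin{itemize}
\item For $c\ge z$ the type pays at most $z$, since $\pi$ is nondecreasing.
\item For $c<z$, let $x$ be the chosen quantity and $p=\pi(x)$. If $x>(v-z)/v$, affordability gives $vx-(v-z)\le p\le cx$, so $x\le(v-z)/(v-c)$. Hence
\[
p\le cx\le \frac{c(v-z)}{v-c}.
\]
If $x\le (v-z)/v$, then $p\le cx\le c(v-z)/v$, which is smaller.
\end{itemize}
The function $\pi_z$ attains both bounds, as computed before the statement. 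Therefore $\sup\mathrm{Rev}=\sup_{z}R(z)$, with $R$ as defined in the text.

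\textbf{Optimizing $z$.} Differentiate $R$ by Leibniz's rule. The boundary term of the integral at $c=z$ is $z\,h(z)$, which cancels the $-z\,h(z)$ coming from $z[1-H(z)]$. This leaves
\[
R'(z)=1-H(z)-\int_{\underline c}^{z}\frac{c}{v-c}\,h(c)\,dc .
\]
Define $\Phi(z)=\int_{\underline c}^{z}\frac{c}{v-c}h\,dc+H(z)-1$, so that $R'=-\Phi$.
\begin{itemize}
\item $\Phi$ is strictly increasing on $(\underline c,v)$, because $h>0$ there. Hence $R$ is strictly concave.
\item $\Phi(\underline c)=-1<0$.
\item As $z\uparrow v$, $H(z)-1\to0$ while the integral is positive. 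So $\Phi(z)>0$ for $z$ near $v$; the integral may even diverge, which only helps.
\end{itemize}
By the intermediate value theorem there is a unique root $c^*\in(\underline c,v)$. Since $R'>0$ to its left and $R'<0$ to its right, $c^*$ maximizes $R$ on $[\underline c,v]$. It follows that $\pi^*=\pi_{c^*}$ is optimal.

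\textbf{Main obstacle.} The delicate step is the payment bound for $c<z$, specifically keeping the two regimes for $x$ separate as done above. Secondary care is needed in the limiting behaviour of $\Phi$ near $v$. There the integrand blows up, but integrability is not needed: positivity of $\Phi$ near $v$ suffices, and the integral is finite for every $z<v$, so $R$ is well defined on $[\underline c,v)$.
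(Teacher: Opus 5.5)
Your proof is correct and follows the paper's argument essentially step for step: you normalize so that $z=\pi(1)\in[\underline c,v]$, bound each type's payment using $\pi(x)\ge vx-(v-z)$ together with affordability, observe that $\pi_z$ attains this bound, and locate the unique zero of the strictly decreasing derivative of $R$. The differences are cosmetic. Your two-regime split for $c<z$ is unnecessary, since $vx-(v-z)\le cx$ holds for every chosen $x$. Your Leibniz-rule computation of $R'$ tacitly uses continuity of $h$, which the paper avoids by writing $R(z)=z-v\int_{\underline c}^{z}\frac{z-c}{v-c}h(c)\,dc$ and applying Tonelli.
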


Figure~\ref{fig:public-value-pricing} illustrates the pricing function.
The price is zero for quantities up to \((v-c^*)/v\) and strictly
positive above this cutoff.
Under \(\pi^*\), a type \(c<c^*\) chooses quantity
\((v-c^*)/(v-c)\), while a type
\(c\ge c^*\) chooses the full quantity.
Both chosen quantities lie strictly above the cutoff, so every type makes a strictly positive
payment despite the zero-price segment.

\begin{figure}[t]
\centering
\begin{tikzpicture}[x=5.1cm,y=2.35cm]
  \draw[->] (0,0) -- (1.12,0) node[right] {quantity \(x\)};
  \draw[->] (0,0) -- (0,1.16) node[above] {price \(\pi^*(x)\)};

  \draw[very thick] (0,0) -- (0.38,0) -- (1,1);
  \draw[dashed] (1,0) -- (1,1);
  \draw[dashed] (0,1) -- (1,1);

  \draw (0.38,0.025) -- (0.38,-0.025)
    node[below] {\(\frac{v-c^*}{v}\)};
  \draw (1,0.025) -- (1,-0.025)
    node[below] {\(1\)};
  \draw (0.008,1) -- (-0.008,1)
    node[left] {\(c^*\)};

  \node[above] at (0.19,0.02) {zero price};
  \node[above left] at (0.78,0.66) {slope \(v\)};
\end{tikzpicture}
\caption{Optimal pricing with public valuation. Quantities up to
\((v-c^*)/v\) have zero price, after which the price rises linearly with slope \(v\) to the
full-quantity price \(c^*\).}
\Description{A pricing function that is zero for small allocation quantities and then rises linearly to the full-quantity price.}
\label{fig:public-value-pricing}
\end{figure}

\subsection{Optimal Pricing under Public ROI Constraint}
\label{sec:public-roi}

We now study the setting in which the ROI constraint \(r\) is publicly
known and the valuation \(v\) remains private. Since \(r\) is common
knowledge, the unit-payment cap \(c=v/r\) is determined by \(v\), so a
bidder's type can be identified by \(v\) alone. Let \(F\) denote the
cumulative distribution function of the bidder's valuation on
\([0,\bar v]\) and \(f\) its density. We say \(F\) satisfies
\emph{decreasing marginal revenue} (DMR) if \(1-F(v)-vf(v)\) is strictly
decreasing on \([0,\bar v]\). 

For a convex pricing function \(\pi\) and \(v>0\), at any \(x\in(0,1)\)
where \(\pi\) is differentiable, the bidder purchases at least \(x\) if and only if both the marginal-price condition
\(v\ge\pi'(x)\) and the affordability condition \(c\ge\pi(x)/x\)
hold. Substituting \(c=v/r\), these two conditions are equivalent to
\[
v\ge
\max\left\{\pi'(x),\,r\frac{\pi(x)}{x}\right\}.
\]
We call \(\pi'(x)\) the \emph{marginal-price cutoff} and \(r\pi(x)/x\)
the \emph{affordability cutoff}. These cutoffs capture two distinct
restrictions on demand. If the affordability cutoff is higher, a bidder
at the threshold for purchasing \(x\) exactly meets her ROI requirement
but still has positive marginal utility \(v-\pi'(x)\). If the
marginal-price cutoff is higher, the bidder at this threshold has zero
marginal utility but slack in her ROI constraint. In either case, one
condition determines the purchase threshold while the other is slack.

This motivates a pricing function for which the two cutoffs are equal
at every positive quantity,
\[
\pi'(x)=r\frac{\pi(x)}{x},\qquad x>0.
\]
Both conditions then bind at the purchase threshold, eliminating
slackness.
Solving this differential equation gives the power-law pricing function
\[
\pi_z(x)=\frac{z}{r}x^r,
\]
where \(z\ge0\) is a scale parameter and \(\pi_z(1)=z/r\).
Under DMR, the following theorem shows that the power-law family obtained
by matching the marginal-price and affordability cutoffs contains a
revenue-maximizing pricing function. The optimal scale within this family
is unique.

\begin{theorem}\label{thm:public-roi-optimal-pricing}
Consider a single bidder with publicly known ROI constraint
\(r\in[1,\bar r]\) and private valuation \(v\).
Suppose the valuation distribution \(F\) has a continuous density \(f\)
on \([0,\bar v]\) and satisfies DMR.
An optimal pricing function is
\[
\pi^{*}(x)=\frac{v^{*}}{r}x^{r},
\]
where \(v^{*}\in(0,\bar v)\) is the unique solution of
\[
\int_{0}^{1}
x^{r-1}\left[
1-F\!\left(v^{*}x^{r-1}\right)
-v^{*}x^{r-1}f\!\left(v^{*}x^{r-1}\right)
\right]\,dx
=0.
\]
\end{theorem}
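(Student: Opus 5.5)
By Theorem~\ref{thm:pricing-reduction}, it suffices to optimize over continuous, convex, nondecreasing pricing functions $\pi$ with $\pi(0)=0$. The plan is to bound the revenue of any such $\pi$ by a functional that depends only on a one-dimensional threshold curve. We then show that this bound is maximized, and attained, by a member of the power-law family $\pi_z(x)=\frac{z}{r}x^r$. Throughout, the type is $v$ with cap $c=v/r$.

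\textbf{Step 1: threshold representation.} For a convex $\pi$, set
\[
\theta(x)=\max\{\pi'(x),\,r\pi(x)/x\}
\]
at differentiability points. By the purchase criterion stated before the theorem, $x_\pi(v)\ge x$ iff $v\ge\theta(x)$. Since $\pi(x)/x$ and $\pi'$ are nondecreasing, $\theta$ is nondecreasing, and $x_\pi(v)=\sup\{x:\theta(x)\le v\}$.

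Next we bound the payment. Write $\pi(y)=\int_0^y\pi'(s)\,ds$ and use $\pi'\le\theta$. This gives $p_\pi(v)\le\int_0^{x_\pi(v)}\theta(s)\,ds$. Layer-cake over quantities then yields
\[
\operatorname{Rev}(\pi)\le\int_0^1\theta(s)\bigl[1-F(\theta(s))\bigr]\,ds .
\]

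This is still not sharp enough. The affordability part of $\theta$ also constrains $\pi$ through $\pi(s)\le s\theta(s)/r$. So instead we parametrize by $w(s):=r\pi(s)/s$. We aim for the exact identity
\[
\pi(x)=\int_0^x \bigl(\text{something in } w\bigr)\,ds .
\]
From $\pi=sw/r$ we get $\pi'=(w+sw')/r$.

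\textbf{Step 2: reduction to equal cutoffs.} We claim that, without loss, $\pi'(x)\ge r\pi(x)/x$, i.e.\ $sw'\ge (r-1)w$. This says $w(s)/s^{r-1}$ is nondecreasing.

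If instead the affordability cutoff strictly dominates on some interval, the marginal-price condition is slack there. Raising $\pi$ on that interval, while keeping $\pi(x)/x$ fixed at the interval's endpoint, would raise payments. We would carry this out via an envelope or lower-envelope modification like the one in Theorem~\ref{thm:pricing-reduction}, and check that each type's payment weakly rises.

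Once marginal price governs, the threshold is $\theta=\pi'$ and revenue is the Myerson-type integral
\[
\int_0^1 \pi'(s)\bigl[1-F(\pi'(s))\bigr]\,ds ,
\]
subject to the affordability constraint $\pi'(s)\ge r\pi(s)/s$. Write $\pi'(s)=\theta(s)$ and substitute $u=\theta(s)$. The pointwise maximizer of $u(1-F(u))$ is the monopoly price, but the constraint $\theta(s)\ge\frac{r}{s}\int_0^s\theta$ forbids a constant $\theta$ when $r>1$, because a constant would violate it.

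We conjecture the constraint binds everywhere, i.e.\ $\theta(s)=r\pi(s)/s$, which forces $\theta(s)=z s^{r-1}$. This is the main obstacle. To handle it, we would relax the constraint to the family $\theta(s)\ge \theta(1)s^{r-1}$-type bounds. The idea is to show that, under DMR, any feasible $\theta$ can be replaced by $z s^{r-1}$ with $z=\theta(1)$ or a suitable average, without lowering the objective. We expect this to use the concavity of $u(1-F(u))$ that DMR implies, namely that $1-F(u)-uf(u)$ is strictly decreasing.

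\textbf{Step 3: optimizing the scale.} On the power-law family, a type $v$ buys $x=\min\{(v/z)^{1/(r-1)},1\}$, and $\theta(s)=zs^{r-1}$. Revenue is therefore
\[
R(z)=\int_0^1 z s^{r-1}\bigl[1-F(zs^{r-1})\bigr]\,ds .
\]
Differentiating gives
\[
R'(z)=\int_0^1 s^{r-1}\bigl[1-F(zs^{r-1})-zs^{r-1}f(zs^{r-1})\bigr]\,ds .
\]
This is the stated equation with $v^*=z$, and continuity of $f$ justifies differentiation under the integral.

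Under DMR the integrand is strictly decreasing in $z$, so $R'$ is strictly decreasing. At $z\downarrow0$ we have $R'>0$, since the bracket tends to $1$. At $z=\bar v$ the bracket near $s=1$ equals $-\bar v f(\bar v)\le0$, and the whole integral is negative by DMR monotonicity once the bracket is negative from some point onward. This gives a unique root $v^*\in(0,\bar v)$, which is the maximizer of $R$ and yields $\pi^*$.
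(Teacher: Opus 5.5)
The proposal has a genuine gap at the theorem's core: you never show that some power law weakly beats an arbitrary convex $\pi$. Your power-law revenue formula and the derivative computation in Step~3 are fine. Step~2, however, is an unproved reduction followed by an explicit conjecture, and both sketched routes fail as stated.

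For the reduction to $\pi'\ge r\pi/x$, consider an interval where the affordability cutoff dominates. Buyers whose threshold lies there stop on their ROI boundary and pay $c\,x_c$, where $x_c=\max\{x:\pi(x)\le cx\}$. Raising $\pi$ there can only shrink $x_c$, so it cannot raise their payments. The profitable modification goes the other way: $px^r$ improves on linear pricing $px$ by lying \emph{below} it, which lowers average prices at small quantities and steepens the schedule.

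For the second half, your relaxation is not implied by the constraint. The condition $\pi'\ge r\pi/x$ says only that $\pi(s)/s^r$ is nondecreasing. The schedule $\pi(x)=m\max\{x-a,0\}$ with $a\ge(r-1)/r$ satisfies it, yet $\theta=0$ on $[0,a)$, so no bound $\theta(s)\ge\theta(1)s^{r-1}$ follows. Concavity of $uS(u)$ (with $S=1-F$) does not by itself solve this constrained variational problem. Note also that in Step~1 you already had the exact formula $\operatorname{Rev}(\pi)=\int_0^1\pi'(s)\,S(\theta(s))\,ds$, and you discarded information by bounding $\pi'\le\theta$.

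The missing idea is the paper's calibration by the benchmark revenue $B(z)=\int_0^1 zx^{r-1}S(zx^{r-1})\,dx$ of $\pi_z$. Differentiating $zx^rS(zx^{r-1})$ in $x$ gives $B(z)+(r-1)zB'(z)=zS(z)$, and DMR gives $S(z)-zf(z)\le rB'(z)\le S(z)$. Together these yield, for all $u,w\in[0,\bar v]$,
\[
uS(\max\{u,w\})\le wS(w)+rB'(w)(u-w).
\]
When $u\le w$ this uses $rB'(w)\le S(w)$; when $u\ge w$ it uses the DMR tangent-line bound plus $S(w)-wf(w)\le rB'(w)$. With $u=\pi'$ and $w=r\pi(x)/x$ one has $xw'=ru-w$, so the right-hand side is exactly $\frac{d}{dx}[xB(w(x))]$. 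Integrating the exact revenue formula then gives $\operatorname{Rev}(\pi)\le B(r\pi(1))$. This handles both cutoff regimes at once, so no reduction to equal cutoffs is needed. One only first rescales $\pi$ so that the top type buys full quantity, which keeps $u,w\le\bar v$.

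The right comparison scale is $z=r\pi(1)$, not $\theta(1)$. Appending a short final segment of slope $\bar v$ to $\pi_{v^*}$ moves $\theta(1)$ to $\bar v$ while changing revenue arbitrarily little, and $B(\bar v)<B(v^*)$.

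Finally, your argument for $R'(\bar v)<0$ is incomplete: a decreasing integrand that is negative only near $s=1$ can still have a positive integral. The identity at $z=\bar v$, namely $B(\bar v)+(r-1)\bar vB'(\bar v)=\bar vS(\bar v)=0$ with $B(\bar v)>0$, settles it. The case $r=1$ must be treated separately, because $(v/z)^{1/(r-1)}$ is undefined there.
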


Figure~\ref{fig:public-roi-pricing} illustrates the optimal pricing
function for a fixed \(r>1\). It starts at zero and reaches price
\(v^*/r\) at full quantity, with the theorem's condition determining
\(v^*\) for the given \(r\) and valuation distribution.
When \(r=1\), the pricing function is linear, as in a standard
posted-price auction; when \(r>1\), it is strictly convex, staying
relatively flat for small quantities and growing steeper near full
quantity. For \(r>1\), bidders with \(0<v<v^*\) choose an interior quantity,
while bidders with \(v\geq v^*\) choose the full quantity and pay
\(v^*/r\).

\begin{figure}[t]
\centering
\begin{tikzpicture}[x=5.1cm,y=2.35cm]
  \draw[->] (0,0) -- (1.12,0) node[right] {quantity \(x\)};
  \draw[->] (0,0) -- (0,1.16) node[above] {price \(\pi^*(x)\)};

  \draw[very thick,domain=0:1,samples=60,smooth,variable=\x]
    plot ({\x},{\x*\x});

  \draw[dashed] (1,0) -- (1,1);
  \draw[dashed] (0,1) -- (1,1);
  \draw (1,0.025) -- (1,-0.025) node[below] {\(1\)};
  \draw (0.012,1) -- (-0.012,1) node[left] {\(v^*/r\)};

  \node[below right] at (0.66,0.44) {\(r>1\)};
\end{tikzpicture}
\caption{Optimal pricing with a public ROI constraint \(r>1\).
The price rises from zero according to the power law
\(\pi^*(x)=(v^*/r)x^r\), reaching the full-quantity price \(v^*/r\).}
\Description{A single convex power-law pricing curve for a fixed ROI constraint greater than one, rising from zero to the full-quantity price v-star divided by r.}
\label{fig:public-roi-pricing}
\end{figure}

\section{Related Work}
\label{sec:related-work}

Related work spans auto-bidding in repeated advertising auctions and
mechanism design under ROI and/or budget
constraints. In online advertising, ROI constraints are often managed
through auto-bidding, in which an automated agent bids on an
advertiser's behalf subject to a budget or ROI constraint. Most
auto-bidding work studies repeated auctions, with the constraints
imposed jointly across all auctions rather than the one-shot setting
considered here, and mainly asks how bidders should bid or learn, what
outcomes arise, and how efficiently items are allocated. Representative
works develop bidding and learning methods under these constraints
\cite{conf/wine/AggarwalBM19,conf/kdd/HeCWPTYXZ21,conf/www/FengPW23,
conf/ijcai/LiuS23,conf/aistats/GolrezaeiJLM23,conf/www/Aggarwal2025},
or study auction efficiency and how bidders adjust bids to meet spending
limits \cite{conf/www/DengMMZ21,journal/ms/Conitzer2022,journal/or/Conitzer2022,
conf/colt/Lucier2024}. Aggarwal et al.~\cite{autobidding-survey2024}
survey these directions.

Within mechanism design, Golrezaei et al.~\cite{conf/www/GolrezaeiLL21}
study revenue maximization for utility-maximizing bidders under ex ante
ROI constraints, and Ni and Tang~\cite{conf/aaai/NiT22} characterize
incentive compatibility with general ex ante constraints. A separate
line changes the bidder objective from utility maximization to value
maximization, or allows utility and value maximizers to coexist
\cite{conf/www/WilkensCN17,conf/sigecom/BalseiroDMMZ21,conf/sigecom/BalseiroDMMZ22,
conf/ijcai/XingZZYXWC23,conf/aaai/Lv2023,conf/aaai/Bei2025}. These models
provide important foundations for auto-bidding, but differ from our
one-shot problem in bidder objective, information structure, or design
objective.
For ex post ROI constraints, Cavallo et al.~\cite{conf/www/CavalloKSW17}
study deterministic mechanisms with a common public ROI constraint,
while Tang et al.~\cite{Tang2024AAMAS} study welfare approximation
with private valuations and a private binary ROI class.

The work closest to ours is that of
Lv et al.~\cite{conf/wine/Lv2023,journal/AI/Lv2026}, who characterize
truthful auctions for utility-maximizing bidders with
ex post ROI constraints, derive an optimal randomized auction for a
single bidder under DMR, and analyze optimal deterministic auctions
and revenue approximation for multiple bidders. All these results
assume public ROI constraints, so their setting is single-dimensional.
In the fully private setting, where both valuations and ROI constraints
are private, we characterize truthful mechanisms, construct
asymptotically optimal deterministic mechanisms, and bound the revenue
gap between deterministic and randomized truthful mechanisms. These
results generalize their characterization and revenue analysis to the
two-dimensional setting. In particular, we show that applying Myerson's optimal auction
directly to unit-payment caps in the fully private setting need not be
pointwise DSIC. Under regularity, its revenue is a tight upper bound
for deterministic truthful mechanisms, rather than an optimum
attained by a truthful mechanism.
For a single bidder with a public
ROI constraint, our optimal pricing function under DMR coincides with
their optimal auction after translating their value variable into our
unit-payment cap. We provide a substantially different derivation with
clear economic intuition for why the optimal price follows a power law.
In our convex pricing formulation, matching the marginal-price cutoff
with the ROI affordability cutoff yields this pricing rule,
complementing their allocation-based characterization.
We also prove that convex pricing functions suffice for revenue
maximization in the single-bidder setting when both the valuation and
the ROI constraint are private, and derive optimal pricing in the
public-valuation setting.

Earlier work on auction design with budget constraints is also related
\cite{journal/RES/CheGale98,CHE2000,
journal/EL/LaffontRobert96,journal/RES/BenoitKrishna01,
journal/JET/PaiVohra14,conf/sigecom/DevanurW17}.
Particularly relevant is Che and Gale~\cite{CHE2000}, who derive the
optimal pricing function for a single buyer with
a private valuation and budget. Our pricing-function results extend this
line of work to ROI constraints.
A budget constraint $b$ requires $p\leq b$, independently of the allocation $x$,
whereas an ROI constraint requires $p\leq (v/r)x$, so the maximum
permitted payment scales with the allocation. This allocation
dependence makes incentive analysis and revenue maximization more
demanding under ROI constraints.
More broadly, our problem connects to multidimensional mechanism design
\cite{journal/JET/MCAFEE1988,journal/Econometrica/Krishna01,
journal/JME/ROCHET1985}, where our mechanism-restriction technique,
used to reduce dimensionality in deriving the payment identity,
may be a useful tool.

\section{Conclusion}

We studied truthful auctions for bidders with private valuations and
ROI constraints. Representing each bidder by her valuation and
unit-payment cap gives a characterization of truthful mechanisms in
which the allocation function uniquely determines the payment function.
For multiple bidders, our \(\sigma\)-increment mechanisms approach the
best expected revenue among deterministic truthful mechanisms under
regularity. Their limiting revenue is also at least a \(1/\bar r\)
fraction of the best expected revenue among all truthful mechanisms,
including randomized ones.
For a single bidder, every truthful mechanism can be replaced by a
convex pricing function without reducing any type's payment.
With a public valuation, the optimal pricing function is zero up to a
quantity threshold and then increases linearly. With a public ROI
constraint, the optimal pricing function follows a power law.
Finding the optimal pricing function when both the valuation and the
ROI constraint are private remains open. Our single-bidder result
reduces this problem to choosing a continuous, convex, and nondecreasing
pricing function.

\appendix
\section{Proofs for Section~\ref{sec:truthful-mechanisms} (Truthful Mechanisms)}
\label{app:truthful-mechanisms}

\subsection*{Proof of Lemma~\ref{lem:unit-payment-monotonicity}}

Let $t=(v,c)$ and $t'=(v',c')$ be any two types in $T$.

For the first claim, suppose that $q(t)<q(t')$. Since $q(t')\le c'$ by IR,
reporting $t$ is safe for type $t'$. If $x(t)\ge x(t')$, then
\[
u(t;t')=(v'-q(t))x(t)>(v'-q(t'))x(t')=u(t';t'),
\]
contradicting DSIC. Hence $x(t)<x(t')$.

For the second claim, first suppose that $x(t)<x(t')$. If
$c<q(t')$, then $q(t)\leq c<q(t')$ by IR. Otherwise,
reporting $t'$ is safe for type $t$. If $q(t)>q(t')$, then
\[
u(t;t)=(v-q(t))x(t)<(v-q(t'))x(t')=u(t';t),
\]
contradicting DSIC. Hence $q(t)\leq q(t')$.

It remains to consider $x(t)=x(t')$. If the two unit payments were
unequal, the first claim, applied in the appropriate order, would force
a strict inequality between the allocations. Hence $q(t)=q(t')$,
completing the proof.
\qed

\subsection*{Proof of Theorem~\ref{thm:Allocation}}

\noindent
\textbf{(A1)}: By DSIC, a bidder with type $t'$ prefers reporting
truthfully over $t$. Since $q(t) \leq c'$, reporting $t$ is safe, which
yields
\[
v'x(t')-p(t')\geq v'x(t)-p(t).
\]
Rearranging gives
\begin{equation}
p(t)-p(t')\geq v'\bigl(x(t)-x(t')\bigr).
\label{eq:ic_t_prime}
\end{equation}
For a bidder with type $t$: if reporting $t'$ is unsafe (i.e.,
$q(t') > c$), then since $q(t) \leq c$ by IR, we have $q(t) < q(t')$,
and Lemma~\ref{lem:unit-payment-monotonicity} gives $x(t) < x(t')$. If
instead reporting $t'$ is safe, DSIC requires $u(t;t) \geq u(t';t)$,
which gives
\[
vx(t)-p(t)\geq vx(t')-p(t').
\]
Rearranging gives
\begin{equation}
v\bigl(x(t)-x(t')\bigr)\geq p(t)-p(t').
\label{eq:ic_t}
\end{equation}
Combining \eqref{eq:ic_t_prime} and \eqref{eq:ic_t} gives
$v(x(t)-x(t'))\geq v'(x(t)-x(t'))$, so
$(v - v')(x(t) - x(t')) \geq 0$. Since $v < v'$, this forces
$x(t) \leq x(t')$.

\noindent
\textbf{(A2)}: IR implies $q(t') \leq c'$, and $c' < q(t)$ then gives
$q(t')\leq c'<q(t)$, so $q(t')<q(t)$.
Lemma~\ref{lem:unit-payment-monotonicity} then gives $x(t')<x(t)$.

\noindent
\textbf{(A3)}: Since IR guarantees \(q(v,c) \leq c\), it follows from
$c\le c'$ that \(q(v,c) \leq c'\); \textbf{(A1)} then gives
\(x(v,c) \leq x(v',c')\).
\qed

\subsection*{Proof of Lemma~\ref{lem:allocation_A_E}}

Let the IR mechanism $(x,p)$ satisfy \textbf{(A1)}. In an IR mechanism,
\textbf{(A3)} follows from \textbf{(A1)}, so $x(\cdot,c)$ and
$x(\cdot,c')$ are nondecreasing and hence continuous almost everywhere.

Take any $c<c'$ and an interior value $v$ at which $x(\cdot,c')$ is
continuous. For every sufficiently small $\delta>0$, \textbf{(A3)} gives
\[
x(v,c)\le x(v+\delta,c').
\]
Letting $\delta\downarrow0$ proves \textbf{(A4)}.

Now suppose in addition that $v$ is a continuity point of $x(\cdot,c)$
and that $q(v,c')\le c$. For every sufficiently small $\delta>0$,
\textbf{(A1)}, applied to $(v,c')$ and $(v+\delta,c)$, gives
\[
x(v,c')\le x(v+\delta,c).
\]
Letting $\delta\downarrow0$ gives the reverse inequality
$x(v,c')\le x(v,c)$, which together with \textbf{(A4)} proves
\textbf{(A5)} almost everywhere.
\qed

\subsection*{Proof of Theorem~\ref{thm:Payment}}

Since $(x,p)$ is truthful on $T$, it remains truthful when restricted to
either of the one-dimensional subsets $T^=$ and $T^c$ of the type space.
We first determine $p(c,c)$ using types in $T^=$, and then determine
$p(v,c)-p(c,c)$ using types in $T^c$, so that combining the two
expressions gives $p(v,c)$.

On $T^=$, consider a bidder with type $(s,s)$ who reports $(s',s')$.
If $p(s',s')\leq s x(s',s')$, the report satisfies her ROI constraint,
so DSIC implies $s x(s,s)-p(s,s)\geq s x(s',s')-p(s',s')$.
Otherwise, $s x(s',s')-p(s',s')<0$, whereas IR ensures that she
obtains utility $s x(s,s)-p(s,s)\geq0$ by reporting truthfully, so the
same inequality still holds. Interchanging $s$ and $s'$, we therefore
have, for every $s,s'\in[0,\bar v]$,
\[
s x(s,s)-p(s,s)\geq s x(s',s')-p(s',s'),\qquad s' x(s',s')-p(s',s')\geq s' x(s,s)-p(s,s).
\]
Treating $x(s,s)$ and $p(s,s)$ as functions of the single variable $s$,
these inequalities have the same form as those used to derive Myerson's
payment identity. Following the same proof steps and using $p(0,0)=0$,
we obtain
\begin{equation}\label{eq:diagonal}
    p(c,c)=c x(c,c)-\int_0^c x(z,z)\,dz.
\end{equation}

On $T^c$, the unit-payment cap is fixed. Since IR gives
$p(v',c)\leq c x(v',c)$ for every $(v',c)\in T^c$, deviations within
this subset never violate the ROI constraint, so the bidder behaves as
a standard utility maximizer with private value $v$. Hence
\[
vx(v,c)-p(v,c)\ge vx(v',c)-p(v',c),\qquad v'x(v',c)-p(v',c)\ge v'x(v,c)-p(v,c).
\]
Treating $x(v,c)$ and $p(v,c)$ as functions of the single variable $v$
with $c$ fixed, we again follow the proof steps used to derive Myerson's
payment identity, now starting at value $c$, to obtain
\begin{equation}\label{eq:horizontal}
    p(v,c)-p(c,c)
    =v x(v,c)-c x(c,c)-\int_c^v x(z,c)\,dz.
\end{equation}
Substituting \eqref{eq:diagonal} into \eqref{eq:horizontal} yields
\[
    p(v,c)
    =v x(v,c)-\int_0^c x(z,z)\,dz-\int_c^v x(z,c)\,dz,
\]
as claimed.
\qed

\subsection*{Proof of Theorem~\ref{thm:characterisation}}
\label{app:proof-characterisation}

We first establish two auxiliary lemmas.

\begin{lemma}\label{lem:unit-payment-monotone}
If a mechanism $(x,p)$ satisfies \emph{\textbf{(A1)}},
\emph{\textbf{(P)}}, and \emph{\textbf{(IR)}}, then
\begin{enumerate}
    \item if $v<v'$, then $q(v,c)\leq q(v',c)$; and
    \item if $c<c'$, then $q(c,c)\leq q(c',c')$.
\end{enumerate}
\end{lemma}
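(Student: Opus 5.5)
Both claims come from the payment identity \textbf{(P)}. Along a one-dimensional slice it gives the Myerson form, and the unit payment of a monotone allocation is an average of the allocation weighted toward its lower part. We do not invoke truthfulness here, only \textbf{(A1)}, \textbf{(P)} and \textbf{(IR)}.

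\emph{Claim 2 (diagonal).} By \textbf{(A3)}, which follows from \textbf{(A1)} and \textbf{(IR)} as shown earlier, $s\mapsto x(s,s)$ is nondecreasing. \textbf{(P)} at $v=c$ gives $p(c,c)=cx(c,c)-\int_0^c x(z,z)\,dz$. Whenever $x(c,c)>0$, this yields
\[
q(c,c)=c-\frac{1}{x(c,c)}\int_0^c x(z,z)\,dz=\frac{1}{x(c,c)}\int_0^c\bigl(x(c,c)-x(z,z)\bigr)dz .
\]
Fix $c<c'$. If $x(c,c)=0$ then $q(c,c)=0\le q(c',c')$ trivially, since payments are nonnegative. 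Otherwise $x(c',c')\ge x(c,c)>0$, and we show $q(c,c)\le q(c',c')$ by writing $q(s,s)=\int_0^s\bigl(1-x(z,z)/x(s,s)\bigr)dz$. Going from $s=c$ to $s=c'$, the integrand $1-x(z,z)/x(s,s)$ weakly increases pointwise because $x(s,s)$ weakly increases, and it stays nonnegative. The range of integration also lengthens. Hence $q(c,c)\le q(c',c')$.

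\emph{Claim 1 (fixed cap).} By \textbf{(A3)} with $c=c'$, $x(\cdot,c)$ is nondecreasing in $v$. Define the concatenated allocation along the path from $(0,0)$ to $(c,c)$ and then to $(v,c)$:
\[
y(z)=\begin{cases}x(z,z), & z\le c,\\ x(z,c), & z>c.\end{cases}
\]
Then \textbf{(P)} reads $p(v,c)=vy(v)-\int_0^v y(z)\,dz$, and in particular $y(v)=x(v,c)$ for $v\ge c$. The identical computation as in Claim~2 gives $q(v,c)\le q(v',c)$ for $c\le v<v'$, provided $y$ is nondecreasing on $[0,v']$.

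The main obstacle is the junction at $z=c$. Nondecrease within each piece is already given, so what must be checked is that $x(z,z)\le x(w,c)$ for $z\le c<w$. Two routes handle this:
\begin{itemize}
\item For $z<c$, apply \textbf{(A1)} to $t=(z,z)$ and $t'=(w,c)$. This needs $q(z,z)\le c$, which holds because \textbf{(IR)} gives $q(z,z)\le z<c$. So $x(z,z)\le x(w,c)$.
\item At $z=c$ itself, only the values $y(z)$ for $z<c$ enter the integral, so this point matters only as a single point and has measure zero. Alternatively, \textbf{(A1)} with $v=c<w$ and $q(c,c)\le c$ by \textbf{(IR)} gives $x(c,c)\le x(w,c)$ directly.
\end{itemize}
The zero-allocation case is again trivial. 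This settles monotonicity of $y$ and completes Claim~1.
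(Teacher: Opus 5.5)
Your proof is correct and is essentially the paper's argument: both read the unit payment off \textbf{(P)} and use the monotonicity supplied by \textbf{(A3)}. The paper writes $p(t')-p(t)=v\bigl(x(t')-x(t)\bigr)+\int_v^{v'}\bigl(x(t')-x(z,c)\bigr)\,dz$ and invokes $q(t)\le v$, which under \textbf{(P)} is the same inequality as your $\int_0^v y(z)\,dz\ge 0$. One small remark: the junction check at $z=c$ is unnecessary. Your pointwise comparison on $[0,v]$ needs only $y\ge 0$, and nonnegativity of the integrand is needed only on $[v,v']$, which lies inside the fixed-cap slice; this is why the diagonal integral simply cancels in the paper's version.
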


For the first claim, let $t=(v,c)$ and $t'=(v',c)$ with $v<v'$.
By \textbf{(A1)} and IR, $x(t')\geq x(t)$. If $x(t)=0$, then
$q(t)=0\leq q(t')$, so assume $x(t')\geq x(t)>0$.
Using \textbf{(P)} and $x(z,c)\leq x(t')$ for $z\in[v,v']$, we have
\[
p(t')-p(t)=v\bigl(x(t')-x(t)\bigr)+\int_v^{v'}\bigl(x(t')-x(z,c)\bigr)\,dz\geq v\bigl(x(t')-x(t)\bigr).
\]
Since $q(t)\leq c\leq v$ by IR, it follows that
\[
q(t')-q(t)\geq\frac{(v-q(t))\bigl(x(t')-x(t)\bigr)}{x(t')}\geq0.
\]

For the second claim, let $t=(c,c)$ and $t'=(c',c')$ with $c<c'$.
The zero-allocation case is handled as above. Otherwise,
\textbf{(A1)}, IR, and \textbf{(P)} similarly give
\[
p(t')-p(t)=c\bigl(x(t')-x(t)\bigr)+\int_c^{c'}\bigl(x(t')-x(z,z)\bigr)\,dz\geq c\bigl(x(t')-x(t)\bigr).
\]
The same calculation, with $c$ in place of $v$, yields $q(t)\leq q(t')$.
\qed

\begin{lemma}\label{lem:diagonal-allocation}
If a mechanism $(x,p)$ satisfies \emph{\textbf{(A1)}} and
\emph{\textbf{(IR)}}, then, for any fixed cap $c$ and almost every $v$
with $(v,c)\in T$:
\begin{description}
    \item[(A6)] $x(v,c)\leq x(v,v)$.
    \item[(A7)] If $q(v,v)\leq c$, then $x(v,c)=x(v,v)$.
\end{description}
\end{lemma}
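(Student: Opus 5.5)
This is Lemma~\ref{lem:allocation_A_E} specialised to the pair of caps $c$ and $c'=v$, so that $(v,c')$ is the diagonal type $(v,v)$. The catch is that the cap $c'=v$ now moves with $v$. The plan is therefore to redo the limiting argument along the diagonal, using the monotone map $z\mapsto x(z,z)$. By \textbf{(A3)}, which follows from \textbf{(A1)} and \textbf{(IR)}, this map is nondecreasing on $[0,\bar v]$, so it is continuous except at countably many points. Fix a cap $c$. For each fixed $c$, \textbf{(A3)} also makes $x(\cdot,c)$ nondecreasing, hence continuous at almost every $v$. We restrict to values $v$ with $c\le v$ at which both $z\mapsto x(z,z)$ and $x(\cdot,c)$ are continuous. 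This excludes only a null set.

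\textbf{(A6).} Take such a $v$ with $c<v$; the case $c=v$ is trivial. For small $\delta>0$, compare $(v,c)$ with $(v+\delta,v+\delta)$. The valuations satisfy $v<v+\delta$ and the caps satisfy $c\le v+\delta$, so \textbf{(A3)} gives $x(v,c)\le x(v+\delta,v+\delta)$. Letting $\delta\downarrow0$ and using continuity of the diagonal allocation at $v$ gives $x(v,c)\le x(v,v)$. If $v=\bar v$, no larger diagonal type exists, but $\{\bar v\}$ is a single point and is covered by ``almost every''.

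\textbf{(A7).} Now suppose in addition that $q(v,v)\le c$. For small $\delta>0$, apply \textbf{(A1)} with $t=(v,v)$ and $t'=(v+\delta,c)$. The valuations satisfy $v<v+\delta$, and $q(t)\le c=c'$, so $x(v,v)\le x(v+\delta,c)$. This requires $(v+\delta,c)\in T$, that is, $v+\delta\le\min\{\bar r c,\bar v\}$. That holds whenever $v<\bar r c$ and $v<\bar v$. The remaining endpoint values $v=\bar r c$ and $v=\bar v$ form a finite set for fixed $c$, so they are again a null exception. Letting $\delta\downarrow0$ and using continuity of $x(\cdot,c)$ at $v$ gives $x(v,v)\le x(v,c)$, and combined with (A6) this yields equality.

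The main point to handle carefully is that the exceptional null set must not depend on anything beyond $c$. It is the union of the discontinuity set of the diagonal map, which is countable and the same for every $c$, the discontinuity set of $x(\cdot,c)$, and the endpoints $\bar r c$ and $\bar v$, so it is countable for each fixed $c$. As in Lemma~\ref{lem:allocation_A_E}, neither \textbf{(P)} nor DSIC is used beyond \textbf{(A1)} and \textbf{(IR)}.
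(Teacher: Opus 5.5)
Your proposal is correct and follows essentially the same argument as the paper. Both use \textbf{(A3)} to make $z\mapsto x(z,z)$ and $x(\cdot,c)$ monotone, compare $(v,c)$ with $(v+\delta,v+\delta)$ via \textbf{(A3)} and $(v,v)$ with $(v+\delta,c)$ via \textbf{(A1)}, and let $\delta\downarrow0$ at common continuity points. Your explicit handling of the endpoints $v=\bar v$ and $v=\bar r c$ spells out what the paper compresses into restricting to interior continuity points.
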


\begin{proof}
\textbf{(IR)} and \textbf{(A1)} imply \textbf{(A3)}, which makes both
$x(v,v)$ and $x(v,c)$ nondecreasing in $v$, with $c$ fixed, and hence
continuous almost everywhere. Thus almost every $v$ with $(v,c)\in T$
is an interior point where both functions are continuous. At any such $v$,
\textbf{(A3)} gives
$x(v,c)\leq x(v+\delta,v+\delta)$ for sufficiently small $\delta>0$;
letting $\delta\downarrow0$ yields
\[
x(v,c)\le x(v,v),
\]
which proves \textbf{(A6)}. If $q(v,v)\le c$, applying \textbf{(A1)} to
$(v,v)$ and $(v+\delta,c)$ and letting $\delta\downarrow0$ gives the
reverse inequality, proving \textbf{(A7)}.
Both claims therefore hold for almost every $v$ with $(v,c)\in T$.
\end{proof}

We now prove Theorem~\ref{thm:characterisation}.

The necessity of \textbf{(A1)} and \textbf{(P)} follows from
Theorems~\ref{thm:Allocation} and~\ref{thm:Payment} respectively, and
the necessity of \textbf{(IR)} is immediate.

For sufficiency, suppose $(x,p)$ satisfies \textbf{(A1)}, \textbf{(P)},
and \textbf{(IR)}. Fix types $t=(v,c)$ and $t'=(v',c')$.
The measure-zero exceptions in Lemmas~\ref{lem:allocation_A_E}
and~\ref{lem:diagonal-allocation} do not affect the integrals below.
If reporting $t'$
violates the ROI constraint of type $t$, then $u(t';t)=-\infty$ and
there is nothing to prove, so assume $q(t')\le c$. Write
$\Delta:=u(t;t)-u(t';t)$. By \textbf{(P)}, it suffices to show
\[
\Delta=\int_0^c x(z,z)\,dz+\int_c^v x(z,c)\,dz+(v'-v)x(t')-\int_0^{c'}x(z,z)\,dz-\int_{c'}^{v'}x(z,c')\,dz\ge0.
\]
We consider the possible orderings of $c,c',v$, and $v'$.
In each case, we first cancel the common part of the two integrals
starting at zero, then split the remaining integrals at the ordered
endpoints. We also write $(v'-v)x(t')$ as the integral of the constant
$x(t')$ over $[v,v']$ when $v\le v'$, or as its negative over
$[v',v]$ when $v>v'$.

\noindent
\textbf{Case 1: $c\le v\le c'\le v'$.}
Canceling the diagonal integrals over $[0,c]$ and then splitting at $v$
gives
\[
\begin{aligned}
\Delta
&=(v'-v)x(t')-\int_c^v x(z,z)\,dz-\int_v^{c'}x(z,z)\,dz+\int_c^v x(z,c)\,dz-\int_{c'}^{v'}x(z,c')\,dz\\
&=\int_v^{c'}\bigl(x(t')-x(z,z)\bigr)\,dz+\int_{c'}^{v'}\bigl(x(t')-x(z,c')\bigr)\,dz+\int_c^v\bigl(x(z,c)-x(z,z)\bigr)\,dz.
\end{aligned}
\]
The first two integrands are nonnegative by \textbf{(A3)}. Moreover,
\textbf{(A3)} and Lemma~\ref{lem:unit-payment-monotone} imply
$q(z,z)\le q(t')\le c$ for $z\in[c,v]$. Property \textbf{(A7)} therefore gives
$x(z,c)=x(z,z)$ for almost every $z\in[c,v]$.
Hence $\Delta\ge0$.

\noindent
\textbf{Case 2: $c\le c'\le v\le v'$.}
Canceling the diagonal integrals over $[0,c]$, then splitting the
integral of $x(z,c)$ at $c'$ and that of $x(z,c')$ at $v$, gives
\[
\begin{aligned}
\Delta
&=(v'-v)x(t')-\int_c^{c'}x(z,z)\,dz+\int_c^{c'}x(z,c)\,dz+\int_{c'}^v x(z,c)\,dz-\int_{c'}^v x(z,c')\,dz-\int_v^{v'}x(z,c')\,dz\\
&=(v'-v)x(t')+\int_c^{c'}\bigl(x(z,c)-x(z,z)\bigr)\,dz+\int_{c'}^v\bigl(x(z,c)-x(z,c')\bigr)\,dz-\int_v^{v'}x(z,c')\,dz.
\end{aligned}
\]
Because $q(t')\le c$, \textbf{(A3)} and
Lemma~\ref{lem:unit-payment-monotone} imply the relevant affordability
conditions. Property \textbf{(A7)} makes the first difference zero almost
everywhere, while \textbf{(A5)}
for the fixed pair $(c,c')$ makes the second difference zero almost
everywhere. Consequently,
\[
\Delta=\int_v^{v'}\bigl(x(t')-x(z,c')\bigr)\,dz\ge0
\]
by \textbf{(A3)}.

\noindent
\textbf{Case 3: $c\le c'\le v'<v$.}
Canceling the diagonal integrals over $[0,c]$, then splitting the
integral of $x(z,c)$ at $c'$ and $v'$, gives
\[
\begin{aligned}
\Delta
&=(v'-v)x(t')-\int_c^{c'}x(z,z)\,dz+\int_c^{c'}x(z,c)\,dz+\int_{c'}^{v'}x(z,c)\,dz+\int_{v'}^v x(z,c)\,dz-\int_{c'}^{v'}x(z,c')\,dz\\
&=\int_c^{c'}\bigl(x(z,c)-x(z,z)\bigr)\,dz+\int_{c'}^{v'}\bigl(x(z,c)-x(z,c')\bigr)\,dz+\int_{v'}^v\bigl(x(z,c)-x(t')\bigr)\,dz.
\end{aligned}
\]
As in Case 2, \textbf{(A7)} and \textbf{(A5)} make the first two
integrals zero. Hence
\[
\Delta=\int_{v'}^v\bigl(x(z,c)-x(t')\bigr)\,dz.
\]
For $z\in(v',v]$, the assumption $q(t')\le c$ and
\textbf{(A1)} imply $x(z,c)\ge x(t')$. The omitted endpoint has measure
zero, so $\Delta\ge0$.

\noindent
\textbf{Case 4: $c'\le v'\le c\le v$.}
Canceling the diagonal integrals over $[0,c']$ and then splitting at
$v'$ gives
\[
\begin{aligned}
\Delta
&=-(v-v')x(t')+\int_{c'}^{v'}x(z,z)\,dz+\int_{v'}^c x(z,z)\,dz+\int_c^v x(z,c)\,dz-\int_{c'}^{v'}x(z,c')\,dz\\
&=\int_{c'}^{v'}\bigl(x(z,z)-x(z,c')\bigr)\,dz+\int_{v'}^c\bigl(x(z,z)-x(t')\bigr)\,dz+\int_c^v\bigl(x(z,c)-x(t')\bigr)\,dz.
\end{aligned}
\]
The second and third integrands are nonnegative almost everywhere by
\textbf{(A3)}; only endpoints at which the two valuations coincide can be
excluded. The first is nonnegative almost everywhere by \textbf{(A6)}.
Hence $\Delta\ge0$.

\noindent
\textbf{Case 5: $c'\le c\le v'\le v$.}
Canceling the diagonal integrals over $[0,c']$, then splitting the
integral of $x(z,c)$ at $v'$ and that of $x(z,c')$ at $c$, gives
\[
\begin{aligned}
\Delta
&=(v'-v)x(t')+\int_{c'}^c x(z,z)\,dz+\int_c^{v'}x(z,c)\,dz+\int_{v'}^v x(z,c)\,dz-\int_{c'}^c x(z,c')\,dz-\int_c^{v'}x(z,c')\,dz\\
&=\int_{c'}^c\bigl(x(z,z)-x(z,c')\bigr)\,dz+\int_c^{v'}\bigl(x(z,c)-x(z,c')\bigr)\,dz+\int_{v'}^v\bigl(x(z,c)-x(t')\bigr)\,dz.
\end{aligned}
\]
The first integrand is nonnegative almost everywhere by \textbf{(A6)};
the second is nonnegative
almost everywhere by \textbf{(A4)} for the fixed pair $(c',c)$; and the
third is nonnegative almost everywhere by \textbf{(A1)}, with only the
endpoint $z=v'$ excluded. Hence $\Delta\ge0$.

\noindent
\textbf{Case 6: $c'\le c\le v<v'$.}
Canceling the diagonal integrals over $[0,c']$, then splitting the
integral of $x(z,c')$ at $c$ and $v$, gives
\[
\begin{aligned}
\Delta
&=(v'-v)x(t')+\int_{c'}^c x(z,z)\,dz+\int_c^v x(z,c)\,dz-\int_{c'}^c x(z,c')\,dz-\int_c^v x(z,c')\,dz-\int_v^{v'}x(z,c')\,dz\\
&=\int_{c'}^c\bigl(x(z,z)-x(z,c')\bigr)\,dz+\int_c^v\bigl(x(z,c)-x(z,c')\bigr)\,dz+\int_v^{v'}\bigl(x(t')-x(z,c')\bigr)\,dz.
\end{aligned}
\]
The first integrand is nonnegative almost everywhere by \textbf{(A6)},
the second is nonnegative almost everywhere by \textbf{(A4)} for the
fixed pair $(c',c)$, and the last is nonnegative pointwise by
\textbf{(A3)}. Thus $\Delta\ge0$ in every case, establishing DSIC;
condition \textbf{(IR)} establishes individual rationality.
\qed

\subsection*{Proof of Lemma~\ref{lem:diagonal-determination}}

Fix a type $t=(v,c)\in T\setminus T^{=}$ with $v<\bar v$.
Property \textbf{(A3)} implies that the diagonal allocation is
nondecreasing and gives $x(c,c)\le x(v,c)$.
For every sufficiently small $\delta>0$, the same property gives
$x(v,c)\le x(v+\delta,v+\delta)$. Diagonal continuity therefore yields
\[
x(c,c)\le x(v,c)\le x(v,v).
\]
By continuity, there is a $z^*\in[c,v]$ such that
$x(z^*,z^*)=x(v,c)$. Lemma~\ref{lem:unit-payment-monotonicity} and IR give
\[
q(z^*,z^*)=q(v,c)\le c.
\]
Thus $z^*$ is feasible in the supremum in
\eqref{eq:diagonal-determination}, so that supremum is at least $x(v,c)$.

Conversely, let $z\in[c,v]$ satisfy $q(z,z)\le c$.
If $z<v$, property \textbf{(A1)} directly gives $x(z,z)\le x(v,c)$.
If $z=v$, choose diagonal types $(z_k,z_k)$ with $c\le z_k<v$ and
$z_k\uparrow v$. Diagonal monotonicity and
Lemma~\ref{lem:unit-payment-monotonicity} imply
$q(z_k,z_k)\le q(v,v)\le c$. Applying \textbf{(A1)} gives
$x(z_k,z_k)\le x(v,c)$; taking the limit and using diagonal continuity
then gives $x(v,v)\le x(v,c)$.
Every allocation in the supremum is therefore at most $x(v,c)$,
which proves the identity.
\qed

\section{Proofs for Section~\ref{sec:deterministic-mechanisms} (Deterministic Mechanisms)}
\label{app:deterministic-mechanisms}

\subsection*{Proof of Theorem~\ref{thm:characterisation-deterministic}}

Suppose first that $(x,p)$ is deterministic and truthful.

For \textbf{(DA1)}, let $c<c'$ and suppose $x(v',c')=0$. If $x(v,c)=1$,
then IR gives $p(v,c)\le c<c'$, so type $(v',c')$ can safely report
$(v,c)$. The deviation yields utility $v'-p(v,c)>0$, contradicting
DSIC. Hence $x(v,c)=0$.

For \textbf{(DA2)}, take $v,v'>c$ and suppose without loss of generality
that $v<v'$. Property \textbf{(A3)} gives $x(v,c)\le x(v',c)$. If the
inequality were strict, determinism would give $x(v,c)=0$ and
$x(v',c)=1$; type $(v,c)$ could then safely report $(v',c)$, and IR
would give $p(v',c)\le c<v$, making the deviation strictly profitable.
Thus $x(v,c)=x(v',c)$.

To prove \textbf{(DP)}, first suppose $x(v,c)=0$. By
\textbf{(DA1)}, $x(z,z)=0$ for every $z<c$, and hence both sides of
\textbf{(DP)} are zero. If $x(v,c)=1$, then \textbf{(P)} and
\textbf{(DA2)} give
\[
p(v,c)=v-\int_0^c x(z,z)\,dz-\int_c^v x(z,c)\,dz=v-\int_0^c x(z,z)\,dz-(v-c)=c-\int_0^c x(z,z)\,dz,
\]
which is \textbf{(DP)}.

It remains to verify \textbf{(DA3)}. Suppose some off-diagonal type
wins, and let $\kappa$ be the infimum of all caps at which such a type
wins. Conditions \textbf{(DA1)} and \textbf{(DA2)} imply that every
off-diagonal type with cap above $\kappa$ wins. Suppose, toward a
contradiction, that the off-diagonal types with cap $\kappa$ lose.
Property \textbf{(DA1)} then implies $x(z,z)=0$ for $z<\kappa$ and
$x(z,z)=1$ for $z>\kappa$. By \textbf{(DP)}, every winning report with
cap above $\kappa$ pays $\kappa$, so a type $(v,\kappa)$ with $v>\kappa$
could report such a winning type, safely pay $\kappa$, and obtain
positive utility, contradicting DSIC. Hence every off-diagonal type
with cap $\kappa$ wins.

Conversely, suppose a deterministic mechanism satisfies
\textbf{(DA1)}--\textbf{(DA3)} and \textbf{(DP)}. Conditions
\textbf{(DA1)}--\textbf{(DA3)} imply a threshold $\kappa\in[0,\bar v]$
such that the off-diagonal allocation is zero below $\kappa$ and one at
and above $\kappa$. If no off-diagonal type wins, take $\kappa=\bar v$;
\textbf{(DA1)} then rules out every diagonal winner except possibly
$(\bar v,\bar v)$. The diagonal allocation is likewise zero below
$\kappa$ and one above $\kappa$, while $x(\kappa,\kappa)$ may be either
zero or one. Consequently,
\[
\int_0^c x(z,z)\,dz=\max\{c-\kappa,0\},
\]
and \textbf{(DP)} gives $p(v,c)=\kappa$ for every winner and zero for every
loser, with $p(\kappa,\kappa)=\kappa x(\kappa,\kappa)$.

By Theorem~\ref{thm:characterisation}, to establish truthfulness it
suffices to verify \textbf{(A1)}, \textbf{(P)}, and \textbf{(IR)}.
Condition \textbf{(DP)} implies \textbf{(IR)}.
For \textbf{(A1)}, consider $t=(v,c)$ and
$t'=(v',c')$ with $v<v'$ and $q(t)\le c'$. The claim is immediate if $t$
loses. If $t$ wins, then $q(t)=\kappa$, so $c'\ge \kappa$.
If $c'>\kappa$, the threshold structure gives $x(t')=1$.
If $c'=\kappa$, then $v'>v\ge \kappa$, so \textbf{(DA3)} again gives
$x(t')=1$.
Thus $x(t)\le x(t')$.

Finally, to verify \textbf{(P)}, note that \textbf{(DA2)} gives
$x(z,c)=x(v,c)$ for $c<z<v$. Hence
\[
\begin{aligned}
v x(v,c)-\int_0^c x(z,z)\,dz-\int_c^v x(z,c)\,dz
&=v x(v,c)-\int_0^c x(z,z)\,dz-(v-c)x(v,c)\\
&=c x(v,c)-\int_0^c x(z,z)\,dz=p(v,c),
\end{aligned}
\]
where the last equality is \textbf{(DP)}. The calculation also holds
when $v=c$, since the integral from $c$ to $v$ is zero.
Thus \textbf{(P)} holds.
\qed

\subsection*{Proof of Theorem~\ref{thm:optimal-deterministic}}

Fix $\sigma>0$. Recall that $M^0=(x^0,p^0)$ is the Myerson-cap
mechanism, and $x_i^\sigma(\mathbf t)$ is the allocation assigned to
bidder $i$ at type profile $\mathbf t$ under $M^\sigma$.

We first verify that the allocation rule assigns the
item to at most one bidder. If $x_i^\sigma(\mathbf t)=1$, then
\[
c_i\geq\kappa_i(\mathbf t_{-i})+\sigma
>\kappa_i(\mathbf t_{-i}).
\]
Since $x_i^0$ is nondecreasing in $c_i$ and $\kappa_i$ is the infimum
of its winning caps, every cap strictly above $\kappa_i$ wins under
$M^0$. Thus $x_i^\sigma(\mathbf t)\leq x_i^0(\mathbf t)$ for every
bidder $i$. By construction, $M^0$ selects at most one winner, so
\[
\sum_i x_i^\sigma(\mathbf t)
\leq\sum_i x_i^0(\mathbf t)\leq1.
\]

For bidder $i$, conditional on the other reports, $M^\sigma$ has the
winning threshold
\[
\tau_i(\mathbf t_{-i})
:=\kappa_i(\mathbf t_{-i})+\sigma.
\]
If $\kappa_i(\mathbf t_{-i})=\infty$, every report by bidder $i$ gives
her zero allocation and zero payment, so no misreport can improve her
utility. We may therefore suppose the threshold is finite.
Its allocation depends only on the unit-payment cap and equals zero below
$\tau_i$ and one at and above $\tau_i$. Moreover,
\[
\int_0^{c_i}x_i^\sigma((z,z),\mathbf t_{-i})\,dz
=\max\{c_i-\tau_i,0\}.
\]
Substituting this integral identity and the threshold allocation into
the right-hand side of \textbf{(DP)} yields zero for losers and $\tau_i$
for winners, exactly the payments prescribed by $M^\sigma$.
Thus \textbf{(DP)} holds.
Conditions
\textbf{(DA1)}--\textbf{(DA3)} follow because the allocation depends only
on the cap and winning includes equality at the threshold.
Theorem~\ref{thm:characterisation-deterministic} therefore
implies that $M^\sigma$ is truthful.

It remains to prove revenue convergence. By independence and atomlessness
of the bidders' cap distributions,
\[
\Pr\!\left[
c_i=\kappa_i(\mathbf t_{-i})
\text{ for some }i
\right]=0.
\]
Consider a cap profile outside this null event. If $M^0$ has no winner,
then $M^\sigma$ has no winner for any $\sigma>0$. If bidder $i$ wins in
$M^0$, then
\[
c_i-\kappa_i(\mathbf t_{-i})>0.
\]
For every sufficiently small $\sigma$, bidder $i$ also wins in
$M^\sigma$ and pays $\kappa_i(\mathbf t_{-i})+\sigma$, which converges
to her payment in $M^0$. So the realized revenue of $M^\sigma$
converges almost surely to that of $M^0$. Since realized revenue is
bounded by $\bar v$, it follows from the dominated convergence theorem
that
\[
\lim_{\sigma\downarrow0}\operatorname{Rev}(M^\sigma)
=\operatorname{Rev}(M^0).
\]
\qed

\subsection*{Proof of Lemma~\ref{lem:fixed-roi-payment-bound}}

For a bidder with valuation $v$ and ROI constraint $r$, the
unit-payment cap is $c=v/r$, so increasing $v$ while holding $r$
constant increases both the valuation and the unit-payment cap. By
(A3), $x_r(\cdot)$ is therefore nondecreasing.

Define $U_r(v):=vx_r(v)-p_r(v)$. Fix $v>0$ and a partition
\[
0=v_0<v_1<\cdots<v_k=v.
\]
For each $j$, IR makes the outcome at $v_{j-1}$ affordable to type $v_j$,
since
\[
p_r(v_{j-1})
\le \frac{v_{j-1}}{r}x_r(v_{j-1})
\le \frac{v_j}{r}x_r(v_{j-1}).
\]
DSIC implies
\[
U_r(v_j)\ge v_j x_r(v_{j-1})-p_r(v_{j-1})=U_r(v_{j-1})+(v_j-v_{j-1})x_r(v_{j-1}).
\]
Rearranging gives
\[
U_r(v_j)-U_r(v_{j-1})
\ge (v_j-v_{j-1})x_r(v_{j-1}).
\]
Summing this inequality over $j=1,\ldots,k$, the intermediate utility terms cancel.
Since $U_r(0)=0$, we obtain
\[
U_r(v)\ge\sum_{j=1}^k(v_j-v_{j-1})x_r(v_{j-1}).
\]
Since $x_r$ is bounded and nondecreasing, letting the mesh of the
partition tend to zero gives
\[
U_r(v)\ge \int_0^v x_r(z)\,dz.
\]
Substituting $U_r(v)=vx_r(v)-p_r(v)$ and rearranging proves the lemma.
\qed

\subsection*{Proof of Lemma~\ref{lem:opt-upper-value}}

Consider an arbitrary truthful mechanism $M=(x,p)$, and fix an ROI
profile
\[
\mathbf r=(r_1,\ldots,r_n).
\]
For every valuation profile $\mathbf v=(v_1,\ldots,v_n)$, define
\[
x_i^{\mathbf r}(\mathbf v)
:=
x_i\bigl((v_1,r_1),\ldots,(v_n,r_n)\bigr).
\]
For fixed $\mathbf r$ and $\mathbf v_{-i}$, increasing $v_i$ also
increases the unit-payment cap $v_i/r_i$, so (A3) implies that
$x_i^{\mathbf r}(v_i,\mathbf v_{-i})$ is nondecreasing in $v_i$.

Define
\begin{equation}
\widetilde p_i^{\mathbf r}(\mathbf v)
:=
v_i x_i^{\mathbf r}(\mathbf v)
-
\int_0^{v_i}x_i^{\mathbf r}(z,\mathbf v_{-i})\,dz.
\label{eq:canonical-payment}
\end{equation}
Since $x_i^{\mathbf r}(\cdot,\mathbf v_{-i})$ is nondecreasing, the pair
$(x^{\mathbf r},\widetilde p^{\mathbf r})$ is a truthful mechanism in the ordinary
single-dimensional quasilinear environment with values $v_i$.

Applying Lemma~\ref{lem:fixed-roi-payment-bound}, while fixing
$\mathbf r$ and $\mathbf v_{-i}$, gives pointwise
\begin{equation}
p_i\bigl((v_1,r_1),\ldots,(v_n,r_n)\bigr)
\le
\widetilde p_i^{\mathbf r}(\mathbf v).
\label{eq:pointwise-payment-domination}
\end{equation}
Therefore,
\begin{equation}
\sum_{i=1}^n
p_i\bigl((v_1,r_1),\ldots,(v_n,r_n)\bigr)
\le
\sum_{i=1}^n\widetilde p_i^{\mathbf r}(\mathbf v).
\end{equation}

By independence, conditioning on the ROI profile being $\mathbf r$ does
not change the distribution of the valuation profile $\mathbf v$, which remains
$F_1\times\cdots\times F_n$. Hence
$(x^{\mathbf r},\widetilde p^{\mathbf r})$ is a feasible ordinary
truthful auction facing exactly the valuation distributions used to
define $\mathrm{OPT}_{r=1}$. By definition of $\mathrm{OPT}_{r=1}$,
\[
\mathbb{E}_{\mathbf v}\!\left[
\sum_{i=1}^n \widetilde p_i^{\mathbf r}(\mathbf v)
\right]
\le
\mathrm{OPT}_{r=1}.
\]
Together with \eqref{eq:pointwise-payment-domination}, this implies
\[
\mathbb{E}_{\mathbf v}\!\left[
\sum_{i=1}^n
p_i\bigl((v_1,r_1),\ldots,(v_n,r_n)\bigr)
\right]
\le
\mathrm{OPT}_{r=1}
\]
for every $\mathbf r$. Taking expectation over the ROI profile gives
\[
\operatorname{Rev}(M)\le \mathrm{OPT}_{r=1}.
\]
Since $M$ was arbitrary,
\[
\mathrm{OPT}\le \mathrm{OPT}_{r=1}.
\]
\qed

\subsection*{Proof of Lemma~\ref{lem:det-lower-value}}

Since $1\le r_i\le\bar r$, we have, pointwise,
\begin{equation}
c_i=\frac{v_i}{r_i}
\ge
\frac{v_i}{\bar r}.
\label{eq:c-dominates-scaled-v}
\end{equation}
So the distribution of the unit-payment cap $c_i$ first-order
stochastically dominates the distribution of $v_i/\bar r$ for every
bidder $i$.

Let $\mathrm{OPT}_c$ denote the optimal revenue in the ordinary
single-dimensional auction with values distributed as
$c_1,\ldots,c_n$. Under regularity, this is the revenue of the
Myerson-cap mechanism $M^0$. The revenue upper bound established above,
together with Theorem~\ref{thm:optimal-deterministic}, shows that the same
value equals $\mathrm{OPT}_{\mathrm{det}}$ in the ROI-constrained
environment. Hence
\begin{equation}
\mathrm{OPT}_{\mathrm{det}}=\mathrm{OPT}_c.
\label{eq:det-c-revenue}
\end{equation}
By revenue monotonicity under first-order stochastic dominance, a
folklore consequence of Myerson's optimal-auction characterization,
\begin{equation}
\mathrm{OPT}_c
\ge
\mathrm{OPT}_{v/\bar r},
\end{equation}
where $\mathrm{OPT}_{v/\bar r}$ denotes the optimal single-item
revenue when bidder $i$ has value $v_i/\bar r$.

Scaling all bidders' values by the common factor $1/\bar r$ scales
every feasible payment, and hence the optimal revenue, by the same
factor. Therefore,
\begin{equation}
\mathrm{OPT}_{v/\bar r}
=
\frac{1}{\bar r}\,\mathrm{OPT}_{r=1}.
\label{eq:scaled-revenue}
\end{equation}
Combining \eqref{eq:det-c-revenue}--\eqref{eq:scaled-revenue} gives
\[
\mathrm{OPT}_{\mathrm{det}}
\ge
\frac{1}{\bar r}\,\mathrm{OPT}_{r=1}.
\]
\qed

\section{Proofs for Section~\ref{sec:pricing-functions} (Pricing Functions)}
\label{app:pricing-functions}

\subsection*{Proof of Theorem~\ref{thm:pricing-reduction}}

The proof has four steps. We first recover a partial price menu from the
mechanism. We then take its lower convex envelope, prove that each type's
selected payment weakly increases, and extend the resulting pricing
function to all allocation probabilities in $[0,1]$.

\paragraph{Step 1: Recover partial menu.}
Let $S\subseteq T$, and let $(x,p)$ be a truthful
mechanism on $S$. Let
\[
\mathcal{X}:=\{x(t):t\in S\}\subseteq[0,1]
\]
be the set of allocation quantities generated by the truthful
mechanism. We first associate a partial pricing function $\pi_0(\cdot)$
with the mechanism.

For every $y\in\mathcal{X}$, choose any type $t$ satisfying $x(t)=y$ and
define
\[
\pi_0(y):=p(t).
\]
$\pi_0(\cdot)$ is well defined. Suppose
$x(t)=x(t')=y$. DSIC implies that $p(t)=p(t')$.

If $0\notin\mathcal{X}$, we extend the domain of $\pi_0$ to
$\mathcal{X}\cup\{0\}$, continue to denote it by $\mathcal{X}$, and set
$\pi_0(0)=0$. This extension does not affect incentive compatibility,
since IR ensures that no type prefers $y=0$ at price $0$ to its assigned
allocation and payment.

The proof of Lemma~\ref{lem:unit-payment-monotonicity} uses only DSIC
and IR comparisons between the two types involved, so the lemma
applies on $S$ as well. Hence $\pi_0$ is nondecreasing on $\mathcal{X}$,
and its unit price $\pi_0(y)/y$ is nondecreasing over positive
$y\in\mathcal{X}$.

For $0\le y_1<y_2$ in $\mathcal{X}$, we also have
\begin{equation}
0\le
\frac{\pi_0(y_2)-\pi_0(y_1)}
     {y_2-y_1}
\le \bar v.
\label{eq:partial-slope-bound}
\end{equation}
To see the upper bound, choose $t_2=(v_2,c_2)\in S$ such that
$x(t_2)=y_2$.
If $y_1=0$, IR gives
\[
\frac{\pi_0(y_2)}{y_2}\le c_2\le v_2\le\bar v.
\]
Suppose $y_1>0$, and choose $t_1\in S$ such that $x(t_1)=y_1$.
Lemma~\ref{lem:unit-payment-monotonicity} and IR imply that type $t_2$
can safely misreport as $t_1$. DSIC therefore gives
\[
v_2y_2-\pi_0(y_2)
\ge
v_2y_1-\pi_0(y_1).
\]
Hence
\[
\frac{\pi_0(y_2)-\pi_0(y_1)}{y_2-y_1}
\le v_2\le\bar v.
\]
The lower bound follows because $\pi_0(y_2)\ge\pi_0(y_1)$ and
$y_2-y_1>0$. Thus \eqref{eq:partial-slope-bound} holds, which means that
$\pi_0$ is $\bar v$-Lipschitz on $\mathcal{X}$.

The set $\mathcal{X}$ need not be closed, so let
\[
D:=\overline{\mathcal{X}}
\]
be its closure. The Lipschitz bound gives a unique
continuous extension of $\pi_0$ from $\mathcal{X}$ to $D$, which we
continue to denote by $\pi_0$. Since $D$ is a closed subset of $[0,1]$,
it is compact and has a maximum, allowing us to construct the lower
convex envelope on $[0,\max D]$ in the next step.
The unit-price monotonicity and the Lipschitz bound remain valid on $D$.
In particular,
\begin{equation}
0\le
\frac{\pi_0(y_2)-\pi_0(y_1)}
     {y_2-y_1}
\le\bar v
\qquad
\text{for all }y_1<y_2,\quad y_1,y_2\in D,
\label{eq:closed-slope-bound}
\end{equation}
and the unit price $\pi_0(y)/y$ is nondecreasing in $y$ for
positive $y\in D$.

\paragraph{Step 2: Construct convex envelope.}
Let
\[
\bar y:=\max D
\]
and let
\[
\widehat{\pi}:[0,\bar y]\to\mathbb{R}
\]
be the lower convex envelope of the partial function $\pi_0$ on $D$;
that is, $\widehat{\pi}$ is the greatest convex function satisfying
\begin{equation}
\widehat{\pi}(y)\le\pi_0(y)
\qquad
\text{for every }y\in D.
\label{eq:convex-envelope}
\end{equation}
Notice that $\widehat{\pi}$ is defined on the entire interval
$[0,\bar y]$, including quantities that are not in $D$ and hence were not
generated by the original mechanism.

Since $\pi_0$ is nonnegative and $\pi_0(0)=0$, the zero function is a
convex minorant of $\pi_0$. Hence $\widehat{\pi}$ is nonnegative and
$\widehat{\pi}(0)=0$. Consequently, $\widehat{\pi}$ is nondecreasing.
Convexity also implies
\begin{equation}
\frac{\widehat{\pi}(x)}{x}
\le
\frac{\widehat{\pi}(y)}{y}
\qquad
(0<x<y\le\bar y).
\label{eq:envelope-average-price}
\end{equation}
Therefore, for a bidder with unit-payment cap $c$, if quantity
$y\in[0,\bar y]$ is affordable, meaning $\widehat{\pi}(y)\le cy$, then
every $x\in[0,y]$ is also affordable because $\widehat{\pi}(x)\le cx$.

\paragraph{Step 3: Compare payments.}
We now show that offering $\widehat{\pi}$ weakly increases the payment of
every type.

Fix a type $t=(v,c)\in S$, and let
\[
y:=x(t),
\qquad
p:=p(t)=\pi_0(y)
\]
denote its allocation and payment under the original mechanism. If $y=0$,
the claim follows immediately from the nonnegativity of $\widehat{\pi}$.
Suppose henceforth that $y>0$. Since
$y\in\mathcal{X}\subseteq D$, there are only two cases.

\noindent
\textbf{Case 1: $\widehat{\pi}(y)=\pi_0(y)$.}

Here the new price at $y$ equals the original payment $p$.
Since the original mechanism is IR,
\[
p\le cy,
\]
so $y$ remains affordable under $\widehat{\pi}$.

We claim that no quantity below $y$ gives the bidder strictly larger
utility under $\widehat{\pi}$.

First consider $a\in D$ with $a<y$ where the new and original prices
agree (called a contact point):
\[
\widehat{\pi}(a)=\pi_0(a).
\]
By the monotonicity of the unit price,
\[
\frac{\pi_0(a)}{a}
\le
\frac{\pi_0(y)}{y}
=
\frac{p}{y}
\le c,
\]
where the case $a=0$ is immediate. Hence the price at $a$ is affordable to type $t$.

If $a\in\mathcal{X}$, applying DSIC directly to the outcome at $a$ gives
\[
va-\widehat{\pi}(a)=va-\pi_0(a)\le vy-p.
\]

If $a\in D\setminus\mathcal{X}$, the quantity $a$ is not offered by the
original mechanism, but it can be approached arbitrarily closely by
offered quantities because $D$ is the closure of $\mathcal{X}$.
Choose such quantities $a_k\in\mathcal{X}$ with $a_k\to a$.
Since $a<y$, we have $a_k<y$ for all sufficiently large $k$.
The unit price $\pi_0(z)/z$ is nondecreasing in $z$, so for each such
$k$ with $a_k>0$,
\[
\frac{\pi_0(a_k)}{a_k}
\le\frac{\pi_0(y)}{y}=\frac{p}{y}\le c.
\]
Thus $\pi_0(a_k)\le ca_k$, meaning that this offered outcome satisfies
type $t$'s cap constraint. A zero quantity is also affordable because
its price is zero. DSIC therefore gives, for all sufficiently large $k$,
\[
vy-p\ge va_k-\pi_0(a_k).
\]
As $k$ increases, $a_k\to a$ and continuity gives
$\pi_0(a_k)\to\pi_0(a)$. Hence the right-hand side converges to
$va-\pi_0(a)$, while the left-hand side stays fixed.
In either case, using $\pi_0(a)=\widehat{\pi}(a)$, we obtain
\[
va-\widehat{\pi}(a)\le vy-p.
\]
Thus choosing any contact point $a<y$ gives no higher utility than
choosing $y$.

It remains to consider quantities below $y$ where the prices do not
agree, or where no original price is defined. Each such quantity $z$
lies between two contact points $a<b\le y$, and the lower convex
envelope is a straight line between them. Thus, writing
$\lambda=(z-a)/(b-a)$, we have
\[
vz-\widehat{\pi}(z)=(1-\lambda)\bigl[va-\widehat{\pi}(a)\bigr]+\lambda\bigl[vb-\widehat{\pi}(b)\bigr]\le vy-p.
\]
The inequality follows because utility at each contact point is
at most $vy-p$, as shown above (with equality if $b=y$).
Consequently, no quantity below $y$ gives strictly larger utility.
Since $y$ is
affordable, the largest affordable utility maximizer $\widehat y$
satisfies $\widehat y\ge y$. Monotonicity of $\widehat{\pi}$ gives
\[
\widehat{\pi}(\widehat y)\ge\widehat{\pi}(y)=p.
\]
Thus the bidder's payment weakly increases.

\noindent
\textbf{Case 2: $\widehat{\pi}(y)<\pi_0(y)$.}

By the standard structure of a one-dimensional lower convex envelope, there
exist contact points
\[
a<y<b,
\qquad a,b\in D,
\]
such that
\[
\widehat{\pi}(a)=\pi_0(a)=:p_a,
\qquad
\widehat{\pi}(b)=\pi_0(b)=:p_b,
\]
and $\widehat{\pi}$ is affine on $[a,b]$. Write
\[
\widehat{\pi}(z)=p_a+s(z-a),
\qquad
z\in[a,b],
\]
where
\[
s:=\frac{p_b-p_a}{b-a}.
\]

As in Case 1, average-price monotonicity makes $a$ affordable, and DSIC
(with the same limiting argument if $a\notin\mathcal X$) gives
\[
v(y-a)\ge p-p_a>s(y-a),
\]
where the strict inequality follows from $p>\widehat\pi(y)$.
Thus $v>s$. By convexity, all slopes to the left of $b$ are at most $s$,
so utility is strictly increasing on $[0,b]$, including at newly
available quantities.

If $p_b\le cb$, then $b$ is affordable and the selected quantity is at
least $b$. Monotonicity of $\widehat\pi$ and $\pi_0$ gives
\[
p_{\widehat\pi}(v,c)\ge\widehat\pi(b)=p_b=\pi_0(b)\ge\pi_0(y)=p.
\]
Otherwise, $b$ is unaffordable while $\widehat\pi(y)<p\le cy$.
Continuity and average-price monotonicity therefore give a largest
affordable quantity $y^c\in(y,b)$ with $\widehat\pi(y^c)=cy^c$.
Since utility is strictly increasing up to $b$, the bidder selects
$y^c$ and pays
\[
p_{\widehat\pi}(v,c)=cy^c\ge cy\ge p.
\]
Thus payment again weakly increases.

We have therefore shown in both cases that
\begin{equation}
p_{\widehat{\pi}}(v,c)\ge p(v,c)
\qquad
\text{for every }(v,c)\in S.
\label{eq:pointwise-dominance}
\end{equation}

\paragraph{Step 4: Extend to full interval.}
It remains only to extend $\widehat{\pi}$ from $[0,\bar y]$ to $[0,1]$ if
$\bar y<1$. By \eqref{eq:closed-slope-bound}, every slope of the lower convex
envelope lies in $[0,\bar v]$. Define
\[
\pi(z)
=
\begin{cases}
\widehat{\pi}(z),
& 0\le z\le\bar y,\\[1mm]
\widehat{\pi}(\bar y)+\bar v(z-\bar y),
& \bar y<z\le1.
\end{cases}
\]
The resulting function is continuous, convex, and nondecreasing.

Since $\pi$ agrees with $\widehat{\pi}$ on $[0,\bar y]$, all quantities
previously available under $\widehat{\pi}$ remain available at the
same prices. If a bidder instead chooses a newly added quantity
$z>\bar y$, then by monotonicity
\[
\pi(z)\ge\pi(\bar y)
\]
and hence its payment is weakly larger than the payment of any quantity in
$[0,\bar y]$. Thus the extension cannot reduce the payment of any type.

Finally,
\[
\widehat{\pi}(\bar y)
\le
\pi_0(\bar y)
\le
\bar v\,\bar y,
\]
and therefore
\[
\pi(1)
=
\widehat{\pi}(\bar y)+\bar v(1-\bar y)
\le
\bar v.
\]
Thus $\pi$ satisfies all the claimed properties and, by
\eqref{eq:pointwise-dominance},
\[
p_{\pi}(v,c)\ge p(v,c)
\qquad
\text{for every }(v,c)\in S.
\]
\qed

\subsection*{Proof of Lemma~\ref{lem:public-valuation-normalization}}

Convexity and \(\pi(0)=0\) imply that the average price
\(\pi(x)/x\) is nondecreasing on \((0,1]\). Indeed, for
\(0<a<b\le1\),
\[
\pi(a)
\le
\frac{a}{b}\pi(b).
\]

Let \(y\) be the largest maximizer of \(vx-\pi(x)\) on \([0,1]\).
No type chooses a quantity larger than \(y\). To see this, suppose
\(x>y\) is affordable to a type with cap \(c\). If \(y>0\), average-price
monotonicity implies
\[
\frac{\pi(y)}{y}
\le
\frac{\pi(x)}{x}
\le c,
\]
so \(y\) is also affordable. Moreover,
since \(y\) is the largest maximizer, no \(x>y\) can also attain the
maximum; hence \(vy-\pi(y)>vx-\pi(x)\). If
\(y=0\), every positive quantity has strictly lower utility than zero.
Thus the claim holds in either case.

When \(y=0\), every type pays zero. Taking \(\widetilde{\pi}\equiv0\)
preserves all payments, has the stated regularity properties, and makes
full quantity a utility maximizer. Suppose henceforth that \(y>0\), and define
\[
\widetilde{\pi}(x)
:=
\frac{\pi(yx)}{y},
\qquad x\in[0,1].
\]
This rescaling preserves continuity, convexity, monotonicity,
nonnegativity, and the condition \(\widetilde{\pi}(0)=0\). Also,
average-price monotonicity gives
\[
\widetilde{\pi}(1)
=
\frac{\pi(y)}{y}
\le
\pi(1)
\le\bar v.
\]

Fix a cap \(c\). Affordability and utility under the two pricing functions satisfy
\[
\widetilde{\pi}(x)\le cx
\quad\Longleftrightarrow\quad
\pi(yx)\le c(yx)
\]
and
\[
vx-\widetilde{\pi}(x)
=
\frac{v(yx)-\pi(yx)}{y}.
\]
Hence quantities under \(\widetilde{\pi}\) correspond exactly, through
\(x\mapsto yx\), to quantities in \([0,y]\) under \(\pi\). Since no type
chooses above \(y\), the largest affordable utility maximizers satisfy
\[
x_{\widetilde{\pi}}(c)
=
\frac{x_\pi(c)}{y}.
\]
The corresponding payment is therefore
\[
p_{\widetilde{\pi}}(v,c)
=
\frac{p_\pi(v,c)}{y}
\ge
p_\pi(v,c),
\]
because \(0<y\le1\).

Finally, since \(y\) maximizes \(vx-\pi(x)\), for every \(x\in[0,1]\),
\[
vy-\pi(y)
\ge
v(yx)-\pi(yx).
\]
Dividing by \(y\) shows that
\[
v-\widetilde{\pi}(1)
\ge
vx-\widetilde{\pi}(x),
\]
so full quantity maximizes \(vx-\widetilde{\pi}(x)\).
\qed
\subsection*{Proof of Theorem~\ref{thm:public-valuation-optimal-pricing}}

By Lemma~\ref{lem:public-valuation-normalization}, it suffices to consider
pricing functions for which full quantity maximizes \(vx-\pi(x)\).

Let \(\pi\) be a pricing function for which full quantity
maximizes \(vx-\pi(x)\), and define
\[
z=\pi(1).
\]
Comparison with zero quantity gives $v-z\ge0$, so $z\le v$.

If \(z<\underline c\), every type can afford full quantity and, by the largest-maximizer tie-breaking rule, pays \(z\). Revenue is therefore increasing on this range, so it suffices to consider \(z\in[\underline c,v]\).
The utility from full quantity is \(v-z\).
Since full quantity globally maximizes utility,
\[
vx-\pi(x)\le v-z
\]
for every \(x\in[0,1]\). Equivalently,
\begin{equation}
\pi(x)\ge vx-(v-z).
\label{eq:pv-utility-lower-bound}
\end{equation}

Fix a unit-payment cap \(c<z\), and let \(x(c)\) and \(p(c)\) denote the
chosen quantity and price. Affordability implies
\begin{equation}
p(c)\le cx(c).
\label{eq:pv-affordability}
\end{equation}
Applying~\eqref{eq:pv-utility-lower-bound} at \(x(c)\) gives
\begin{equation}
p(c)=\pi(x(c))
\ge
vx(c)-(v-z).
\label{eq:pv-selected-utility-bound}
\end{equation}
Combining~\eqref{eq:pv-affordability}
and~\eqref{eq:pv-selected-utility-bound}, we obtain
\[
vx(c)-(v-z)\le cx(c),
\]
and therefore
\[
x(c)\le\frac{v-z}{v-c}.
\]
Using affordability once more,
\begin{equation}
p(c)
\le
cx(c)
\le
\frac{c(v-z)}{v-c}.
\label{eq:pv-type-payment-bound}
\end{equation}

Now consider a unit-payment cap \(c\ge z\). Full quantity is
affordable, since
\[
\pi(1)=z\le c.
\]
It is also a global utility-maximizing quantity. By the
tie-breaking rule, the bidder therefore chooses full
quantity and pays \(z\). Hence,
\begin{equation}
\operatorname{Rev}(\pi)
\le
\int_{\underline c}^z
\frac{c(v-z)}{v-c}h(c)\,dc
+
z\bigl(1-H(z)\bigr).
\label{eq:pv-revenue-upper-bound}
\end{equation}

The upper bound in \eqref{eq:pv-revenue-upper-bound} is attained by
\[
\pi_z(x)=\bigl[vx-(v-z)\bigr]_+.
\]
Indeed, utility increases up to $x_0=(v-z)/v$ and is constant thereafter.
For $c<z$, the largest affordable maximizer is $(v-z)/(v-c)$; for
$c\ge z$, it is $1$. The resulting payments are therefore
\[
p_z(c)=
\begin{cases}
\dfrac{c(v-z)}{v-c}, & c<z,\\[4pt]
z, & c\ge z.
\end{cases}
\]
Thus $\operatorname{Rev}(\pi)\le\operatorname{Rev}(\pi_z)$, and it remains
to optimize over $z$. For $z\in[\underline c,v)$, write the revenue as
\[
R(z):=\operatorname{Rev}(\pi_z)=\int_{\underline c}^z\frac{c(v-z)}{v-c}h(c)\,dc+z\bigl(1-H(z)\bigr)=z-v\int_{\underline c}^z\frac{z-c}{v-c}h(c)\,dc.
\]
The last integral equals
$\int_{\underline c}^z\int_{\underline c}^t h(c)/(v-c)\,dc\,dt$ by
Tonelli's theorem. Its inner integral is continuous for $t<v$, so
\[
R'(z)=D(z):=1-v\int_{\underline c}^z\frac{h(c)}{v-c}\,dc.
\]
Because $h>0$, the function $D$ is continuous and strictly decreasing,
with $D(\underline c)=1$. Moreover, $c\ge\underline c>0$ implies
\[
D(z)\le 1-\frac{v}{v-\underline c}H(z).
\]
Since $H(z)\to1$ as $z\uparrow v$, this bound is negative near $v$.
Hence $D$ has a unique zero $c^*\in(\underline c,v)$, which uniquely
maximizes $R$ on $[\underline c,v)$. The endpoint $z=v$ earns zero
revenue because all types $c<v$ choose zero quantity and $H$ has no
atom at $v$, so it cannot be optimal.
Using $v/(v-c)=1+c/(v-c)$, the equation $D(c^*)=0$ becomes
\begin{equation}
\int_{\underline c}^{c^*}\frac{c}{v-c}h(c)\,dc
=1-H(c^*).
\label{eq:pv-optimal-cutoff}
\end{equation}
Therefore, the zero-then-linear pricing function
\[
\pi^*(x)
=
\bigl[vx-(v-c^*)\bigr]_+
\]
is revenue-maximizing.
\qed

\subsection*{Proof of Theorem~\ref{thm:public-roi-optimal-pricing}}

By Theorem~\ref{thm:pricing-reduction}, it suffices to consider
continuous, convex, nondecreasing pricing functions with $\pi(0)=0$.

Let \(S(v)=1-F(v)\) for \(v\in[0,\bar v]\).
When \(r=1\), the result follows immediately. In this case, the power-law pricing function reduces to the linear function
\[
\pi^*(x)=v^*x,
\]
which implements the standard monopoly posted-price mechanism. The optimal \(v^*\) satisfies
\[
S(v^*)-v^*f(v^*)=0,
\]
which is the standard Myerson reserve price.
In the remainder of the proof, we assume \(r>1\).

We first express the revenue of a power-law pricing function in terms of its scale
and derive bounds using DMR. We then normalize an arbitrary convex
pricing function and use those bounds to compare its revenue with a power-law
pricing function. Finally, we optimize the scale.

\paragraph{Step 1: Analyze power-law benchmark.}
For each \(z\in[0,\bar v]\), consider the power-law pricing function
\[
\pi_{z}(x):=\frac{z}{r}x^{r}.
\]
Its marginal-price and affordability cutoffs coincide for every $x>0$:
\[
\pi_{z}'(x)
=
zx^{r-1}
=
r\frac{\pi_{z}(x)}{x}.
\]
Let
\[
B(z):=\operatorname{Rev}(\pi_{z})=\int_{0}^{1}\pi_{z}'(x)S\!\left(\pi_{z}'(x)\right)\,dx=\int_{0}^{1}zx^{r-1}S\!\left(zx^{r-1}\right)\,dx.
\]
The continuity of \(f\) justifies differentiation under the integral sign,
giving
\begin{equation}
B'(z)
=
\int_{0}^{1}
x^{r-1}\left[
S\!\left(zx^{r-1}\right)
-zx^{r-1}f\!\left(zx^{r-1}\right)
\right]\,dx.
\label{eq:B-derivative}
\end{equation}

We first derive an equation relating \(B\) and \(B'\). Differentiating
\(zx^{r}S(zx^{r-1})\) with respect to \(x\) yields
\[
\frac{d}{dx}\left[zx^{r}S\!\left(zx^{r-1}\right)\right]=zx^{r-1}S\!\left(zx^{r-1}\right)+(r-1)z x^{r-1}\left[S\!\left(zx^{r-1}\right)-zx^{r-1}f\!\left(zx^{r-1}\right)\right].
\]
Integrating over \(x\in[0,1]\) therefore gives
\begin{equation}
B(z)+(r-1)zB'(z)=zS(z).
\label{eq:B-identity}
\end{equation}

We next establish the bounds
\begin{equation}
S(z)-zf(z)\leq rB'(z)\leq S(z)
\qquad
\text{for every }z\in[0,\bar v].
\label{eq:B-prime-bounds}
\end{equation}
Because the marginal-revenue expression \(S(v)-vf(v)\) is strictly
decreasing and \(zx^{r-1}\leq z\), Equation~\eqref{eq:B-derivative} implies
\[
B'(z)
\geq
\bigl[S(z)-zf(z)\bigr]\int_{0}^{1}x^{r-1}\,dx
=
\frac{S(z)-zf(z)}{r}.
\]
This proves the first inequality in \eqref{eq:B-prime-bounds}.

For the second inequality, since \(S\) is also strictly decreasing, we have
\[
B(z)
=
\int_{0}^{1}
zx^{r-1}S\!\left(zx^{r-1}\right)\,dx
\geq
S(z)\int_{0}^{1}zx^{r-1}\,dx
=
\frac{zS(z)}{r}.
\]
Combining this inequality with \eqref{eq:B-identity}, for \(z>0\),
\[
(r-1)zB'(z)
=
zS(z)-B(z)
\leq
\frac{r-1}{r}zS(z),
\]
and hence \(rB'(z)\leq S(z)\).
At \(z=0\), the claim follows directly from
\[
B'(0)
=
\int_0^1x^{r-1}S(0)\,dx
=\frac{1}{r}.
\]
Since \(F(0)=0\), we have
\[
rB'(0)=S(0)=1.
\]

We now prove that, for every \(u,w\in[0,\bar v]\),
\begin{equation}
uS(\max\{u,w\})
\leq
wS(w)+rB'(w)(u-w).
\label{eq:pointwise-upper-bound}
\end{equation}
Suppose first that \(u\leq w\). By \eqref{eq:B-prime-bounds},
\(rB'(w)\leq S(w)\). Since \(u-w\leq0\),
\[
rB'(w)(u-w)\geq S(w)(u-w),
\]
and therefore
\[
wS(w)+rB'(w)(u-w)
\geq
wS(w)+S(w)(u-w)
=
uS(w).
\]
This proves \eqref{eq:pointwise-upper-bound} when \(u\leq w\).

Suppose instead that \(u\geq w\). Since
\[
\frac{d}{dv}\bigl(vS(v)\bigr)=S(v)-vf(v)
\]
and this expression is strictly decreasing,

\begin{equation*}
uS(u)-wS(w)
=\int_{w}^{u}\bigl[S(t)-tf(t)\bigr]\,dt
\leq \bigl[S(w)-wf(w)\bigr](u-w).
\end{equation*}

Using \(S(w)-wf(w)\leq rB'(w)\) from
\eqref{eq:B-prime-bounds} gives
\[
uS(u)
\leq
wS(w)+rB'(w)(u-w),
\]
which proves \eqref{eq:pointwise-upper-bound} in the second case.

\paragraph{Step 2: Normalize an arbitrary pricing function.}
We next record a normalization needed to apply
\eqref{eq:pointwise-upper-bound}. Let \(y\) be the quantity selected by the
highest-valuation type \(v=\bar v\), using the largest-maximizer
tie-breaking rule. No lower-valuation type selects a quantity above \(y\).
Indeed, if \(x>y\) were selected by some \(v\le\bar v\), average-price
monotonicity would make \(y\) affordable whenever \(x\) is affordable, and
\[
\bigl[vy-\pi(y)\bigr]-\bigl[vx-\pi(x)\bigr]=\bigl[\bar v y-\pi(y)\bigr]-\bigl[\bar v x-\pi(x)\bigr]+(\bar v-v)(x-y)\ge0,
\]
contradicting the choice of \(x\) or, in a tie, the definition of \(y\).

If \(y=0\), the pricing function earns zero revenue. If \(y>0\), define
\[
\widetilde\pi(s):=\frac{\pi(ys)}{y},
\qquad s\in[0,1].
\]
This function is continuous, convex, and nondecreasing. Affordability and
utility comparisons correspond exactly under \(s\mapsto ys\), while every
payment is divided by \(y\le1\). Thus \(\widetilde\pi\) earns weakly more
revenue. Moreover, \(y\) is affordable for type \(\bar v\), and its
left marginal price is at most \(\bar v\), since
\[
\frac{\pi(y)-\pi(y-\delta)}{\delta}\le\bar v
\]
for every sufficiently small \(\delta>0\). Consequently,
\[
r\widetilde\pi(1)\le\bar v
\qquad\text{and}\qquad
\widetilde\pi'(s)\le\bar v
\quad\text{for almost every }s\in(0,1).
\]
It therefore suffices to consider pricing functions satisfying these two
bounds. Suppress the tilde and write
\[
u(x)=\pi'(x),
\qquad
w(x)=r\frac{\pi(x)}{x}.
\]
Then \(u(x),w(x)\in[0,\bar v]\) for almost every \(x\in(0,1)\).

\paragraph{Step 3: Compare revenues.}
Let \(X_\pi\) be the selected quantity. Since \(\pi\) is absolutely
continuous, Tonelli's theorem and the choice cutoff give
\[
\operatorname{Rev}(\pi)=\int_0^1 u(x)\Pr[X_\pi\ge x]\,dx=\int_0^1u(x)S\!\left(\max\{u(x),w(x)\}\right)\,dx.
\]
The second identity holds for almost every \(x\); differentiability of
\(\pi\) fails only on a null set, and the continuous value distribution
makes the cutoff tie event have probability zero.

For every \(\varepsilon>0\), both \(\pi\) and \(w\) are absolutely
continuous on \([\varepsilon,1]\), and for almost every \(x\) in this
interval,
\begin{equation}
w'(x)
=
r\frac{x\pi'(x)-\pi(x)}{x^{2}}
=
\frac{ru(x)-w(x)}{x}.
\label{eq:cutoff-derivative}
\end{equation}
Using \eqref{eq:B-identity} and \eqref{eq:cutoff-derivative},
\[
\begin{aligned}
wS(w)+rB'(w)(u-w)
&=B(w)+(r-1)wB'(w)+rB'(w)(u-w)=B(w)+(ru-w)B'(w)\\
&=B(w)+xB'(w)w'=\frac{d}{dx}\left[xB(w(x))\right].
\end{aligned}
\]
Consequently, \eqref{eq:pointwise-upper-bound} implies
\[
u(x)S\!\left(\max\{u(x),w(x)\}\right)
\leq
\frac{d}{dx}\left[xB(w(x))\right]
\]
for almost every \(x>0\).

Integrating from \(\varepsilon\) to \(1\), where \(\varepsilon>0\), gives
\[
\int_{\varepsilon}^{1}
u(x)S\!\left(\max\{u(x),w(x)\}\right)\,dx
\leq
B(w(1))-\varepsilon B(w(\varepsilon)).
\]
Because \(B\) is bounded on \([0,\bar v]\),
\[
\lim_{\varepsilon\downarrow0}
\varepsilon B(w(\varepsilon))
=0.
\]
Letting \(\varepsilon\downarrow0\) therefore yields
\begin{equation}
\operatorname{Rev}(\pi)
\leq
B(w(1)).
\label{eq:global-upper-bound}
\end{equation}

Set
\[
z:=w(1)=r\pi(1).
\]
The power-law pricing function
\[
\widehat{\pi}(x):=\frac{z}{r}x^{r}
\]
has the same terminal price as \(\pi\), since
\[
\widehat{\pi}(1)=\frac{z}{r}=\pi(1),
\]
and it is feasible because \(z\leq\bar v\). By definition,
\[
\operatorname{Rev}(\widehat{\pi})=B(z)=B(w(1)).
\]
Hence \eqref{eq:global-upper-bound} implies
\begin{equation}
\operatorname{Rev}(\pi)
\leq
\operatorname{Rev}(\widehat{\pi}).
\label{eq:power-dominates}
\end{equation}
Thus the power-law pricing function earns weakly more revenue than any
normalized pricing function with the same terminal price. Together with
the normalization above, this shows that some power-law pricing
function earns weakly more revenue than any given feasible pricing
function.

\paragraph{Step 4: Optimize scale.}
It remains to optimize over \(z\). By \eqref{eq:B-derivative},
\[
B'(z)
=
\int_{0}^{1}
x^{r-1}\left[
S\!\left(zx^{r-1}\right)
-zx^{r-1}f\!\left(zx^{r-1}\right)
\right]\,dx.
\]
Because the marginal-revenue expression is strictly decreasing, \(B'\)
is strictly decreasing.
Moreover,
\[
B'(0)=\frac{1}{r}>0.
\]
At \(z=\bar v\), Equation~\eqref{eq:B-identity} and \(S(\bar v)=0\)
give
\[
B(\bar v)+(r-1)\bar v B'(\bar v)=0.
\]
Since \(B(\bar v)>0\), it follows that \(B'(\bar v)<0\). Thus there is
a unique \(v^{*}\in(0,\bar v)\) satisfying
\[
B'(v^{*})
=
\int_{0}^{1}
x^{r-1}\left[
S\!\left(v^{*}x^{r-1}\right)
-v^{*}x^{r-1}f\!\left(v^{*}x^{r-1}\right)
\right]\,dx
=0.
\]
A revenue-maximizing pricing function is therefore
\[
\pi^{*}(x)=\frac{v^{*}}{r}x^{r}.
\]
\qed

\clearpage
\bibliographystyle{plain}
\bibliography{Auction}

\end{document}